\newtheorem{fact}{Fact}
\DeclareMathAlphabet{\mathrm}    {OT1}{cmr}{m}{n}
\DeclareMathAlphabet{\mathrmbf}  {OT1}{cmr}{bx}{n}
\DeclareMathAlphabet{\mathrmit}  {OT1}{cmr}{m}{it}
\DeclareMathAlphabet{\mathrmbfit}{OT1}{cmr}{bx}{it}
\DeclareMathAlphabet{\mathsf}    {OT1}{cmss}{m}{n}
\DeclareMathAlphabet{\mathsfbf}  {OT1}{cmss}{bx}{n}
\DeclareMathAlphabet{\mathsfit}  {OT1}{cmss}{m}{sl}
\DeclareMathAlphabet{\mathttbf}  {OT1}{cmtt}{bx}{n}
\DeclareMathAlphabet{\mathttit}  {OT1}{cmtt}{m}{it}
\newcommand{\keywords}[1]{\par\addvspace\baselineskip\noindent\enspace\ignorespaces{\bfseries Keywords:\,}#1}
\newcommand{\comment}[1]{}
\begin{document}

\pagestyle{headings}
\title{The First-order Logical Environment}
\titlerunning{FOL}  
\author{Robert E. Kent}
\institute{Ontologos}
\maketitle

\begin{abstract}
This paper describes the first-order logical environment {\ttfamily FOLE}.
Institutions in general (Goguen and Burstall~\cite{goguen:burstall:92}),
and logical environments in particular,
give equivalent heterogeneous and homogeneous representations for logical systems.
As such,
they offer a rigorous and principled approach to distributed interoperable information systems
via system consequence (Kent~\cite{kent:iccs2009}).
Since {\ttfamily FOLE} is a particular logical environment, 
this provides a rigorous and principled approach to distributed interoperable first-order information systems.
The {\ttfamily FOLE} represents the formalism and semantics of first-order logic in a classification form. 
By using an interpretation form,
a companion approach (Kent~\cite{kent:db:sem})
defines the formalism and semantics of first-order logical/relational database systems.
In a strict sense,
the two forms have transformational passages (generalized inverses) between one another.
The classification form of first-order logic in the {\ttfamily FOLE} 
corresponds to ideas discussed in the Information Flow Framework (IFF~\cite{iff}).
The {\ttfamily FOLE} representation follows a conceptual structures approach,
that is completely compatible 
with formal concept analysis (Ganter and Wille~\cite{ganter:wille:99}) 
and information flow (Barwise and Seligman~\cite{barwise:seligman:97}).

\keywords{schema, specification, structure, logical environment.}
%\keywords{schema, formula, sequent, constraint, specification, structure, satisfaction, logical environment, information system.}
\end{abstract}
%

%%%%%%%%%%%%%%%%%%%%%%%%%%%%%%%%%%%%%%%%%%%%%%%%%%%%%%%%%%%%%%%%%%%%%%%%%%%%%%%%%%%%%%%%%%%%%%%%%%%%
%%%%%%%%%%%%%%%%%%%%%%%%%%%%%%%%%%%%%%%%%%%%%%%%%%%%%%%%%%%%%%%%%%%%%%%%%%%%%%%%%%%%%%%%%%%%%%%%%%%%
%%%%%%%%%%%%%%%%%%%%%%%%%%%%%%%%%%%%%%%%%%%%%%%%%%%%%%%%%%%%%%%%%%%%%%%%%%%%%%%%%%%%%%%%%%%%%%%%%%%%
%%%%%%%%%%%%%%%%%%%%%%%%%%%%%%%%%%%%%%%%%%%%%%%%%%%%%%%%%%%%%%%%%%%%%%%%%%%%%%%%%%%%%%%%%%%%%%%%%%%%
%	\input{table-analogues}
%\tableofcontents

%\input{tbl-fbrs}
%\input{fml@alg}
%\input{alg-ths}
%\input{alg-sem}
%\input{sections}
%\input{sem@alg}

%\input{sem-cls}

%---------------------------------------------------------------------------------------------------
% these are the files for the final paper
%\comment{

%%%%%%%%%%%%%%%%%%%%%%%%%%%%%%%%%%%%%%%%%%%%%%%%%%%%%%%%%%%%%%%%%%%%%%%%%%%%%%%%%%%%%%%%%%%%%%%%%%%%
%%%%%%%%%%%%%%%%%%%%%%%%%%%%%%%%%%%%%%%%%%%%%%%%%%%%%%%%%%%%%%%%%%%%%%%%%%%%%%%%%%%%%%%%%%%%%%%%%%%%
%%%%%%%%%%%%%%%%%%%%%%%%%%%%%%%%%%%%%%%%%%%%%%%%%%%%%%%%%%%%%%%%%%%%%%%%%%%%%%%%%%%%%%%%%%%%%%%%%%%%
\section{Introduction}\label{sec:intro}
%%%%%%%%%%%%%%%%%%%%%%%%%%%%%%%%%%%%%%%%%%%%%%%%%%%%%%%%%%%%%%%%%%%%%%%%%%%%%%%%%%%%%%%%%%%%%%%%%%%%
%%%%%%%%%%%%%%%%%%%%%%%%%%%%%%%%%%%%%%%%%%%%%%%%%%%%%%%%%%%%%%%%%%%%%%%%%%%%%%%%%%%%%%%%%%%%%%%%%%%%
%%%%%%%%%%%%%%%%%%%%%%%%%%%%%%%%%%%%%%%%%%%%%%%%%%%%%%%%%%%%%%%%%%%%%%%%%%%%%%%%%%%%%%%%%%%%%%%%%%%%

The paper ``System Consequence'' (Kent~\cite{kent:iccs2009})
gave a general and abstract solution to the interoperation of information systems
via the channel theory of information flow (Barwise and Seligman~\cite{barwise:seligman:97}). 
These can be expressed either formally, semantically or in a combined form.
This general solution closely follows the theories of 
institutions (Goguen and Burstall~\cite{goguen:burstall:92}), 
\footnote{The technical aspect of this paper is described 
in the spirit of  
Goguen's categorical manifesto~\cite{goguen:91}
by using the terminology of
mathematical context, passage and bridge
in place of 
category, functor and natural transformation.}
information flow and 
formal concept analysis (Ganter and Wille~\cite{ganter:wille:99}). 
By following the approach of the ``System Consequence'' paper,
this paper offers a solution to the interoperation of 
%first-order distributed systems;
%that is,
distributed systems expressed in terms of the formalism and semantics of first-order logic. 
It does this be defining {\ttfamily FOLE},
the first-order logical environment.
% in classification form.
%
\footnote{A logical environment is a special and more structurally pleasing case of an institution, 
where the semantics is completely compatible with satisfaction.}
%
%First order logic is canonical. 
Since this paper develops a classification form of first order logic as a logical environment,  
the interaction of information systems expressed in first order logic have a firm foundation.
Section~\ref{sec:arch} surveys the architecture of the first-order logical environment {\ttfamily FOLE}.
Section~\ref{sec:cmps} discusses the linguistic/formal and semantic components of {\ttfamily FOLE};
detailed discussions of the functional base and relational superstructure are given in
Appendix~\ref{sec:func-base} and Appendix~\ref{sec:rel-sup}, respectively. 
Section~\ref{sec:log-env} explains how {\ttfamily FOLE} is a logical environment;
a proof of this fact is given in Appendix~\ref{append:log-env}.
%Appendix~\ref{sec:egs} discusses some examples.
Section~\ref{sec:inf-sys} discusses {\ttfamily FOLE} information systems.
Finally,
section~\ref{sec:sum-fut-wrk} summarizes and states future plans for work on these topics. 

%%%%%%%%%%%%%%%%%%%%%%%%%%%%%%%%%%%%%%%%%%%%%%%%%%%%%%%%%%%%%%%%%%%%%%%%%%%%%%%%%%%%%%%%%%%%%%%%%%%%
%%%%%%%%%%%%%%%%%%%%%%%%%%%%%%%%%%%%%%%%%%%%%%%%%%%%%%%%%%%%%%%%%%%%%%%%%%%%%%%%%%%%%%%%%%%%%%%%%%%%
%%%%%%%%%%%%%%%%%%%%%%%%%%%%%%%%%%%%%%%%%%%%%%%%%%%%%%%%%%%%%%%%%%%%%%%%%%%%%%%%%%%%%%%%%%%%%%%%%%%%
%\newpage
\section{Architecture}\label{sec:arch}
%%%%%%%%%%%%%%%%%%%%%%%%%%%%%%%%%%%%%%%%%%%%%%%%%%%%%%%%%%%%%%%%%%%%%%%%%%%%%%%%%%%%%%%%%%%%%%%%%%%%
%%%%%%%%%%%%%%%%%%%%%%%%%%%%%%%%%%%%%%%%%%%%%%%%%%%%%%%%%%%%%%%%%%%%%%%%%%%%%%%%%%%%%%%%%%%%%%%%%%%%
%%%%%%%%%%%%%%%%%%%%%%%%%%%%%%%%%%%%%%%%%%%%%%%%%%%%%%%%%%%%%%%%%%%%%%%%%%%%%%%%%%%%%%%%%%%%%%%%%%%%

%
\begin{figure}
\begin{center}
%{\fbox
{\begin{tabular}{c}
\setlength{\unitlength}{0.6pt}
\begin{picture}(360,240)(-115,-20)
\put(-90,30){\begin{picture}(0,0)(0,0)
\put(120,180){\makebox(0,0){\footnotesize{$\mathrmbf{Log}$}}}
\put(85,155){\makebox(0,0)[r]{\scriptsize{$\mathrmbfit{struc}$}}}
\put(110,170){\vector(-1,-1){40}}
\put(60,120){\makebox(0,0){\footnotesize{$\mathrmbf{Struc}$}}}
\put(0,60){\makebox(0,0){\footnotesize{$\mathrmbf{Rel}$}}}
\put(120,60){\makebox(0,0){\footnotesize{$\mathrmbf{Alg}$}}}
\put(60,0){\makebox(0,0){\footnotesize{$\mathrmbf{Cls}$}}}
\put(-20,50){\oval(20,20)[l]}
\put(-10,50){\oval(20,20)[br]}
\qbezier(-20,40)(-15,40)(-10,40)
\put(0,55){\vector(0,1){0}}
\put(-15,30){\makebox(0,0){\scriptsize{$\mathrmbfit{fmla}$}}}
\put(20,90){\makebox(0,0)[r]{\scriptsize{$\mathrmbfit{rel}$}}}
\put(100,90){\makebox(0,0)[l]{\scriptsize{$\mathrmbfit{alg}$}}}
\put(70,110){\vector(1,-1){40}}
\put(50,110){\vector(-1,-1){40}}
\put(10,50){\vector(1,-1){40}}
\put(110,50){\vector(-1,-1){40}}
\qbezier(50,100)(55,95)(60,90)
\qbezier(70,100)(65,95)(60,90)
\end{picture}}
\put(90,-15){\begin{picture}(0,0)(0,0)
\put(120,180){\makebox(0,0){\footnotesize{$\mathrmbf{Spec}$}}}
\put(85,155){\makebox(0,0)[r]{\scriptsize{$\mathrmbfit{lang}$}}}
\put(110,170){\vector(-1,-1){40}}
\put(60,120){\makebox(0,0){\footnotesize{$\mathrmbf{Lang}$}}}
\put(0,60){\makebox(0,0){\footnotesize{$\mathrmbf{Sch}$}}}
\put(120,60){\makebox(0,0){\footnotesize{$\mathrmbf{Oper}$}}}
\put(60,0){\makebox(0,0){\footnotesize{$\mathrmbf{Set}$}}}
\put(-20,50){\oval(20,20)[l]}
\put(-10,50){\oval(20,20)[br]}
\qbezier(-20,40)(-15,40)(-10,40)
\put(0,55){\vector(0,1){0}}
\put(-15,30){\makebox(0,0){\scriptsize{$\mathrmbfit{fmla}$}}}
\put(100,50){\oval(20,20)[l]}
\put(110,50){\oval(20,20)[br]}
\qbezier(100,40)(105,40)(110,40)
\put(120,55){\vector(0,1){0}}
\put(105,25){\makebox(0,0){\scriptsize{$\mathrmbfit{term}$}}}
\put(20,90){\makebox(0,0)[r]{\scriptsize{$\mathrmbfit{sch}$}}}
\put(100,90){\makebox(0,0)[l]{\scriptsize{$\mathrmbfit{oper}$}}}
\put(70,110){\vector(1,-1){40}}
\put(50,110){\vector(-1,-1){40}}
\put(10,50){\vector(1,-1){40}}
\put(110,50){\vector(-1,-1){40}}
\qbezier(50,100)(55,95)(60,90)
\qbezier(70,100)(65,95)(60,90)
\end{picture}}
\put(56,203.5){\vector(4,-1){128}}
\put(-4,143.5){\vector(4,-1){128}}
\put(-64,83.5){\vector(4,-1){128}}
%\put(56,85.5){\vector(4,-1){128}}
\put(56,83.5){\line(4,-1){28}}\put(116,68.5){\vector(4,-1){68}}
\put(-4,23.5){\vector(4,-1){128}}
\put(140,195){\makebox(0,0){\scriptsize{$\mathrmbfit{spec}$}}}
\put(80,135){\makebox(0,0){\scriptsize{$\mathrmbfit{lang}$}}}
\put(20,50){\makebox(0,0){\scriptsize{$\mathrmbfit{sch}$}}}
\put(140,50){\makebox(0,0){\scriptsize{$\mathrmbfit{oper}$}}}
\put(-30,-28){\makebox(0,0){\footnotesize{$\underbrace{\rule{60pt}{0pt}}$}}}
\put(-30,-40){\makebox(0,0){\footnotesize{${\text{\slshape{semantics}}}$}}}
\put(150,-28){\makebox(0,0){\footnotesize{$\underbrace{\rule{60pt}{0pt}}$}}}
\put(150,-40){\makebox(0,0){\footnotesize{${\text{\slshape{linguistics/formalism}}}$}}}
\put(-190,110){\makebox(0,0){\shortstack{\footnotesize{\slshape{relational}}\\\footnotesize{\slshape{superstructure}}}}}
\put(-130,110){\makebox(0,0){$\left\{\rule{0pt}{30pt}\right.$}}
\put(285,30){\makebox(0,0){\shortstack{\footnotesize{\slshape{functional}}\\\footnotesize{\slshape{base}}}}}
\put(240,30){\makebox(0,0){$\left.\rule{0pt}{30pt}\right\}$}}
\qbezier(22,183)(32,180.5)(42,178)
\qbezier(42,178)(48,184)(54,190)
\end{picture}
\\ \\
\end{tabular}}
%}
\end{center}
\caption{{\ttfamily FOLE} Fibered Architecture}
\label{fbr:arch}
\end{figure}
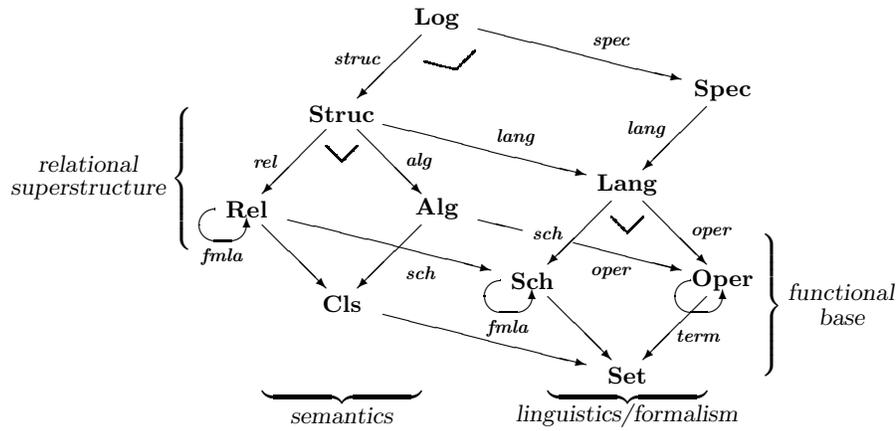
Figure~\ref{fbr:arch} is a 3-dimensional visualization of 
the fibered architecture of the first-order logical environment {\ttfamily FOLE}.
Each node of this figure is a mathematical context,
whereas each edge is a passage between two contexts.
There is a projection
from the 2-D prism below $\mathrmbf{Struc}$ representing the relational superstructure (subsec.~\ref{sec:rel-sup})
  to the 2-D prism below $\mathrmbf{Alg}$   representing the functional base (subsec.~\ref{sec:func-base}). 
The front diamond below $\mathrmbf{Lang}$ represents the linguistics/formalism,
whereas the back diamond below $\mathrmbf{Struc}$ represents the semantics.
The projective passages from semantics to linguistics/formalism
represent the fibration left-to-right and the indexing right-to-left.
The vee-shape at the top of each diamond states that the top mathematical context
is a product of the side contexts modulo the bottom context.
The mathematical contexts on the left side of each diamond form the relational aspect, whereas
the mathematical contexts on the right side form the functional aspect
that lifts the relational to the (first-order) logical aspect.
The 2-D prism below $\mathrmbf{Log}$ represents the institutional architecture.

%%%%%%%%%%%%%%%%%%%%%%%%%%%%%%%%%%%%%%%%%%%%%%%%%%%%%%%%%%%%%%%%%%%%%%%%%%%%%%%%%%%%%%%%%%%%%%%%%%%%
%%%%%%%%%%%%%%%%%%%%%%%%%%%%%%%%%%%%%%%%%%%%%%%%%%%%%%%%%%%%%%%%%%%%%%%%%%%%%%%%%%%%%%%%%%%%%%%%%%%%
%%%%%%%%%%%%%%%%%%%%%%%%%%%%%%%%%%%%%%%%%%%%%%%%%%%%%%%%%%%%%%%%%%%%%%%%%%%%%%%%%%%%%%%%%%%%%%%%%%%%
%\newpage
\section{Components}\label{sec:cmps}
%%%%%%%%%%%%%%%%%%%%%%%%%%%%%%%%%%%%%%%%%%%%%%%%%%%%%%%%%%%%%%%%%%%%%%%%%%%%%%%%%%%%%%%%%%%%%%%%%%%%
%%%%%%%%%%%%%%%%%%%%%%%%%%%%%%%%%%%%%%%%%%%%%%%%%%%%%%%%%%%%%%%%%%%%%%%%%%%%%%%%%%%%%%%%%%%%%%%%%%%%
%%%%%%%%%%%%%%%%%%%%%%%%%%%%%%%%%%%%%%%%%%%%%%%%%%%%%%%%%%%%%%%%%%%%%%%%%%%%%%%%%%%%%%%%%%%%%%%%%%%%

The architectural components (Fig.\ref{fbr:arch}) divide up according to kind and aspect.
The outer level describes the kind of component.
The indexing kind is 
a language (type set, relational schema, operator domain, etc.) (front diamond Fig.\ref{fbr:arch}),
whereas the indexed kind is either a formalism or 
a semantics (classification, relational structure, algebra, etc.) (back diamond Fig.\ref{fbr:arch}).
The inner level describes the aspect of component.
There are basic, relational, functional and logical aspects
(bottom, left, right or top node in either Fig.\ref{fbr:arch} diamond).

\begin{figure}
\begin{center}
\begin{tabular}{c@{\hspace{60pt}}c}
\begin{tabular}{c}
\setlength{\unitlength}{0.9pt}
\begin{picture}(150,120)(0,-10)
\put(75,90){\makebox(0,0){\normalsize{$\top$}}}
\put(30,45){\makebox(0,0){\tiny{physical}}}
\put(120,50){\makebox(0,0){\tiny{abstract}}}
\put(0,0){\makebox(0,0){\tiny{actuality}}}
\put(30,5){\makebox(0,0){\tiny{form}}}
\put(60,0){\makebox(0,0){\tiny{prehension}}}
\put(90,5){\makebox(0,0){\tiny{proposition}}}
\put(120,0){\makebox(0,0){\tiny{nexus}}}
\put(150,5){\makebox(0,0){\tiny{intention}}}
\put(15,60){\makebox(0,0){\tiny{independent}}}
\put(75,60){\makebox(0,0){\tiny{relative}}}
\put(135,60){\makebox(0,0){\tiny{mediating}}}
\put(0,0){\line(2,3){30}}
\put(60,0){\line(-2,3){30}}
\put(120,0){\line(-2,1){90}}
\put(30,5){\line(2,1){90}}
\put(90,5){\line(2,3){30}}
\put(150,5){\line(-2,3){30}}
\multiput(15,60)(2,1){30}{\scriptsize{$\cdot$}}
\multiput(30,45)(1.5,1.5){30}{\scriptsize{$\cdot$}}
\multiput(75,60)(0,2){15}{\scriptsize{$\cdot$}}
\multiput(120,50)(-1.5,1.3){30}{\scriptsize{$\cdot$}}
\multiput(135,60)(-2,1){30}{\scriptsize{$\cdot$}}
\qbezier[35](0,0)(7.5,30)(15,60)
\qbezier[35](30,5)(22.5,30)(15,60)
\qbezier[35](60,0)(67.5,30)(75,60)
\qbezier[35](90,5)(82.5,27.5)(75,60)
\qbezier[35](120,0)(127.5,37.5)(135,60)
\qbezier[35](150,5)(142.5,35)(135,60)
\end{picture}
\end{tabular}
&
\begin{tabular}{c}
\setlength{\unitlength}{0.9pt}
\begin{picture}(150,120)(0,-10)
\put(75,90){\makebox(0,0){\normalsize{$\mathcal{M}$}}}
\put(30,45){\makebox(0,0){\tiny{instance}}}
\put(120,50){\makebox(0,0){\tiny{type}}}
\put(0,-5){\makebox(0,0){\shortstack{\tiny{entity}\\\tiny{instance}\\\scriptsize{$Y$}}}}
\put(30,0){\makebox(0,0){\shortstack{\tiny{entity}\\\tiny{type}\\\scriptsize{$X$}}}}
\put(60,-12){\makebox(0,0){\shortstack{\scriptsize{$\mathrmbf{List}(Y)$}\\\tiny{tuple}\\\mbox{}}}}
\put(93,-10.5){\makebox(0,0){\footnotesize{$\xleftarrow[\tau]{}$}}}
\put(120,-12){\makebox(0,0){\shortstack{\scriptsize{$K$}\\\tiny{key}\\\mbox{}}}}
\put(90,8.5){\makebox(0,0){\shortstack{\mbox{}\\\tiny{signature}\\\scriptsize{$\mathrmbf{List}(X)$}}}}
\put(123.5,5){\makebox(0,0){\footnotesize{$\xleftarrow{\sigma}$}}}
\put(150,8.5){\makebox(0,0){\shortstack{\tiny{relation}\\\tiny{type}\\\scriptsize{$R$}}}}
\put(15,60){\makebox(0,0){\footnotesize{$\mathcal{E}$}}}
\put(75,60){\makebox(0,0){\footnotesize{$\mathrmbf{List}(\mathcal{E})$}}}
\put(111.5,65){\makebox(0,0){\footnotesize{$\stackrel{{\langle{\sigma,\tau}\rangle}}{\leftleftarrows}$}}}
\put(135,60){\makebox(0,0){\footnotesize{$\mathcal{R}$}}}
\put(0,0){\line(2,3){30}}
\put(60,0){\line(-2,3){30}}
\put(120,0){\line(-2,1){90}}
\put(30,5){\line(2,1){90}}
\put(90,5){\line(2,3){30}}
\put(150,5){\line(-2,3){30}}
\multiput(15,60)(2,1){30}{\scriptsize{$\cdot$}}
\multiput(30,45)(1.5,1.5){30}{\scriptsize{$\cdot$}}
\multiput(75,60)(0,2){15}{\scriptsize{$\cdot$}}
\multiput(120,50)(-1.5,1.3){30}{\scriptsize{$\cdot$}}
\multiput(135,60)(-2,1){30}{\scriptsize{$\cdot$}}
\qbezier[35](0,0)(7.5,30)(15,60)
\qbezier[35](30,5)(22.5,30)(15,60)
\qbezier[35](60,0)(67.5,30)(75,60)
\qbezier[35](90,5)(82.5,27.5)(75,60)
\qbezier[35](120,0)(127.5,37.5)(135,60)
\qbezier[35](150,5)(142.5,35)(135,60)
\end{picture}
\end{tabular}
\\ & \\ 
%& \\
{\footnotesize{top-level categories}}
&
{\footnotesize{{\ttfamily FOLE} components}}
\end{tabular}
\end{center}
\caption{Analogy}
\label{analogy}
\end{figure}
Fig.\ref{analogy} illustrates an analogy between 
the top-level ontological categories discussed in (Sowa~\cite{sowa:kr}) and 
the components 
of the first-order logical environment {\ttfamily FOLE}
(the relational aspect or 2-D prism below $\mathrmbf{Rel}$).
The pair `physical-abstract', 
which corresponds to the Heraclitus distinction {\itshape physis}-{\itshape logos},
is represented in the {\ttfamily FOLE} by a classification
between instances and types of various kinds.
%\item 
The triples (triads) `actuality-prehension-nexus' and `form-proposition-intention'
correspond to Whitehead's categories of existence.
The latter triple, 
which is analogous to the `entity\mbox{ }type-signature-relation\mbox{ }type' triple,
is represented in the {\ttfamily FOLE} by 
a relational language (schema)
$\mathcal{S} = {\langle{R,\sigma,X}\rangle}$
(Appendix~\ref{sec:ling:fml}).
The former
triple, 
which is analogous to the `entity\mbox{ }instance-tuple-relation\mbox{ }instance' triple,
is represented in the {\ttfamily FOLE} by 
the tuple function
$K\xrightarrow{\tau}\mathrmbf{List}(Y)$
(part of a {\ttfamily FOLE} structure).
The firstness category of `independent(actuality,form)'
is represented in the {\ttfamily FOLE} by
an entity classification 
$\mathcal{E} = {\langle{X,Y,\models_{\mathcal{E}}}\rangle}$
(Appendix~\ref{sec:sem}).
The thirdness category of `mediating(nexus,intention)'
is represented in the {\ttfamily FOLE} by 
a relation classification
$\mathcal{R} = {\langle{R,K,\models_{\mathcal{R}}}\rangle}$
between relational instances (keys) and relational types
(or a classification
between relational instances and logical formula, more generally)
(Appendix~\ref{sec:sem}).
The secondness category of `relative(prehension,proposition)'
is represented in the {\ttfamily FOLE} by 
the list construction of an entity classification 
$\mathrmbf{List}(\mathcal{E}) 
= {\langle{\mathrmbf{List}(X),\mathrmbf{List}(Y),\models_{\mathrmbf{List}(\mathcal{E})}}\rangle}$
between tuples and signatures
(Appendix~\ref{sec:sem}).
Finally,
the entire graph of
the top-level ontological categories
is represented in the {\ttfamily FOLE} by 
a (model-theoretic) structure (classification form)
$\mathcal{M} = {\langle{\mathcal{R},{\langle{\sigma,\tau}\rangle},\mathcal{E}}\rangle}$,
where the relation $\mathcal{R}$ and entity $\mathcal{E}$ classifications are connected by 
a list designation 
${\langle{\sigma,\tau}\rangle} : \mathcal{R} \rightrightarrows \mathrmbf{List}(\mathcal{E})$
(Appendix~\ref{sec:sem}).
This is appropriate,
since a (model-theoretic) structure represents the knowledge in the local world of a community of discourse.

%%%%%%%%%%%%%%%%%%%%%%%%%%%%%%%%%%%%%%%%%%%%%%%%%%%%%%%%%%%%%%%%%%%%%%%%%%%%%%%%%%%%%%%%%%%%%%%%%%%%
%%%%%%%%%%%%%%%%%%%%%%%%%%%%%%%%%%%%%%%%%%%%%%%%%%%%%%%%%%%%%%%%%%%%%%%%%%%%%%%%%%%%%%%%%%%%%%%%%%%%
%%%%%%%%%%%%%%%%%%%%%%%%%%%%%%%%%%%%%%%%%%%%%%%%%%%%%%%%%%%%%%%%%%%%%%%%%%%%%%%%%%%%%%%%%%%%%%%%%%%%
%\newpage
\section{Logical Environment}\label{sec:log-env}
%%%%%%%%%%%%%%%%%%%%%%%%%%%%%%%%%%%%%%%%%%%%%%%%%%%%%%%%%%%%%%%%%%%%%%%%%%%%%%%%%%%%%%%%%%%%%%%%%%%%
%%%%%%%%%%%%%%%%%%%%%%%%%%%%%%%%%%%%%%%%%%%%%%%%%%%%%%%%%%%%%%%%%%%%%%%%%%%%%%%%%%%%%%%%%%%%%%%%%%%%
%%%%%%%%%%%%%%%%%%%%%%%%%%%%%%%%%%%%%%%%%%%%%%%%%%%%%%%%%%%%%%%%%%%%%%%%%%%%%%%%%%%%%%%%%%%%%%%%%%%%
%
The $\mathtt{FOLE}$ institution (logical system) (Kent~\cite{kent:iccs2009})
has at its core the mathematical context of first-order logic (FOL) languages $\mathrmbf{Lang}$.
For any language $\mathcal{L} = {\langle{\mathcal{S},\mathcal{O}}\rangle}$, 
there is a set of constraints $\mathrmbfit{fmla}(\mathcal{L})$
representing the formalism at location $\mathcal{L}$, and
there is a mathematical context of structures $\mathrmbfit{struc}(\mathcal{L})$
representing the semantics at location $\mathcal{L}$.  
For any first-order logic (FOL) language morphism
$\mathcal{L}_{2} = {\langle{\mathcal{S}_{2},\mathcal{O}_{2}}\rangle}
\xrightarrow{\langle{r,f,\omega}\rangle}
{\langle{\mathcal{S}_{1},\mathcal{O}_{1}}\rangle} = \mathcal{L}_{1}$, 
there is a constraint function 
$\mathrmbfit{fmla}(\mathcal{L}_{2})\xrightarrow{\mathrmbfit{fmla}(r,f,\omega)}\mathrmbfit{fmla}(\mathcal{L}_{1})$
(Appendix~\ref{sec:ling:fml})
representing flow of formalism in the forward direction, and
there is a structure passage
$\mathrmbfit{struc}(\mathcal{L}_{2})\xleftarrow{\mathrmbfit{struc}(r,f,\omega)}\mathrmbfit{struc}(\mathcal{L}_{1})$
(Appendix~\ref{sec:sem})
representing flow of semantics in the reverse direction.
This structure passage has a relational component
$\mathrmbf{Rel}(\mathcal{S}_{2}) \xleftarrow{\mathrmbfit{rel}_{{\langle{r,f}\rangle}}} \mathrmbf{Rel}(\mathcal{S}_{2})$
and a functional (algebraic) component
$\mathrmbf{Alg}(\mathcal{O}_{2})\xleftarrow{\mathrmbfit{alg}_{{\langle{f,\omega}\rangle}}}
\mathrmbf{Alg}(\mathcal{O}_{1})$.

$\mathtt{FOLE}$ is an institution, 
since the satisfaction relation is preserved during information flow
along
any first-order logic (FOL) language morphism
$\mathcal{L}_{2} = {\langle{\mathcal{S}_{2},\mathcal{O}_{2}}\rangle}
\xrightarrow{\langle{r,f,\omega}\rangle}
{\langle{\mathcal{S}_{1},\mathcal{O}_{1}}\rangle} = \mathcal{L}_{1}$:
%\[\mbox{\scriptsize{
$
\mathrmbfit{struc}(r,f,\omega)(\mathcal{M}_{1}) 
\models_{\mathcal{L}_{2}} ({\langle{I_{2}',s_{2}',\varphi_{2}'}\rangle}\xrightarrow{h_{2}}{\langle{I_{2},s_{2},\varphi_{2}}\rangle})$
\underline{iff}
$\mathcal{M}_{1} \models_{\mathcal{L}_{1}} 
\mathrmbfit{fmla}({\langle{I_{2}',s_{2}',\varphi_{2}'}\rangle}\xrightarrow{h_{2}}{\langle{I_{2},s_{2},\varphi_{2}}\rangle}).
$
%}\normalsize}\]
In short,
``satisfaction is invariant under change of notation''.
The institution $\mathtt{FOLE}$ is a logical environment,
since
for any language $\mathcal{L} = {\langle{\mathcal{S},\mathcal{O}}\rangle} = {\langle{R,\sigma,X,\Omega}\rangle}$,
if
$\mathcal{M}_{2}
%{\langle{{\langle{R,K_{2},\models_{\mathcal{R}_{2}}}\rangle},{\langle{\sigma,\tau_{2}}\rangle},{\langle{X,Y_{2},\models_{\mathcal{E}_{2}}}\rangle},{\langle{\Omega,A_{2},\delta_{2}}\rangle}}\rangle}
%\xrightarrow{\langle{{\langle{1_{R},k}\rangle},{\langle{1_{X},g}\rangle},{\langle{1_{\Omega},h}\rangle}}\rangle}
\xrightarrow{{\langle{k,g,h}\rangle}}
\mathcal{M}_{1}
%{\langle{{\langle{R,K_{1},\models_{\mathcal{R}_{1}}}\rangle},{\langle{\sigma,\tau_{1}}\rangle},{\langle{X,Y_{1},\models_{\mathcal{E}_{1}}}\rangle},{\langle{\Omega,A_{1},\delta_{1}}\rangle}}\rangle}
$
is a $\mathrmbfit{lang}$-vertical structure morphism over $\mathcal{L}$, 
then we have the intent order
$\mathcal{M}_{2} \geq_{\mathcal{L}} \mathcal{M}_{1}$;
that is,
$\mathcal{M}_{2} \models_{\mathcal{L}} (\varphi{\;\vdash\;}\psi)$ 
implies 
$\mathcal{M}_{1} \models_{\mathcal{L}} (\varphi{\;\vdash\;}\psi)$ 
for any $\mathcal{S}$-sequent $(\varphi{\;\vdash\;}\psi)$.
In short,
``satisfaction respects structure morphisms''.
(See Appendix~\ref{append:log-env} for a proof of this in the relational aspect.)

\comment{
Let
$\mathcal{S}_{2}={\langle{R_{2},\sigma_{2},X_{2}}\rangle} 
\xRightarrow{\langle{r,f}\rangle}
{\langle{R_{1},\sigma_{1},X_{1}}\rangle}=\mathcal{S}_{1}$
be a (strict) schema morphism,
with structure fiber passage
$\mathrmbf{Struc}(\mathcal{S}_{2}) \xleftarrow{\mathrmbfit{struc}_{{\langle{r,f}\rangle}}} \mathrmbf{Struc}(\mathcal{S}_{2})$
and
bridging structure morphism
\[\mbox{\footnotesize{
$\mathrmbfit{struc}_{{\langle{r,f}\rangle}}(\mathcal{M}_{1}) 
= {\langle{r^{-1}(\mathcal{R}_{1}),{\langle{\sigma_{2},\tau_{1}}\rangle},f^{-1}(\mathcal{E}_{1})}\rangle}
\stackrel{{\langle{r,1_{K},f,1_{Y}}\rangle}}{\rightleftarrows} 
{\langle{\mathcal{R}_{1},{\langle{\sigma_{1},\tau_{1}}\rangle},\mathcal{E}_{1}}\rangle} = \mathcal{M}_{1}$
}\normalsize}\]
with relation and entity infomorphisms
$r^{-1}(\mathcal{R}_{1})
\stackrel{{\langle{r,1_{K}}\rangle}}{\rightleftarrows}
\mathcal{R}_{1}$ 
and
$f^{-1}(\mathcal{E}_{1})
\stackrel{{\langle{f,1_{Y}}\rangle}}{\rightleftarrows} 
\mathcal{E}_{1}$.

\begin{proposition}
The (formula) interpretation of the inverse image structure
is the inverse image of the (formula) interpretation.
\end{proposition}
\begin{fact}
The formula classification of the inverse image relation classfication is
the inverse image classfication of the formula relation classification:
\[\mbox{\footnotesize{$
\widehat{r^{-1}(\mathcal{R}_{1})} 
= \widehat{{\langle{R_{2},K_{1},\models_{r}}\rangle}} 
= {\langle{\widehat{R}_{2},K_{1},\models_{\widehat{r}}}\rangle}
= \widehat{r}^{-1}(\widehat{\mathcal{R}}_{1}).
$}\normalsize}\]
\end{fact}
\begin{proof}
The proof is by induction on formulas $\varphi_{2} \in \widehat{R}_{2}$.
\end{proof}
\begin{fact}
The formula structure morphism of the bridging structure morphism is: 
\[\mbox{\footnotesize{$
%\mathrmbf{Fmla}(r,1_{K},f,1_{Y}) = 
{\langle{\widehat{r},1_{K},f,1_{Y}}\rangle} : 
%\mathrmbf{Fmla}(r^{-1}(\mathcal{R}_{1}),{\langle{\sigma_{2},\tau_{1}}\rangle},f^{-1}(\mathcal{E}_{1})) = 
{\langle{\widehat{r^{-1}(\mathcal{R}_{1})},{\langle{\sigma_{2},\tau_{1}}\rangle},f^{-1}(\mathcal{E}_{1})}\rangle} 
\rightleftarrows
{\langle{\widehat{\mathcal{R}}_{1},{\langle{\sigma_{1},\tau_{1}}\rangle},\mathcal{E}_{1}}\rangle}. 
%= \mathrmbf{Fmla}(\mathcal{M}_{1})
$}\normalsize}\]
Its ($\mathrmbfit{inst}$-vertical) relation infomorphism
\newline
${\langle{\widehat{r},1_{K}}\rangle} : 
\widehat{r^{-1}(\mathcal{R}_{1})} 
= 
%\underline{
\widehat{{\langle{R_{2},K_{1},\models_{r}}\rangle}} 
= 
{\langle{\widehat{R}_{2},K_{1},\models_{\widehat{r}}}\rangle}
%} 
\rightleftarrows 
{\langle{\widehat{R}_{1},K_{1},\models_{\widehat{\mathcal{R}}_{1}}}\rangle} = \widehat{\mathcal{R}}_{1}$
\newline
is the bridging infomorphism of the formula relation classification, 
with the infomorphism condition
$k_{1}{\;\models_{\widehat{r^{-1}(\mathcal{R}_{1})}}\;}\varphi_{2}$
\underline{iff}
$k_{1}{\;\models_{\widehat{\mathcal{R}}_{1}}\;}\widehat{r}(\varphi_{2})$.
The extent monotonic function 
$\widehat{r} :
\mathrmbfit{ext}(\widehat{r^{-1}(\mathcal{R}_{1})})\rightarrow\mathrmbfit{ext}(\widehat{\mathcal{R}}_{1})$
is an isometry:
$\varphi{\;\leq_{\widehat{r^{-1}(\mathcal{R}_{1})}}\;}\psi$
iff
$\widehat{r}(\varphi){\;\leq_{\widehat{\mathcal{R}}_{1}}\;}\widehat{r}(\psi)$.
\end{fact}
\begin{proposition}
Satisfaction is invariant under change of notation;
that is,
for any (strict) schema morphism
$\mathcal{S}_{2}={\langle{R_{2},\sigma_{2},X_{2}}\rangle} 
\xRightarrow{\langle{r,f}\rangle}
{\langle{R_{1},\sigma_{1},X_{1}}\rangle}=\mathcal{S}_{1}$
the following satisfaction condition holds:
\[\mbox{\footnotesize{$
\mathrmbfit{struc}_{{\langle{r,f}\rangle}}(\mathcal{M}_{1})
{\;\models_{\mathcal{S}_{2}}\;}
(\varphi_{2}\xrightarrow{h}\varphi_{2}')
%\mathrmbfit{struc}_{{\langle{r,f}\rangle}}(\mathcal{M}_{1}){\;\models_{\mathcal{S}_{2}}\;} (\varphi_{2}{\;\vdash\;}\psi_{2})
				\;\;\;\text{\underline{iff}}\;\;\;
%\mathcal{M}_{1}{\;\models_{\mathcal{S}_{1}}\;}\mathrmbfit{seq}_{{\langle{r,f}\rangle}}(\varphi_{2}{\;\vdash\;}\psi_{2})
\mathcal{M}_{1}
\;\models_{\mathcal{S}_{1}}\; 
(\widehat{r}(\varphi_{2})\xrightarrow{h}\widehat{r}(\varphi_{2}'))=
\mathrmbfit{fmla}_{{\langle{r,f}\rangle}}(\varphi_{2}{\;\vdash\;}\varphi_{2}')
$.}\normalsize}\]
\end{proposition}
\begin{proof}
But this holds, since
$\widehat{r^{-1}(\mathcal{R}_{1})} = \widehat{r}^{-1}(\widehat{\mathcal{R}}_{1})$.
%and
%$\widehat{r}(\varphi_{2}{\,\rightarrowtriangle\,}\psi_{2})=
%(\widehat{r}(\varphi_{2}){\,\rightarrowtriangle\,}\widehat{r}(\psi_{2}))$
%for any implication 
%$(\varphi_{2}{\,\rightarrowtriangle\,}\psi_{2})$.
%
In more detail,
\mbox{}\newline
$\mathrmbfit{struc}_{{\langle{r,f}\rangle}}(\mathcal{M}_{1})
{\;\models_{\mathcal{S}_{2}}\;}
(\varphi_{2}\xrightarrow{h}\varphi_{2}')$
%\newline
\underline{iff}
${\scriptstyle\sum}_{h}(\varphi_{2}'){\;\leq_{\widehat{r^{-1}(\mathcal{R}_{1})}}\;}\varphi_{2}$
\newline
\underline{iff}
$\widehat{r}({\scriptstyle\sum}_{h}(\varphi_{2}')){\;\leq_{\widehat{\mathcal{R}}_{1}}\;}\widehat{r}(\varphi_{2})$
%\newline
\underline{iff}
${\scriptstyle\sum}_{h}(\widehat{r}(\varphi_{2}')){\;\leq_{\widehat{\mathcal{R}}_{1}}\;}\widehat{r}(\varphi_{2})$
\newline
\underline{iff}
$\mathcal{M}_{1}
\;\models_{\mathcal{S}_{1}}\; 
(\widehat{r}(\varphi_{2})\xrightarrow{h}\widehat{r}(\varphi_{2}'))=
\mathrmbfit{fmla}_{{\langle{r,f}\rangle}}(\varphi_{2}{\;\vdash\;}\varphi_{2}')$.
\end{proof}

}

%%%%%%%%%%%%%%%%%%%%%%%%%%%%%%%%%%%%%%%%%%%%%%%%%%%%%%%%%%%%%%%%%%%%%%%%%%%%%%%%%%%%%%%%%%%%%%%%%%%%
%%%%%%%%%%%%%%%%%%%%%%%%%%%%%%%%%%%%%%%%%%%%%%%%%%%%%%%%%%%%%%%%%%%%%%%%%%%%%%%%%%%%%%%%%%%%%%%%%%%%
%%%%%%%%%%%%%%%%%%%%%%%%%%%%%%%%%%%%%%%%%%%%%%%%%%%%%%%%%%%%%%%%%%%%%%%%%%%%%%%%%%%%%%%%%%%%%%%%%%%%
%\newpage
\section{Information Systems}\label{sec:inf-sys}
%%%%%%%%%%%%%%%%%%%%%%%%%%%%%%%%%%%%%%%%%%%%%%%%%%%%%%%%%%%%%%%%%%%%%%%%%%%%%%%%%%%%%%%%%%%%%%%%%%%%
%%%%%%%%%%%%%%%%%%%%%%%%%%%%%%%%%%%%%%%%%%%%%%%%%%%%%%%%%%%%%%%%%%%%%%%%%%%%%%%%%%%%%%%%%%%%%%%%%%%%
%%%%%%%%%%%%%%%%%%%%%%%%%%%%%%%%%%%%%%%%%%%%%%%%%%%%%%%%%%%%%%%%%%%%%%%%%%%%%%%%%%%%%%%%%%%%%%%%%%%%
%	
Following the theory of general systems,
an information system consists of 
a collection of interconnected parts called information resources and
a collection of part-part relationships between pairs of information resources called constraints.
Semantic information systems have logics
\footnote{A first-order logic 
$\mathcal{L}={\langle{\mathcal{M},\mathcal{T}}\rangle}$
in {\ttfamily FOLE}
consists of a first-order structure 
$\mathcal{M}$
%={\langle{\mathcal{R},{\langle{\sigma,\tau}\rangle},\mathcal{E},{\langle{\Omega,A,\delta}\rangle}}\rangle}$
and a first-order specification 
$\mathcal{T}$
% = {\langle{\mathcal{S},T,\mathcal{O},E}\rangle}$
that share a common first-order language
$\mathrmbfit{lang}(\mathcal{M})
%= {\langle{\mathcal{S},\mathcal{O}}\rangle}
%= {\langle{R,\sigma,X,\Omega}\rangle}
%= \mathcal{L}
= \mathrmbfit{lang}(\mathcal{T})$.
A logic enriches
a first-order structure 
with a specification.
The logic is sound when
the structure $\mathcal{M}$
satisfies 
every constraint 
%$(\varphi'{\;\xrightarrow{h}\;}\varphi)$
in the specification $\mathcal{T}$.}
as their information resources. 
Just as every logic has an underlying structure, 
so also every information system has an underlying distributed system. 
As such, distributed systems have structures for their component parts.

A {\ttfamily FOLE} distributed system is a passage 
$\mathcal{M} : \mathrmbf{I} \rightarrow \mathrmbf{Struc}$
pictured as a diagram of shape $\mathrmbf{I}$ 
within the ambient mathematical context of first-order structures.
As such,
it consists of an indexed family 
$\{ \mathcal{M}_{i}
% = 
%{\langle{R_{i},\sigma_{i},X_{i},\Omega_{i}}\rangle} 
%{\langle{\mathcal{R}_{i},{\langle{\sigma_{i},\tau_{i}}\rangle},\mathcal{E}_{i},{\langle{\Omega_{i},A_{i},\delta_{i}}\rangle}}\rangle}
\mid i \in |\mathrmbf{I}| \}$
of structures together with an indexed family 
$\{ \mathcal{M}_{i} 
%\xrightarrow{{\langle{r_{e},f_{e},\omega_{e}}\rangle}} 
%\xrightarrow{\langle{{\langle{r_{e},k_{e}}\rangle},{\langle{f_{e},g_{e}}\rangle},{\langle{\omega_{e},h_{e}}\rangle}}\rangle}
\xrightarrow{m_{e}}
\mathcal{M}_{j} 
\mid (e : i \rightarrow j) \in \mathrmbf{I} \}$
of structure morphisms.
A {\ttfamily FOLE} (semantic) information system is a diagram
$\mathcal{L} : \mathrmbf{I} \rightarrow \mathrmbf{Log}$ 
within the mathematical context of first-order logics.
This consists of an indexed family of logics
$\{ \mathcal{L}_{i}
% = {\langle{R_{i},\sigma_{i},X_{i},\Omega_{i},\mathrmbfit{T}_{i}}\rangle} 
: i \in |\mathrmbf{I}| \}$
and an indexed family of logic morphisms
$\{ \mathcal{L}_{i} \xrightarrow{l_{e}} \mathcal{L}_{j} 
\mid (e : i \rightarrow j) \in \mathrmbf{I} \}$.
An information system $\mathcal{L}$ 
%with $\mathcal{T}_{i} = {\langle{R_{i},\sigma_{i},X_{i},\Omega_{i},\mathrmbfit{T}_{i}}\rangle}$
has an underlying distributed system
$\mathcal{M} = \mathcal{L}\circ\mathrmbfit{struc}$
of the same shape with $\mathcal{M}_{i} = \mathrmbfit{struc}(\mathcal{L}_{i})$ for all $i \in |\mathrmbf{I}|$.
An information channel 
${\langle{\gamma:\mathcal{M}\Rightarrow\Delta(\mathcal{C}),\mathcal{C}}\rangle}$
consists of an indexed family
$\{ \mathcal{M}_{i}\xrightarrow{\gamma_{i}}\mathcal{C} \mid i \in |\mathrmbf{I}| \}$
of structure morphisms with a common target structure $\mathcal{C}$
called the core of the channel.
Information flows along channels.
We are mainly interested in channels
that cover a distributed system $\mathcal{M} : \mathrmbf{I} \rightarrow \mathrmbf{Struc}$,
where the part-whole relationships respect the system constraints (are consistent with
the part-part relationships).
In this case, 
there exist optimal channels.
An optimal core is called the sum of the distributed system,
and the optimal channel components (structure morphisms) are flow links.

System interoperability is defined by moving formalism over semantics.
The fusion (unification) $\coprod\mathcal{L}$ of the information system $\mathcal{L}$
represents the whole system in a centralized fashion.
The fusion logic is defined by direct system flow:
(i) direct logic flow of the component parts of the information system 
along the optimal channel over the underlying distributed system to a centralized location 
(the mathematical context of structures at the optimal channel core), and 
(ii) product combining the contributions of the parts into a whole.
The consequence $\mathcal{L}^{\scriptscriptstyle\blacklozenge}$ of the information system $\mathcal{L}$
represents the whole system in a distributed fashion.
This is an information system defined by inverse system flow: 
(i) consequence of the fusion logic, and 
(ii) inverse logic flow of this consequence back along the same optimal channel,
transfering the constraints of the whole system (the fusion logic)
to the distributed locations (structures) of the component parts.
See Kent~\cite{kent:iccs2009} for further details.
\footnote{In light of the transformation described in Appendix~\ref{rel:log},
an information system of sound logics
can be regarded as a system of logical/relational databases.
The system consequence of such systems represents database interoperabilty.
Kent~\cite{kent:iccs2009} has more details about the information flow of sound logics
in an arbitrary logical environment.}
%

%%%%%%%%%%%%%%%%%%%%%%%%%%%%%%%%%%%%%%%%%%%%%%%%%%%%%%%%%%%%%%%%%%%%%%%%%%%%%%%%%%%%%%%%%%%%%%%%%%%%
%%%%%%%%%%%%%%%%%%%%%%%%%%%%%%%%%%%%%%%%%%%%%%%%%%%%%%%%%%%%%%%%%%%%%%%%%%%%%%%%%%%%%%%%%%%%%%%%%%%%
%%%%%%%%%%%%%%%%%%%%%%%%%%%%%%%%%%%%%%%%%%%%%%%%%%%%%%%%%%%%%%%%%%%%%%%%%%%%%%%%%%%%%%%%%%%%%%%%%%%%
%\newpage
\section{Summary and Future Work}\label{sec:sum-fut-wrk}
%%%%%%%%%%%%%%%%%%%%%%%%%%%%%%%%%%%%%%%%%%%%%%%%%%%%%%%%%%%%%%%%%%%%%%%%%%%%%%%%%%%%%%%%%%%%%%%%%%%%
%%%%%%%%%%%%%%%%%%%%%%%%%%%%%%%%%%%%%%%%%%%%%%%%%%%%%%%%%%%%%%%%%%%%%%%%%%%%%%%%%%%%%%%%%%%%%%%%%%%%
%%%%%%%%%%%%%%%%%%%%%%%%%%%%%%%%%%%%%%%%%%%%%%%%%%%%%%%%%%%%%%%%%%%%%%%%%%%%%%%%%%%%%%%%%%%%%%%%%%%%

In this paper we have described the first-order logical environment {\ttfamily FOLE} in classification form.
This gives a holistic treatment of first-order logic,
by the use of several novel elements:
the use of signatures (type lists) for relational arities, in place of ordinal numbers;
the use of abstract tuples (relational instances, keys), 
thus making {\ttfamily FOLE} 
compatible with relational databases;
the use of classifications
for both entities and relations; and
the use of relational constraints for the sentences of the {\ttfamily FOLE} institution.
{\ttfamily FOLE} also has an interpretation form (Kent~\cite{kent:db:sem}) 
that represents the formalism and semantics of logical/relational databases, including relational algebra.
There are transformational passages between the classification form and a strict version of the interpretation form.
Appendix~\ref{rel:log} 
briefly discusses
%gives a preliminary discussion of 
the transformation from sound logics to logical/relational databases.

{\ttfamily FOLE} has advantages over other approaches to first-order logic:
in {\ttfamily FOLE} the formalism is completely integrated into the semantics;
the classification form of {\ttfamily FOLE} has a natural extension to relational/logical databases,
as represented by the interpretation form of {\ttfamily FOLE}; and
{\ttfamily FOLE} is a logical environment,
thus allowing practitioners a rigorously defined approach 
towards the interoperation of online semantic systems of information resources 
that include relational databases.

Future work includes:
finishing work on the interpretation form of {\ttfamily FOLE};
further work on defining the transformational passages between the classification and interpretation forms; 
developing a linearization process from {\ttfamily FOLE} to sketch-like forms of logic such as Ologs 
(Spivak and Kent~\cite{spivak:kent:olog}); and
linking {\ttfamily FOLE} with the Common Logic standard.

\appendix

%%%%%%%%%%%%%%%%%%%%%%%%%%%%%%%%%%%%%%%%%%%%%%%%%%%%%%%%%%%%%%%%%%%%%%%%%%%%%%%%%%%%%%%%%%%%%%%%%%%%
%%%%%%%%%%%%%%%%%%%%%%%%%%%%%%%%%%%%%%%%%%%%%%%%%%%%%%%%%%%%%%%%%%%%%%%%%%%%%%%%%%%%%%%%%%%%%%%%%%%%
%%%%%%%%%%%%%%%%%%%%%%%%%%%%%%%%%%%%%%%%%%%%%%%%%%%%%%%%%%%%%%%%%%%%%%%%%%%%%%%%%%%%%%%%%%%%%%%%%%%%
\newpage
\section{Appendix}\label{sec:append}
%%%%%%%%%%%%%%%%%%%%%%%%%%%%%%%%%%%%%%%%%%%%%%%%%%%%%%%%%%%%%%%%%%%%%%%%%%%%%%%%%%%%%%%%%%%%%%%%%%%%
%%%%%%%%%%%%%%%%%%%%%%%%%%%%%%%%%%%%%%%%%%%%%%%%%%%%%%%%%%%%%%%%%%%%%%%%%%%%%%%%%%%%%%%%%%%%%%%%%%%%
%%%%%%%%%%%%%%%%%%%%%%%%%%%%%%%%%%%%%%%%%%%%%%%%%%%%%%%%%%%%%%%%%%%%%%%%%%%%%%%%%%%%%%%%%%%%%%%%%%%%

%%%%%%%%%%%%%%%%%%%%%%%%%%%%%%%%%%%%%%%%%%%%%%%%%%%%%%%%%%%%%%%%%%%%%%%%%%%%%%%%%%%%%%%%%%%%%%%%%%%%
%%%%%%%%%%%%%%%%%%%%%%%%%%%%%%%%%%%%%%%%%%%%%%%%%%%%%%%%%%%%%%%%%%%%%%%%%%%%%%%%%%%%%%%%%%%%%%%%%%%%

%%%%%%%%%%%%%%%%%%%%%%%%%%%%%%%%%%%%%%%%%%%%%%%%%%%%%%%%%%%%%%%%%%%%%%%%%%%%%%%%%%%%%%%%%%%%%%%%%%%%
%%%%%%%%%%%%%%%%%%%%%%%%%%%%%%%%%%%%%%%%%%%%%%%%%%%%%%%%%%%%%%%%%%%%%%%%%%%%%%%%%%%%%%%%%%%%%%%%%%%%
%\newpage
\subsection{Functional Base.}\label{sec:func-base}
%%%%%%%%%%%%%%%%%%%%%%%%%%%%%%%%%%%%%%%%%%%%%%%%%%%%%%%%%%%%%%%%%%%%%%%%%%%%%%%%%%%%%%%%%%%%%%%%%%%%
%%%%%%%%%%%%%%%%%%%%%%%%%%%%%%%%%%%%%%%%%%%%%%%%%%%%%%%%%%%%%%%%%%%%%%%%%%%%%%%%%%%%%%%%%%%%%%%%%%%%

\comment{
\begin{center}
{\begin{tabular}{c}
\setlength{\unitlength}{0.45pt}
\begin{picture}(360,10)(-185,30)
\put(-90,30){\begin{picture}(0,0)(0,0)
\put(120,60){\makebox(0,0){\scriptsize{$\mathrmbf{Alg}$}}}
\put(60,0){\makebox(0,0){\scriptsize{$\mathrmbf{Cls}$}}}
\put(80,35){\makebox(0,0){\scriptsize{$\mathrmbfit{cls}$}}}
\put(110,50){\vector(-1,-1){40}}
\end{picture}}
\put(90,-15){\begin{picture}(0,0)(0,0)
\put(120,60){\makebox(0,0){\scriptsize{$\mathrmbf{Oper}$}}}
\put(60,0){\makebox(0,0){\scriptsize{$\mathrmbf{Set}$}}}
\put(100,50){\oval(20,20)[l]}
\put(110,50){\oval(20,20)[br]}
\qbezier(100,40)(105,40)(110,40)
\put(120,55){\vector(0,1){0}}
\put(115,25){\makebox(0,0){\scriptsize{$\mathrmbfit{set}$}}}
\put(60,45){\makebox(0,0){\scriptsize{$\mathrmbfit{term}$}}}
\put(110,50){\vector(-1,-1){40}}
\end{picture}}
\put(56,83.5){\vector(4,-1){128}}
\put(-4,23.5){\vector(4,-1){128}}
\put(120,80){\makebox(0,0){\scriptsize{$\mathrmbfit{oper}$}}}
\put(60,20){\makebox(0,0){\scriptsize{$\mathrmbfit{set}$}}}
\put(295,30){\makebox(0,0){\shortstack{\scriptsize{\slshape{functional}}\\\scriptsize{\slshape{base}}}}}
\put(240,30){\makebox(0,0){$\left.\rule{0pt}{30pt}\right\}$}}
\end{picture}
\end{tabular}}
\end{center}
}

%%%%%%%%%%%%%%%%%%%%%%%%%%%%%%%%%%%%%%%%%%%%%%%%%%%%%%%%%%%%%%%%%%%%%%%%%%%%%%%%%%%%%%%%%%%%%%%%%%%%
%\newpage
\subsubsection{Linguistics/Formalism.}
%%%%%%%%%%%%%%%%%%%%%%%%%%%%%%%%%%%%%%%%%%%%%%%%%%%%%%%%%%%%%%%%%%%%%%%%%%%%%%%%%%%%%%%%%%%%%%%%%%%%

%%%%%%%%%%%%%%%%%%%%%%%%%%%%%%%%%%%%%%%%%%%%%%%%%%%%%%%%%%%%
\paragraph{\underline{Base Linguistics}: $\mathrmbf{Set}$.}
%\hfill
%$X_{2}\xrightarrow{f}X_{1}$ 
%%%%%%%%%%%%%%%%%%%%%%%%%%%%%%%%%%%%%%%%%%%%%%%%%%%%%%%%%%%%
%
\begin{picture}(0,0)(0,0)
\put(0,0){
\setlength{\unitlength}{0.45pt}
\begin{picture}(360,10)(-185,-60)
\put(-90,30){\begin{picture}(0,0)(0,0)
\put(120,60){\makebox(0,0){\scriptsize{$\mathrmbf{Alg}$}}}
\put(60,0){\makebox(0,0){\scriptsize{$\mathrmbf{Cls}$}}}
\put(76,35){\makebox(0,0){\scriptsize{$\mathrmbfit{cls}$}}}
\put(110,50){\vector(-1,-1){40}}
\end{picture}}
\put(90,-15){\begin{picture}(0,0)(0,0)
\put(120,60){\makebox(0,0){\scriptsize{$\mathrmbf{Oper}$}}}
\put(60,0){\makebox(0,0){\scriptsize{$\mathrmbf{Set}$}}}
\put(100,50){\oval(20,20)[l]}
\put(110,50){\oval(20,20)[br]}
\qbezier(100,40)(105,40)(110,40)
\put(120,55){\vector(0,1){0}}
\put(106,27){\makebox(0,0){\scriptsize{$\mathrmbfit{set}$}}}
\put(68,52){\makebox(0,0){\scriptsize{$\mathrmbfit{term}$}}}
\put(110,50){\vector(-1,-1){40}}
\end{picture}}
\put(56,83.5){\vector(4,-1){128}}
\put(-4,23.5){\vector(4,-1){128}}
\put(120,80){\makebox(0,0){\scriptsize{$\mathrmbfit{oper}$}}}
\put(60,20){\makebox(0,0){\scriptsize{$\mathrmbfit{set}$}}}
\put(295,30){\makebox(0,0){\shortstack{\scriptsize{\slshape{functional}}\\\scriptsize{\slshape{base}}}}}
\put(240,30){\makebox(0,0){$\left.\rule{0pt}{30pt}\right\}$}}
\end{picture}
}
\end{picture}
%
%%%%%%%%%%%%%%%%%%%%%%%%%%%%%%%%%%%%%%%%%%%%%%%%%%%%%%%%%%%%
%\paragraph{Lists.}
%%%%%%%%%%%%%%%%%%%%%%%%%%%%%%%%%%%%%%%%%%%%%%%%%%%%%%%%%%%%
%
A set (of entity types) $X$
defines
a mathematical context of type lists (signatures)
$\mathrmbf{List}(X) = (\mathrmbf{Set}{\downarrow}X)$.
%There are three novel elements in the {\ttfamily FOLE} approach to first-order logic.
%The base language uses the first of these: 
The {\ttfamily FOLE} uses type lists for relational arities, instead of ordinal numbers.

The first subcomponent of any linguistic component is a set of entity types (sorts) $X$.
Examples of entity types are
`human' representing the set of all human beings,
`blue' representing the set of all objects of color blue, etc. 
%Given a set of entity types (sorts) $X$,
A type list (signature) ${\langle{I,s}\rangle}$ 
consists of an arity set $I$ and a type map $I \xrightarrow{s} X$ mapping elements of the arity to entity types.
This can be denoted by the list notation
$(\ldots{s_{i}}\ldots)$
or the type declaration notation
$(\ldots{i{\,:}s_{i}}\ldots)$
for
$i{\,\in\,}I$ and $s_{i}{\,\in\,}X$.
For example,
the type list
`{\footnotesize{(make:{\bfseries String},model:{\bfseries String},year:{\bfseries Number},color:{\bfseries Color})}}'
is a type list for cars with 
valence 4,
arity set {\footnotesize{$\{\text{make},\text{model},\text{year},\text{color}\}$}}, and
type map
{\footnotesize{$\{\text{make}\mapsto\text{{\bfseries String}},\cdots\}$}}.
A type list morphism 
${\langle{I_{2},s_{2}}\rangle} \xrightarrow{h} {\langle{I_{1},s_{1}}\rangle}$
is an arity function $I_{2} \xrightarrow{h} I_{1}$
that satisfies the commutative diagram $h \cdot s_{1} = s_{2}$.
We say that 
$s_{2}$ is at least as general as $s_{1}$.
%Let $\mathrmbf{List}(X)$ symbolize the mathematical context of type lists.

Given the natural numbers $\aleph = \{0,1,\cdots \}$,
let $\underline{\aleph}$ denote 
the mathematical context of finite ordinals (number sets) $\underline{n}=\{0,1,\cdots,n{-}1\}$
and functions between them.
This is the skeleton of
the mathematical context $\mathrmbf{Fin}$ of finite sets and functions.
Both represent the single-sorted case where $X = \mathbf{1}$.
We have the following inclusion of base language mathematical contexts.
\footnote{We use the mathematical context 
$\overset{\scriptscriptstyle\ast}{\mathrmbf{List}}(X) = (\mathrmbf{Fin}{\downarrow}X)$
for type lists of finite arity.}
\[
\underset{\text{\shortstack{\rule{0pt}{10pt}skeleton}}}
{\underline{\aleph}}
\!\!\subseteq\;\; 
\underset{\text{\shortstack{\rule[2pt]{0pt}{10pt}single-sorted}}}
{\mathrmbf{Fin}}
\;\;\subseteq\;\; 
\underset{\text{\rule{0pt}{10pt}many-sorted}}
{\overset{\scriptscriptstyle\ast}{\mathrmbf{List}}(X)}
\]
Traditional first-order systems use the natural numbers $\aleph$ for indexing relations.
More flexible first-order systems, such as {\ttfamily FOLE} or relational database systems, 
use finite sets when single-sorted or type lists when many-sorted.

%%%%%%%%%%%%%%%%%%%%%%%%%%%%%%%%%%%%%%%%%%%%%%%%%%%%%%%%%%%%
\paragraph{\underline{Algebraic Linguistics}: $\mathrmbf{Oper}\xrightarrow{\mathrmbfit{set}}\mathrmbf{Set}$.}
%\hfill
%${\langle{X_{2},\Omega_{2}}\rangle}
%\xrightarrow{\langle{f,\omega}\rangle}
%{\langle{X_{1},\Omega_{2}}\rangle}$
%%%%%%%%%%%%%%%%%%%%%%%%%%%%%%%%%%%%%%%%%%%%%%%%%%%%%%%%%%%%

%%%%%%%%%%%%%%%%%%%%%%%%%%%%%%%%%%%%%%%%%%%%%%%%%%%%%%%%%%%%
%\paragraph{Operator Domains.}
%%%%%%%%%%%%%%%%%%%%%%%%%%%%%%%%%%%%%%%%%%%%%%%%%%%%%%%%%%%%

A functional language (operator domain) is a pair ${\langle{X,\Omega}\rangle}$,
where $X$ is a set of entity types (sorts) and $\Omega$ is an $X$-sorted operator domain;
that is,
$\Omega = \{ \Omega_{x,{\langle{I,s}\rangle}} \mid x \in X, {\langle{I,s}\rangle} \in \overset{\scriptscriptstyle\ast}{\mathrmbf{List}}(X) \}$
is a collection of sets of function (operator) symbols,
where $e \in \Omega_{x,{\langle{I,s}\rangle}}$ is 
a function symbol of entity type (sort) $x$
and finite arity ${\langle{I,s}\rangle}$,
\footnote{This is a slight misnomer,
since ${\langle{I,s}\rangle}$ is actually the signature of the function symbol.
whereas the arity of $e$ is the indexing set $I$
and the valence of $e$ is the cardinality $|I|$.}
symbolized by $x \xrightharpoondown{e} {\langle{I,s}\rangle}$.
An element $c \in {\Omega}_{x,{\langle{\emptyset,0_{X}}\rangle}}$ is called a constant symbol of sort $x$.
%
%%%%%%%%%%%%%%%%%%%%%%%%%%%%%%%%%%%%%%%%%%%%%%%%%%%%%%%%%%%%
%\paragraph{Terms.}
%%%%%%%%%%%%%%%%%%%%%%%%%%%%%%%%%%%%%%%%%%%%%%%%%%%%%%%%%%%%
%
Any operator domain ${\langle{X,\Omega}\rangle}$
defines a mathematical context of terms
$\mathrmbf{Term}_{{\langle{X,\Omega}\rangle}}$,
whose objects are $X$-signatures ${\langle{I,s}\rangle}$ and 
whose morphisms are term vectors ${\langle{I',s'}\rangle} \xrightharpoondown{t} {\langle{I,s}\rangle}$,
%Given $X$-signatures ${\langle{I',s'}\rangle}$ and ${\langle{I,s}\rangle}$,
%a vector of ${\langle{X,\Omega}\rangle}$-terms 
%${\langle{I',s'}\rangle} \xrightharpoondown{t} {\langle{I,s}\rangle}$
where 
$t = \{ {s'}_{i'} \xrightharpoondown{t_{i'}} {\langle{I,s}\rangle} \mid i' \in I' \}$
is an indexed collection (vector) of ${\langle{I,s}\rangle}$-ary terms.
Terms and term vectors are defined by mutual induction.

%%%%%%%%%%%%%%%%%%%%%%%%%%%%%%%%%%%%%%%%%%%%%%%%%%%%%%%%%%%%
%\paragraph{Operator Domain Morphisms.}
%%%%%%%%%%%%%%%%%%%%%%%%%%%%%%%%%%%%%%%%%%%%%%%%%%%%%%%%%%%%
%
A morphism of functional languages 
%many-sorted operator domains
is a pair
${\langle{X_{2},\Omega_{2}}\rangle} \xrightarrow{{\langle{f,\omega}\rangle}} {\langle{X_{1},\Omega_{1}}\rangle}$, 
where $X_{2} \xrightarrow{f} X_{1}$ is a function of entity types (sorts)
and
$\omega : \Omega_{2} \rightarrow \Omega_{1}$
%$\Omega_{2} \xrightarrow{\omega} {\Omega_{1}}_{{\langle{f,{\scriptscriptstyle\sum}_{f}}\rangle}}$
% = \mathrmbfit{oper}_{f}(\Omega_{1})$
%\mathrmbfit{oper}_{f}(\Omega_{1}) = {\Omega_{1}}_{{\langle{f,{\scriptscriptstyle\sum}_{f}}\rangle}}$
is a collection
$\{ 
{(\Omega_{2})}_{x_{2},{\langle{I_{2},s_{2}}\rangle}} \xrightarrow{\omega_{x_{2},{\langle{I_{2},s_{2}}\rangle}}}
{(\Omega_{1})}_{f(x_{2}),{{\scriptscriptstyle\sum}_{f}(I_{2},s_{2})}}
\mid x_{2} \in X_{2}, {\langle{I_{2},s_{2}}\rangle} \in \overset{\scriptscriptstyle\ast}{\mathrmbf{List}}(X_{2}) \}$ 
of maps between function symbol sets:
$\omega$ maps 
a function symbol 
$x_{2} \xrightharpoondown{e} {\langle{I_{2},s_{2}}\rangle}$
%( {s_{2}}_{i_{2}} \mid i_{2} \in I_{2} )$
in $\Omega_{2}$ to 
a function symbol 
$f(x_{2}) \xrightharpoondown{\omega(e)} 
{\scriptstyle\sum}_{f}(I_{2},s_{2})
={\langle{I_{2},s_{2}{\,\cdot\,}f}\rangle}$
%( f({s_{2}}_{i_{2}}) \mid i_{2} \in I_{2} )$
%in ${\Omega_{1}}_{{\langle{f,{\scriptscriptstyle\sum}_{f}}\rangle}}$,
%and hence
%%a function symbol 
%%$f(x_{2}) \xrightharpoondown{\omega(e)} ( f({s_{2}}_{i_{2}}) \mid i_{2} \in I_{2} )$
%in ${\Omega_{1}}_{{\langle{f,{\scriptscriptstyle\sum}_{f}}\rangle}}$.
in $\Omega_{1}$.
Given any morphism of functional languages
${\langle{X_{2},\Omega_{2}}\rangle} \xrightarrow{{\langle{f,\omega}\rangle}} {\langle{X_{1},\Omega_{1}}\rangle}$,
there is a term passage 
$\mathrmbf{Term}_{{\langle{X_{2},\Omega_{2}}\rangle}}
\xrightarrow{\mathrmbfit{term}_{{\langle{f,\omega}\rangle}}}
\mathrmbf{Term}_{{\langle{X_{1},\Omega_{1}}\rangle}}$ defined by
induction.
\comment{
\begin{center}
{\footnotesize{$\begin{array}{r@{\hspace{8pt}=\hspace{8pt}}l}
\mathrmbfit{term}_{{\langle{f,\omega}\rangle}}(I_{2},s_{2})
&
{\langle{I_{2},s_{2}{\,\cdot\,}f}\rangle}
\\
\mathrmbfit{term}_{{\langle{f,\omega}\rangle}}({\langle{I_{2},s_{2}}\rangle} \xrightharpoondown{t_{2}} {\langle{I'_{2},s'_{2}}\rangle})
&
\left({\langle{I_{2},s_{2}{\,\cdot\,}f}\rangle} \xrightharpoondown{\omega^{\ast}(t_{2})}{\langle{I'_{2},s'_{2}{\,\circ\,}f}\rangle}\right)
\end{array}$}}
\end{center}
}
Let $\mathrmbf{Oper}$ denote the mathematical context of functional languages (operator domains).

%%%%%%%%%%%%%%%%%%%%%%%%%%%%%%%%%%%%%%%%%%%%%%%%%%%%%%%%%%%%
\paragraph{\underline{Algebraic Formalism}.}
%: $\mathrmbf{Fmla}$.}
%%%%%%%%%%%%%%%%%%%%%%%%%%%%%%%%%%%%%%%%%%%%%%%%%%%%%%%%%%%%

%\marginpar{\scriptsize
%Just as a relational sentence is a sequent $\varphi{\;\vdash\;}\psi$, 
%so also an algebraic (functional) sentence is an equation $t{\;=\;}t'$.}

Let $\mathcal{O} = {\langle{X,\Omega}\rangle}$ be an operator domain.
%
%%%%%%%%%%%%%%%%%%%%%%%%%%%%%%%%%%%%%%%%%%%%%%%%%%%%%%%%%%%%
%\paragraph{Equations.}
%%%%%%%%%%%%%%%%%%%%%%%%%%%%%%%%%%%%%%%%%%%%%%%%%%%%%%%%%%%%
%
An $\mathcal{O}$-equation 
is 
a parallel pair of term vectors 
${\langle{I',s'}\rangle} \xrightharpoondown{t,t'} {\langle{I,s}\rangle}$.
%(having the same arity ${\langle{I,s}\rangle}$ and tupling ${\langle{I',s'}\rangle}$).
%a parallel pair of terms
%$x \xrightharpoondown{t,t'} {\langle{I,s}\rangle}$
%(having the same arity ${\langle{I,s}\rangle}$ and sort $x$).
We represent an equation using the traditional notation $(t{\,=\,}t')$.
%
%%%%%%%%%%%%%%%%%%%%%%%%%%%%%%%%%%%%%%%%%%%%%%%%%%%%%%%%%%%%
%\paragraph{Equational Presentations.}
%%%%%%%%%%%%%%%%%%%%%%%%%%%%%%%%%%%%%%%%%%%%%%%%%%%%%%%%%%%%
%
An equational presentation ${\langle{X,\Omega,E}\rangle}$ 
consists of an operator domain $\mathcal{O} = {\langle{X,\Omega}\rangle}$ 
and a set of $\mathcal{O}$-equations $E$.
A congruence is any equational presentation closed under left and right term composition.
Any equational presentation ${\langle{X,\Omega,E}\rangle}$ 
generates a congruence ${\langle{X,\Omega,E^{\scriptscriptstyle\bullet}}\rangle}$,
which defines a quotient mathematical context of terms
$\mathrmbf{Term}_{{\langle{X,\Omega,E}\rangle}}$
with a morphism 
${\langle{I',s'}\rangle} \xrightharpoondown{[t]} {\langle{I,s}\rangle}$
being an equivalence class of terms.
There is a canonical passage
$\mathrmbf{Term}_{{\langle{X,\Omega}\rangle}}\xrightarrow{[]}\mathrmbf{Term}_{{\langle{X,\Omega,E}\rangle}}$.
A morphism of equational presentations
${\langle{X_{2},\Omega_{2},E_{2}}\rangle} \xrightarrow{{\langle{f,\omega}\rangle}} {\langle{X_{1},\Omega_{1},E_{1}}\rangle}$
is a morphism of functional languages
${\langle{X_{2},\Omega_{2}}\rangle} \xrightarrow{{\langle{f,\omega}\rangle}} {\langle{X_{1},\Omega_{1}}\rangle}$
that preserves equations:
an $\mathcal{O}_{2}$-equation  
${\langle{I_{2}',s_{2}'}\rangle} \xrightharpoondown{t_{2},t_{2}'} {\langle{I_{2},s_{2}}\rangle}$
in $E_{2}$
is mapped 
%by the term passage 
%$\mathrmbf{Term}_{{\langle{X_{2},\Omega_{2}}\rangle}}
%\xrightarrow{\mathrmbfit{term}_{{\langle{f,\omega}\rangle}}}
%\mathrmbf{Term}_{{\langle{X_{1},\Omega_{1}}\rangle}}$
to an $\mathcal{O}_{1}$-equation  
${{\scriptstyle\sum}_{f}(I_{2}',s_{2}')}
\xrightharpoondown{\omega^{\ast}(t),\omega^{\ast}(t')} 
{{\scriptstyle\sum}_{f}(I_{2},s_{2})}$
in the congruence $E_{1}^{\scriptscriptstyle\bullet}$.
Hence,
there is a term passage
$\mathrmbf{Term}_{{\langle{X_{2},\Omega_{2},E_{2}}\rangle}}
\xrightarrow{\mathrmbfit{term}_{{\langle{f,\omega}\rangle}}}
\mathrmbf{Term}_{{\langle{X_{1},\Omega_{1},E_{1}}\rangle}}$
that commutes with canons.

%%%%%%%%%%%%%%%%%%%%%%%%%%%%%%%%%%%%%%%%%%%%%%%%%%%%%%%%%%%%%%%%%%%%%%%%%%%%%%%%%%%%%%%%%%%%%%%%%%%%
%\newpage
\subsubsection{Semantics.}
%%%%%%%%%%%%%%%%%%%%%%%%%%%%%%%%%%%%%%%%%%%%%%%%%%%%%%%%%%%%%%%%%%%%%%%%%%%%%%%%%%%%%%%%%%%%%%%%%%%%

%%%%%%%%%%%%%%%%%%%%%%%%%%%%%%%%%%%%%%%%%%%%%%%%%%%%%%%%%%%%
\paragraph{\underline{Base Semantics}: $\mathrmbf{Cls}\xrightarrow{\mathrmbfit{typ}}\mathrmbf{Set}$.}
%%%%%%%%%%%%%%%%%%%%%%%%%%%%%%%%%%%%%%%%%%%%%%%%%%%%%%%%%%%%

%%%%%%%%%%%%%%%%%%%%%%%%%%%%%%%%%%%%%%%%%%%%%%%%%%%%%%%%%%%%
%\paragraph{Classifications.}
%%%%%%%%%%%%%%%%%%%%%%%%%%%%%%%%%%%%%%%%%%%%%%%%%%%%%%%%%%%%

For any entity classification $\mathcal{E} = {\langle{X,Y,\models_{\mathcal{E}}}\rangle}$, 
there is a tuple passage
$\mathrmbf{List}(X)^{\mathrm{op}}\xrightarrow{\mathrmbfit{tup}_{\mathcal{E}}}\mathrmbf{Set}$
defined as the extent of the list classification $\mathrmbf{List}(\mathcal{E})$.
It maps a type list (signature) ${\langle{I,s}\rangle}\in\mathrmbf{List}(X)$ 
to its extent
$\mathrmbfit{tup}_{\mathcal{E}}(I,s) = \mathrmbfit{ext}_{\mathrmbf{List}(\mathcal{E})}(I,s) \subseteq \mathrmbf{List}(Y)$.
%
%%%%%%%%%%%%%%%%%%%%%%%%%%%%%%%%%%%%%%%%%%%%%%%%%%%%%%%%%%%%
%\paragraph{Infomorphism.}
%%%%%%%%%%%%%%%%%%%%%%%%%%%%%%%%%%%%%%%%%%%%%%%%%%%%%%%%%%%%
%
An entity infomorphism
${\langle{f,g}\rangle} : \mathcal{E}_{2} \rightleftarrows \mathcal{E}_{1}$
defines a bridge 
$\mathrmbfit{tup}_{\mathcal{E}_{2}}\stackrel{\tau_{{\langle{f,g}\rangle}}}{\Longleftarrow}
({\scriptstyle\sum}_{f})^{\mathrm{op}}\circ\mathrmbfit{tup}_{\mathcal{E}_{1}}$
between tuple passages.
For any source signature ${\langle{I_{2},s_{2}}\rangle} \in (\mathrmbf{Set}{\downarrow}X_{2})$,
the tuple function
$\tau_{{\langle{f,g}\rangle}}(I_{2},s_{2}) = {(\mbox{-})} \cdot g : 
\mathrmbfit{tup}_{\mathcal{E}_{1}}({\scriptstyle\sum}_{f}(I_{2},s_{2})) \rightarrow \mathrmbfit{tup}_{\mathcal{E}_{2}}(I_{2},s_{2})$
is define by composition.

\comment{
\begin{center}
\begin{tabular}{c}
\setlength{\unitlength}{0.6pt}
\begin{picture}(120,90)(0,7)
%\put(0,0){\framebox(120,80){}}
\put(0,80){\makebox(0,0){\footnotesize{$\mathrmbf{List}(X_{2})^{\mathrm{op}}$}}}
\put(124,80){\makebox(0,0){\footnotesize{$\mathrmbf{List}(X_{1})^{\mathrm{op}}$}}}
\put(60,5){\makebox(0,0){\footnotesize{$\mathrmbf{Set}$}}}
\put(60,92){\makebox(0,0){\scriptsize{$({\scriptstyle\sum}_{f})^{\mathrm{op}}$}}}
\put(24,38){\makebox(0,0)[r]{\scriptsize{$\mathrmbfit{tup}_{\mathcal{E}_{2}}$}}}
\put(97,38){\makebox(0,0)[l]{\scriptsize{$\mathrmbfit{tup}_{\mathcal{E}_{1}}$}}}
\put(60,54){\makebox(0,0){\shortstack{\scriptsize{$\tau_{{\langle{f,g}\rangle}}$}\\\large{$\Longleftarrow$}}}}
\put(40,80){\vector(1,0){40}}
\put(9,68){\vector(3,-4){40}}
\put(111,68){\vector(-3,-4){40}}
\end{picture}
\end{tabular}
\end{center}
}

%\comment{
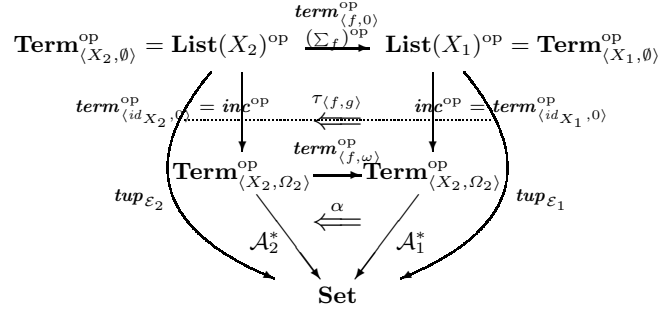
\begin{figure}
\begin{center}
\begin{tabular}{c}
\setlength{\unitlength}{0.6pt}
\begin{picture}(120,180)(0,-75)
\put(30,80){\makebox(0,0)[r]{\footnotesize{$
\mathrmbf{Term}_{{\langle{X_{2},\emptyset}\rangle}}^{\mathrm{op}}=\mathrmbf{List}(X_{2})^{\mathrm{op}}$}}}
\put(90,80){\makebox(0,0)[l]{\footnotesize{$\mathrmbf{List}(X_{1})^{\mathrm{op}}=\mathrmbf{Term}_{{\langle{X_{1},\emptyset}\rangle}}^{\mathrm{op}}$}}}
\put(0,0){\makebox(0,0){\footnotesize{$\mathrmbf{Term}_{{\langle{X_{2},\Omega_{2}}\rangle}}^{\mathrm{op}}$}}}
\put(120,0){\makebox(0,0){\footnotesize{$\mathrmbf{Term}_{{\langle{X_{2},\Omega_{2}}\rangle}}^{\mathrm{op}}$}}}
\put(60,95){\makebox(0,0){\scriptsize{$\overset{\textstyle{\mathrmbfit{term}_{{\langle{f,0}\rangle}}^{\mathrm{op}}}}{({\scriptscriptstyle\sum}_{f})^{\mathrm{op}}}$}}}
\put(60,15){\makebox(0,0){\scriptsize{$\mathrmbfit{term}_{{\langle{f,\omega}\rangle}}^{\mathrm{op}}$}}}
\put(18,40){\makebox(0,0)[r]{\scriptsize{$\mathrmbfit{term}_{{\langle{\mathrmit{id}_{X_{2}},0}\rangle}}^{\mathrm{op}} = \mathrmbfit{inc}^{\mathrm{op}}$}}}
\put(108,40){\makebox(0,0)[l]{\scriptsize{$\mathrmbfit{inc}^{\mathrm{op}}=\mathrmbfit{term}_{{\langle{\mathrmit{id}_{X_{1}},0}\rangle}}^{\mathrm{op}}$}}}
\put(40,80){\vector(1,0){40}}
\put(45,0){\vector(1,0){30}}
\put(0,65){\vector(0,-1){50}}
\put(120,65){\vector(0,-1){50}}
%%%%%%%%%%%%%%%%%%%%%%%%%%%%%%%%%%%%%%%%%%%%%%%%%%%%%%%%%%%%
\put(60,-75){\makebox(0,0){\footnotesize{$\mathrmbf{Set}$}}}
\put(24,-42){\makebox(0,0)[r]{\footnotesize{$\mathcal{A}^{\ast}_{2}$}}}
\put(97,-42){\makebox(0,0)[l]{\footnotesize{$\mathcal{A}^{\ast}_{1}$}}}
\put(60,-26){\makebox(0,0){\shortstack{\scriptsize{$\alpha$}\\\large{$\Longleftarrow$}}}}
\put(9,-12){\vector(3,-4){40}}
\put(111,-12){\vector(-3,-4){40}}
\qbezier(-20,65)(-90,-20)(20,-65)\put(20,-65){\vector(2,-1){0}}
\qbezier(140,65)(210,-20)(100,-65)\put(100,-65){\vector(-2,-1){0}}
\put(-48,-15){\makebox(0,0)[r]{\scriptsize{$\mathrmbfit{tup}_{\mathcal{E}_{2}}$}}}
\put(172,-15){\makebox(0,0)[l]{\scriptsize{$\mathrmbfit{tup}_{\mathcal{E}_{1}}$}}}
\put(60,40){\makebox(0,0){\shortstack{\scriptsize{$\tau_{{\langle{f,g}\rangle}}$}\\\large{$\Longleftarrow$}}}}
\qbezier[80](-36,35)(60,35)(156,35)
\end{picture}
\end{tabular}
\end{center}
\caption{Functional Base Interpretation}
\label{fig:func:interp}
\end{figure}
%}
%

%%%%%%%%%%%%%%%%%%%%%%%%%%%%%%%%%%%%%%%%%%%%%%%%%%%%%%%%%%%%
\paragraph{\underline{Algebraic Semantics}: 
$\mathrmbf{Cls}\xleftarrow{\mathrmbfit{cls}}\mathrmbf{Alg}\xrightarrow{\mathrmbfit{oper}}\mathrmbf{Oper}$.}
%%%%%%%%%%%%%%%%%%%%%%%%%%%%%%%%%%%%%%%%%%%%%%%%%%%%%%%%%%%%

%%%%%%%%%%%%%%%%%%%%%%%%%%%%%%%%%%%%%%%%%%%%%%%%%%%%%%%%%%%%
%\paragraph{Algebras.}
%%%%%%%%%%%%%%%%%%%%%%%%%%%%%%%%%%%%%%%%%%%%%%%%%%%%%%%%%%%%

A many-sorted algebra $\mathcal{A} = {\langle{\mathcal{E},\mathcal{O},{\langle{A,\delta}\rangle}}\rangle}$
consists of 
an entity classification 
$\mathcal{E} = {\langle{X,Y,\models_{\mathcal{E}}}\rangle}$,
an operator domain $\mathcal{O} = {\langle{X,\Omega}\rangle}$,
and
an $\mathcal{O}$-algebra ${\langle{A,\delta}\rangle}$
compatible with $\mathcal{E}$,
where $A = \{ A_{x} \mid x \in X \}$ is an $X$-sorted set
and $\delta$ assigns an ${\langle{I,s}\rangle}$-ary $x$-sorted function (operation) 
$A_{x} \xleftarrow{\delta_{e}} A^{{\langle{I,s}\rangle}}$
to each function symbol 
$x \xrightharpoondown{e} {\langle{I,s}\rangle}$
%$\Omega \in \Omega_{x,{\langle{I,s}\rangle}}$
with 
$
%\mathcal{A}^{\ast}(I,s) =
A^{{\langle{I,s}\rangle}} = \prod_{i\in{I}} A_{s_{i}}$ the product set.
A many-sorted algebra $\mathcal{A} = {\langle{\mathcal{E},\mathcal{O},{\langle{A,\delta}\rangle}}\rangle}$
defines (by induction) an algebraic interpretation passage
%$\mathrmbf{Term}_{{\langle{X,\Omega}\rangle}}^{\mathrm{op}}\xrightarrow{\mathrmbfit{A}_{{\langle{X,\Omega}\rangle}}}\mathrmbf{Set}$.
$\mathrmbf{Term}_{{\langle{X,\Omega}\rangle}}^{\mathrm{op}}\xrightarrow{\mathcal{A}^{\ast}}\mathrmbf{Set}$,
which extends the tuple passage
$\mathrmbfit{tup}_{\mathcal{E}} = \mathrmbfit{inc}^{\mathrm{op}}{\;\circ\;}\mathcal{A}^{\ast}$
by compatibility.
%\footnote{
An algebra $\mathcal{A}$ satisfies an equation $(t{\,=\,}t')$,
%${\langle{I',s'}\rangle} \xrightharpoondown{(t=t')} {\langle{I,s}\rangle}$,
symbolized by $\mathcal{A}{\;\models\;}(t=t')$,
%$\mathcal{A}{\;\models_{\mathcal{O}}\;}(t=t')$,
when the interpretation 
%$\mathrmbf{Term}_{{\langle{X,\Omega}\rangle}}^{\mathrm{op}}\xrightarrow{\mathcal{A}^{\ast}}\mathrmbf{Set}$
maps the terms to the same function $\mathcal{A}^{\ast}(t)=\mathcal{A}^{\ast}(t')$.
%$\mathcal{A}^{\ast}(I',s')\xleftarrow{\mathcal{A}^{\ast}(t)=\mathcal{A}^{\ast}(t')}\mathcal{A}^{\ast}(I,s)$.
%}
%
%%%%%%%%%%%%%%%%%%%%%%%%%%%%%%%%%%%%%%%%%%%%%%%%%%%%%%%%%%%%
%\paragraph{Algebraic Homomorphisms.}
%%%%%%%%%%%%%%%%%%%%%%%%%%%%%%%%%%%%%%%%%%%%%%%%%%%%%%%%%%%%
%
A many-sorted algebraic homomorphism
$\mathcal{A}_{2} = {\langle{\mathcal{E}_{2},\mathcal{O}_{2},{\langle{A_{2},\delta_{2}}\rangle}}\rangle}
	\xrightarrow{\langle{f,g,\omega,h}\rangle}
{\langle{\mathcal{E}_{1},\mathcal{O}_{1},{\langle{A_{1},\delta_{1}}\rangle}}\rangle} = \mathcal{A}_{1}$
consists of 
an entity infomorphism
${\langle{f,g}\rangle} : 
\mathcal{E}_{2}
% = {\langle{X_{2},Y_{2},\models_{\mathcal{E}_{2}}}\rangle} 
\rightleftarrows 
%{\langle{X_{1},Y_{1},\models_{\mathcal{E}_{1}}}\rangle} = 
\mathcal{E}_{1}$,
a morphism of many-sorted operator domains
${{\langle{f,\omega}\rangle}} : \mathcal{O}_{2} \rightarrow \mathcal{O}_{1}$, and
an
$\mathcal{O}_{2}$-algebra morphism 
${\langle{A_{2},\delta_{2}}\rangle} \xleftarrow{h} \mathrmbfit{alg}_{{\langle{f,\omega}\rangle}}(A_{1},\delta_{1})$
compatible with ${\langle{f,g}\rangle}$.
A many-sorted algebraic homomorphism
$\mathcal{A}_{2}
% = {\langle{\mathcal{E}_{2},\mathcal{O}_{2},{\langle{A_{2},\delta_{2}}\rangle}}\rangle}
	\xrightarrow{\langle{f,g,\omega,h}\rangle}
%{\langle{\mathcal{E}_{1},\mathcal{O}_{1},{\langle{A_{1},\delta_{1}}\rangle}}\rangle} = 
\mathcal{A}_{1}$
defines an algebraic bridge 
$\mathcal{A}^{\ast}_{2}
\stackrel{\alpha}{\Longleftarrow}
{\mathrmbfit{term}_{{\langle{f,\omega}\rangle}}}^{\mathrm{op}}{\;\circ\;}\mathcal{A}^{\ast}_{1}$
between algebraic interpretations,
which extends the tuple bridge
$\tau_{{\langle{f,g}\rangle}} = \mathrmbfit{inc}^{\mathrm{op}}{\;\circ\;}\alpha$
by compatibility.
Let $\mathrmbf{Alg}$ denote the mathematical context of many-sorted algebras.
%\footnote{
(The base semantics embeds into the functional semantics Fig.~\ref{fig:func:interp}.)
\vspace{3pt}

%%%%%%%%%%%%%%%%%%%%%%%%%%%%%%%%%%%%%%%%%%%%%%%%%%%%%%%%%%%%%%%%%%%%%%%%%%%%%%%%%%%%%%%%%%%%%%%%%%%%
%%%%%%%%%%%%%%%%%%%%%%%%%%%%%%%%%%%%%%%%%%%%%%%%%%%%%%%%%%%%%%%%%%%%%%%%%%%%%%%%%%%%%%%%%%%%%%%%%%%%

%%%%%%%%%%%%%%%%%%%%%%%%%%%%%%%%%%%%%%%%%%%%%%%%%%%%%%%%%%%%%%%%%%%%%%%%%%%%%%%%%%%%%%%%%%%%%%%%%%%%
%%%%%%%%%%%%%%%%%%%%%%%%%%%%%%%%%%%%%%%%%%%%%%%%%%%%%%%%%%%%%%%%%%%%%%%%%%%%%%%%%%%%%%%%%%%%%%%%%%%%
%\newpage
\subsection{Relational Superstructure.}\label{sec:rel-sup}
%%%%%%%%%%%%%%%%%%%%%%%%%%%%%%%%%%%%%%%%%%%%%%%%%%%%%%%%%%%%%%%%%%%%%%%%%%%%%%%%%%%%%%%%%%%%%%%%%%%%
%%%%%%%%%%%%%%%%%%%%%%%%%%%%%%%%%%%%%%%%%%%%%%%%%%%%%%%%%%%%%%%%%%%%%%%%%%%%%%%%%%%%%%%%%%%%%%%%%%%%

%%%%%%%%%%%%%%%%%%%%%%%%%%%%%%%%%%%%%%%%%%%%%%%%%%%%%%%%%%%%%%%%%%%%%%%%%%%%%%%%%%%%%%%%%%%%%%%%%%%%
%\newpage
\subsubsection{Linguistics/Formalism.}\label{sec:ling:fml}
%%%%%%%%%%%%%%%%%%%%%%%%%%%%%%%%%%%%%%%%%%%%%%%%%%%%%%%%%%%%%%%%%%%%%%%%%%%%%%%%%%%%%%%%%%%%%%%%%%%%

%%%%%%%%%%%%%%%%%%%%%%%%%%%%%%%%%%%%%%%%%%%%%%%%%%%%%%%%%%%%
\paragraph{\underline{Relational Linguistics}: $\mathrmbf{Sch}$.}
%%%%%%%%%%%%%%%%%%%%%%%%%%%%%%%%%%%%%%%%%%%%%%%%%%%%%%%%%%%%

%
%\begin{center}
%{\begin{tabular}{c}
\begin{picture}(0,0)(0,0)
\put(0,0){
\setlength{\unitlength}{0.45pt}
\begin{picture}(360,40)(-330,27)
\put(-90,30){\begin{picture}(0,0)(0,0)
\put(60,120){\makebox(0,0){\scriptsize{$\mathrmbf{Struc}$}}}
\put(0,60){\makebox(0,0){\scriptsize{$\mathrmbf{Rel}$}}}
\put(27,97){\makebox(0,0)[r]{\scriptsize{$\mathrmbfit{rel}$}}}
\put(50,110){\vector(-1,-1){40}}
\put(-20,50){\oval(20,20)[l]}
\put(-10,50){\oval(20,20)[br]}
\qbezier(-20,40)(-15,40)(-10,40)
\put(0,55){\vector(0,1){0}}
\put(-15,28){\makebox(0,0){\scriptsize{$\mathrmbfit{fmla}$}}}
\end{picture}}
\put(90,-15){\begin{picture}(0,0)(0,0)
\put(60,120){\makebox(0,0){\scriptsize{$\mathrmbf{Lang}$}}}
\put(0,60){\makebox(0,0){\scriptsize{$\mathrmbf{Sch}$}}}
\put(60,88){\makebox(0,0)[r]{\scriptsize{$\mathrmbfit{sch}$}}}
\put(50,110){\vector(-1,-1){40}}
\put(-20,50){\oval(20,20)[l]}
\put(-10,50){\oval(20,20)[br]}
\qbezier(-20,40)(-15,40)(-10,40)
\put(0,55){\vector(0,1){0}}
\put(-15,28){\makebox(0,0){\scriptsize{$\mathrmbfit{fmla}$}}}
\end{picture}}
\put(-4,143.5){\vector(4,-1){128}}
\put(-64,83.5){\vector(4,-1){128}}
\put(65,140){\makebox(0,0){\scriptsize{$\mathrmbfit{lang}$}}}
\put(8,79){\makebox(0,0){\scriptsize{$\mathrmbfit{sch}$}}}
\put(-200,110){\makebox(0,0){\shortstack{\scriptsize{\slshape{relational}}\\\scriptsize{\slshape{superstructure}}}}}
\put(-135,110){\makebox(0,0){$\left\{\rule{0pt}{30pt}\right.$}}
\end{picture}}
\end{picture}
%\end{tabular}}
%\end{center}
%

%%%%%%%%%%%%%%%%%%%%%%%%%%%%%%%%%%%%%%%%%%%%%%%%%%%%%%%%%%%%
\paragraph{Schemas.}
%%%%%%%%%%%%%%%%%%%%%%%%%%%%%%%%%%%%%%%%%%%%%%%%%%%%%%%%%%%%

A relational language (schema)
$\mathcal{S} = {\langle{R,\sigma,X}\rangle}$
has two components:
a base and a superstructure built upon the base.
The base consists of 
a set of entity types (sorts) $X$,
which defines the type list mathematical context
$\mathrmbf{List}(X)$.
The superstructure
consists of 
a set of relation types (symbols) $R$ and
a (discrete) type list passage $R \xrightarrow{\sigma} \mathrmbf{List}(X)$
mapping a relation symbol $r \in R$ to its type list
$\sigma(r) = {\langle{I,s}\rangle}$.
A relational language (schema) morphism  
$\mathcal{S}_{2} = {\langle{R_{2},\sigma_{2},X_{2}}\rangle} \stackrel{{\langle{r,f}\rangle}}{\Longrightarrow}
{\langle{R_{1},\sigma_{1},X_{1}}\rangle} = \mathcal{S}_{1}$
also has two components:
a base and a superstructure built upon the base.
The base consists of 
an entity type (sort) function $f : X_{2} \rightarrow X_{1}$,
which defines the type list passage
$\mathrmbf{List}(X_{2})\xrightarrow{{\scriptscriptstyle\sum}_{f}}\mathrmbf{List}(X_{1})$
mapping a type list
$(\ldots{s_{i_{2}}}\ldots)$
%${\langle{I_{2},s_{2}}\rangle}\in\mathrmbf{List}(X_{2})$
to the type list
$(\ldots{f(s_{i_{2}})}\ldots)$.
%${\langle{I_{2},s_{2}{\cdot}f}\rangle}\in\mathrmbf{List}(X_{1})$.
The superstructure consists of 
a relation type function   $r : R_{2} \rightarrow R_{1}$ 
which preserves type lists, 
satisfying the condition
$r \cdot \sigma_{1} = \sigma_{2} \cdot {\scriptstyle\sum}_{f}$. 
Let $\mathrmbf{Sch}$ symbolize the mathematical context of relational languages (schemas)
with type set projection passage $\mathrmbf{Sch}\xrightarrow{\mathrmbfit{set}}\mathrmbf{Set}$.

%%%%%%%%%%%%%%%%%%%%%%%%%%%%%%%%%%%%%%%%%%%%%%%%%%%%%%%%%%%%
\paragraph{Formulas.}
%%%%%%%%%%%%%%%%%%%%%%%%%%%%%%%%%%%%%%%%%%%%%%%%%%%%%%%%%%%%

For any type list ${\langle{I,s}\rangle}$,
let $R(I,s) \subseteq R$ denote the set of all relation types with this type list.
These are called ${\langle{I,s}\rangle}$-ary relation symbols.
Formulas form a schema 
$\mathrmbfit{fmla}(\mathcal{S}) = {\langle{\widehat{R},\widehat{\sigma},X}\rangle}$ 
that extends $\mathcal{S}$:
with inductive definitions,
the set of relation types 
is extended
to a set of logical formulas $\widehat{R}$ and
the relational type list function 
is extended
to a type list function $\widehat{R} \xrightarrow{\widehat{\sigma}} \mathrmbf{List}(X)$.
For any type list ${\langle{I,s}\rangle}$,
let $\widehat{R}(I,s) \subseteq \widehat{R}$ 
denote the set of all formulas with this type list.
These are called ${\langle{I,s}\rangle}$-ary formulas.
Formulas are constructed by using logical connectives within a fiber and logical flow between fibers.
\begin{itemize}
{\footnotesize{
\item[\textsf{fiber:}]
Let ${\langle{I,s}\rangle}$ be any type list.
Any ${\langle{I,s}\rangle}$-ary relation symbol is an (atomic) ${\langle{I,s}\rangle}$-ary formula;
that is, $R(I,s) \subseteq \widehat{R}(I,s)$.
For any pair of ${\langle{I,s}\rangle}$-ary formulas $\varphi$ and $\psi$, 
%\newpage
there are the following ${\langle{I,s}\rangle}$-ary formulas:
meet $(\varphi{\,\wedge\,}\psi)$, join $(\varphi{\,\vee\,}\psi)$,
implication $(\varphi{\,\rightarrowtriangle\,}\psi)$ and 
difference $(\varphi{\,\setminus\,}\psi)$. 
For any ${\langle{I,s}\rangle}$-ary formula $\varphi$,
there is an ${\langle{I,s}\rangle}$-ary negation formula $(\neg\varphi)$.
\item[\textsf{flow:}]
Let ${\langle{I',s'}\rangle} \xrightarrow{h} {\langle{I,s}\rangle}$ be any type list morphism.
For any ${\langle{I,s}\rangle}$-ary formula $\varphi$,
there are ${\langle{I',s'}\rangle}$-ary existentially/universally quantified formulas
${\scriptstyle\sum}_{t}(\varphi)$ and ${\scriptstyle\prod}_{t}(\varphi)$.
For any ${\langle{I',s'}\rangle}$-ary formula $\varphi'$,
there is a ${\langle{I,s}\rangle}$-ary substitution formula
${t}^{\ast}(\varphi') = \varphi'(t)$.
}}
\end{itemize}
%

%\footnote{
%

%
\begin{table}
\begin{center}
\begin{tabular}[t]{r@{\hspace{10pt}}l}
\shortstack{\scriptsize{formula flow}\\\scriptsize{logical aspect}}
&
$\left\{\;\;\;\;\;\;\text{
{\scriptsize{$\begin{array}{r@{\hspace{10pt}}r@{\hspace{5pt}}c@{\hspace{5pt}}l@{\hspace{-5pt}}l}
%\hline
\text{term vector} &
{\langle{I',s'}\rangle} & \xrightharpoondown{t} & {\langle{I,s}\rangle}
&
\text{ in }
\mathrmbf{Term}_{{\langle{X,\Omega}\rangle}}
\\
\text{operation} &
\mathcal{A}^{\ast}(I',s') & \xleftarrow{\mathcal{A}^{\ast}(t)} & \mathcal{A}^{\ast}(I,s)
&
\\
\text{inverse image} &
\mathrmbf{Rel}_{\mathcal{A}}(I',s') & \xrightarrow{\;\;{t}^{\ast}\;} & \mathrmbf{Rel}_{\mathcal{A}}(I,s)
&
\\
\text{quantification} &
\mathrmbf{Rel}_{\mathcal{A}}(I',s') 
& \xleftarrow[\;\;{\scriptstyle\forall}_{t}\;]{\;\;{\scriptstyle\exists}_{t}\;}
& \mathrmbf{Rel}_{\mathcal{A}}(I,s)
&
\\
%\hline
\end{array}$}}}
\right.$
\\
& \\
\multicolumn{2}{c}{\mbox{}\hspace{110pt}{\Large{$\Uparrow\;\;$}}\shortstack{\scriptsize{functional}\\\scriptsize{aspect}}}
\\
& \\
\shortstack{\scriptsize{formula flow}\\\scriptsize{relational aspect}}
&
$\left\{\text{
{\scriptsize{$\begin{array}{r@{\hspace{10pt}}r@{\hspace{5pt}}c@{\hspace{5pt}}l@{\hspace{-5pt}}l}
%\hline
\text{type list morphism} &
{\langle{I',s'}\rangle} & \xrightarrow{h} & {\langle{I,s}\rangle}
&
\text{ in }
\mathrmbf{List}(X)=\mathrmbf{Term}_{{\langle{X,\emptyset}\rangle}}
\\
\text{tuple map} &
\mathrmbfit{tup}_{\mathcal{E}}(I',s') & \xleftarrow{\mathrmbfit{tup}_{\mathcal{E}}(h)} & \mathrmbfit{tup}_{\mathcal{E}}(I,s)
&
\\
\text{inverse image} &
\mathrmbf{Rel}_{\mathcal{E}}(I',s') & \xrightarrow{\;\;{h}^{\ast}\;} & \mathrmbf{Rel}_{\mathcal{E}}(I,s)
&
\\
\text{quantification} &
\mathrmbf{Rel}_{\mathcal{E}}(I',s') 
& \xleftarrow[\;\;{\scriptstyle\forall}_{h}\;]{\;\;{\scriptstyle\exists}_{h}\;}
& \mathrmbf{Rel}_{\mathcal{E}}(I,s)
&
\\
%\hline
\end{array}$}}}
\right.$
\\ & \\
%\multicolumn{2}{c}{$\mathrmbf{Rel}_{\mathcal{E}}(I,s)={\wp}\mathrmbfit{tup}_{\mathcal{E}}(I,s)$}
\multicolumn{2}{c}{{\scriptsize{\begin{tabular}{p{280pt}}
When the relational aspect is lifted along the functional aspect to the first-order aspect
(Fig.~\ref{fbr:arch} of Section~\ref{sec:arch}),
formula flow is lifted 
from being along type list morphisms 
${\langle{I',s'}\rangle}\xrightarrow{h}{\langle{I,s}\rangle}$
to being along term vectors
${\langle{I',s'}\rangle}\xrightharpoondown{t}{\langle{I,s}\rangle}$.
This holds for 
formula definition (above),
formula function definition (Table~\ref{tbl:fmla:fn}),
formula axiomatization (Table~\ref{tbl:axioms}),
formula classification definition (Table~\ref{tbl:fmla:cls}),
satisfaction (Table~\ref{satisfaction}),
transformation to databases (Appendix~\ref{append:cls2db}),
etc.
\end{tabular}}}}
\end{tabular}
\end{center}
\caption{Lifting Flow}
\label{lift:flow}
\end{table}
%}

%%%%%%%%%%%%%%%%%%%%%%%%%%%%%%%%%%%%%%%%%%%%%%%%%%%%%%%%%%%%
\paragraph{Formula Fiber Passage.}
%%%%%%%%%%%%%%%%%%%%%%%%%%%%%%%%%%%%%%%%%%%%%%%%%%%%%%%%%%%%

A schema morphism 
$\mathcal{S}_{2}\stackrel{{\langle{r,f}\rangle}}{\Longrightarrow}\mathcal{S}_{1}$
can be extended to a formula schema morphism 
$\mathrmbfit{fmla}(r,f) = {\langle{\hat{r},f}\rangle} :
\mathrmbfit{fmla}(\mathcal{S}_{2}) = {\langle{\widehat{R}_{2},\hat{\sigma}_{2},X_{2}}\rangle} \Longrightarrow
{\langle{\widehat{R}_{1},\hat{\sigma}_{1},X_{1}}\rangle} = \mathrmbfit{fmla}(\mathcal{S}_{1})$.
The formula function $\hat{r} : \widehat{R}_{2} \rightarrow \widehat{R}_{1}$,
which satisfies the condition
${inc}_{\mathcal{S}_{2}} \cdot \hat{r} = r \cdot {inc}_{\mathcal{S}_{1}}$,
is recursively defined in Table~\ref{tbl:fmla:fn}.
\begin{table}
\begin{center}
{\scriptsize{\setlength{\extrarowheight}{2pt}\begin{tabular}{|r@{\hspace{20pt}}l@{\hspace{10pt}$=$\hspace{10pt}}l|}
\multicolumn{3}{l}{\textsf{fiber:} type list ${\langle{I_{2},s_{2}}\rangle}$}
\\ \hline
\textit{operator} & \multicolumn{1}{l}{} & 
\\
relation
& $\hat{r}(r_{2})$
& $r(r_{2})$
\\
meet
& $\hat{r}(\varphi_{2}{\,\wedge_{{\langle{I_{2},s_{2}}\rangle}}\,}\psi_{2})$
& $(\hat{r}(\varphi_{2}){\,\wedge_{{\scriptscriptstyle\sum}_{f}(I_{2},s_{2})}\,}\hat{r}(\psi_{2}))$
\\
join
& $\hat{r}(\varphi_{2}{\,\vee_{{\langle{I_{2},s_{2}}\rangle}}\,}\psi_{2})$
& $(\hat{r}(\varphi_{2}){\,\vee_{{\scriptscriptstyle\sum}_{f}(I_{2},s_{2})}\,}\hat{r}(\psi_{2}))$
\\
negation
& $\hat{r}(\neg_{{\langle{I_{2},s_{2}}\rangle}}\,\varphi)$
& $\neg_{{\scriptscriptstyle\sum}_{f}(I_{2},s_{2})}\,\hat{r}(\varphi)$
\\
implication
& $\hat{r}(\varphi{\,\rightarrowtriangle_{{\langle{I_{2},s_{2}}\rangle}}\,}\psi)$
& $\hat{r}(\varphi){\,\rightarrowtriangle_{{\scriptscriptstyle\sum}_{f}(I_{2},s_{2})}\,}\hat{r}(\psi)$
\\
difference
& $\hat{r}(\varphi{\,\setminus_{{\langle{I_{2},s_{2}}\rangle}}\,}\psi)$
& $\hat{r}(\varphi){\,\setminus_{{\scriptscriptstyle\sum}_{f}(I_{2},s_{2})}\,}\hat{r}(\psi)$
\\ \hline
\multicolumn{3}{l}{}
\\
\multicolumn{3}{l}{\textsf{flow:} type list morphism 
${\langle{I_{2}',s_{2}'}\rangle} \xrightarrow{h} {\langle{I_{2},s_{2}}\rangle}$}
\\ \hline
\textit{operator} & \multicolumn{1}{l}{} & 
\\
existential
& $\hat{r}({\scriptstyle\sum}_{h}(\varphi_{2}))$
& ${\scriptstyle\sum}_{h}(\hat{r}(\varphi_{2}))$ 
\\
universal
& $\hat{r}({\scriptstyle\prod}_{h}(\varphi_{2}))$
& ${\scriptstyle\prod}_{h}(\hat{r}(\varphi_{2}))$ 
\\
substitution
& $\hat{r}({h}^{\ast}(\varphi_{2}'))$
& ${h}^{\ast}(\hat{r}(\varphi_{2}'))$ 
\\ \hline
\end{tabular}}}
\end{center}
\caption{Formula Function}
\label{tbl:fmla:fn}
\end{table}
%
%Between any schema and its formula extension is an embedding schema morphism
%$\eta_{\mathcal{S}} = {\langle{{inc}_{\mathcal{S}},{1}_{X}}\rangle} :
%\mathcal{S} \Longrightarrow \mathrmbfit{fmla}(\mathcal{S})$.
%
%The formula extension schema morphism 
%lifts the relational schema morphism 
%by commuting with the embedding schema morphisms
%$\eta_{\mathcal{S}_{2}}{\,\circ\,}\mathrmbfit{fmla}(r,f) = 
%{\langle{r,f}\rangle}{\,\circ\,}\eta_{\mathcal{S}_{1}}$.
%
\begin{proposition}
There is an idempotent formula passage
$\mathrmbfit{fmla} : \mathrmbf{Sch} \rightarrow \mathrmbf{Sch}$
that forms a monad ${\langle{\mathrmbf{Sch},\eta,\mathrmbfit{fmla}}\rangle}$ with embedding.
\end{proposition}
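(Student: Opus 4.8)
The plan is to read $\mathrmbfit{fmla}$ as a closure operator on schemas --- a schema being ``closed'' when it already equals its own formula schema --- so that the asserted monad is exactly the idempotent monad of the resulting reflective situation, with unit the inclusion $\eta_{\mathcal{S}}={inc}_{\mathcal{S}}:\mathcal{S}\hookrightarrow\mathrmbfit{fmla}(\mathcal{S})$ of atomic formulas and multiplication forced to be invertible. First I would verify that $\mathrmbfit{fmla}$ is a passage $\mathrmbf{Sch}\to\mathrmbf{Sch}$: on objects it sends $\mathcal{S}={\langle{R,\sigma,X}\rangle}$ to ${\langle{\widehat{R},\widehat{\sigma},X}\rangle}$, and on a schema morphism ${\langle{r,f}\rangle}$ to ${\langle{\widehat{r},f}\rangle}$ with $\widehat{r}$ given by the recursion of Table~\ref{tbl:fmla:fn}. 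Two structural inductions over $\varphi_{2}\in\widehat{R}_{2}$ are needed here: one showing $\widehat{r}\cdot\widehat{\sigma}_{1}=\widehat{\sigma}_{2}\cdot{\scriptstyle\sum}_{f}$, so that ${\langle{\widehat{r},f}\rangle}$ really is a schema morphism; and one showing that the recursive clauses are functorial, i.e. $\widehat{1_{\mathcal{S}}}=1_{\mathrmbfit{fmla}(\mathcal{S})}$ and $\widehat{{\langle{r,f}\rangle}\circ{\langle{r',f'}\rangle}}=\widehat{{\langle{r,f}\rangle}}\circ\widehat{{\langle{r',f'}\rangle}}$. Both reduce to matching Table~\ref{tbl:fmla:fn} clause by clause, being careful that the flow clauses (existential, universal, substitution) shift type lists along the relevant type list morphism.

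Next I would treat the unit. For each $\mathcal{S}$ the inclusion of atoms gives ${inc}_{\mathcal{S}}={\langle{{inc}_{R},1_{X}}\rangle}:\mathcal{S}\hookrightarrow\mathrmbfit{fmla}(\mathcal{S})$, which is an embedding because in the inductively generated set $\widehat{R}$ the atoms are pairwise distinct and distinct from every compound constructor application; naturality of $\eta$ is precisely the identity ${inc}_{\mathcal{S}_{2}}\cdot\widehat{r}=r\cdot{inc}_{\mathcal{S}_{1}}$ already recorded in the text, which is the relation clause of the recursion for $\widehat{r}$. Thus $\eta:1_{\mathrmbf{Sch}}\Rightarrow\mathrmbfit{fmla}$ is a pointwise-monic natural transformation --- the ``embedding'' of the statement.

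The core point is idempotency, and the key observation is that the logical constructors (the fiber operations meet, join, negation, implication, difference, and the flow operations ${\scriptstyle\sum}_{h},{\scriptstyle\prod}_{h},h^{\ast}$) are operations on formulas that do not depend on the ambient schema --- the schema only fixes which atoms are available. Hence $\widehat{R}$ is already closed under all of them, so $\mathrmbfit{fmla}(\mathrmbfit{fmla}(\mathcal{S}))$, whose formula set is the least set containing $\widehat{R}$ and closed under these constructors (with $\widehat{\widehat{\sigma}}$ restricting to $\widehat{\sigma}$), is $\mathrmbfit{fmla}(\mathcal{S})$ again. Running Table~\ref{tbl:fmla:fn} on ${inc}_{\mathcal{S}}$ gives $\widehat{{inc}_{R}}=1_{\widehat{R}}$, so after the identification $\mathrmbfit{fmla}\circ\mathrmbfit{fmla}=\mathrmbfit{fmla}$ one has $\mathrmbfit{fmla}\,\eta=1_{\mathrmbfit{fmla}}=\eta\,\mathrmbfit{fmla}$, and the monad multiplication $\mu$ is the identity; equivalently, without collapsing notation, $\mu_{\mathcal{S}}:\mathrmbfit{fmla}^{2}(\mathcal{S})\to\mathrmbfit{fmla}(\mathcal{S})$ evaluates each outer constructor inside $\widehat{R}$, and one structural induction shows $\mu$ is natural, splits both $\eta\,\mathrmbfit{fmla}$ and $\mathrmbfit{fmla}\,\eta$, and is an isomorphism. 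The monad laws ($\mu\circ\eta\mathrmbfit{fmla}=1=\mu\circ\mathrmbfit{fmla}\eta$ and $\mu\circ\mathrmbfit{fmla}\mu=\mu\circ\mu\mathrmbfit{fmla}$) then follow formally, and ${\langle{\mathrmbf{Sch},\eta,\mathrmbfit{fmla}}\rangle}$ is idempotent because $\mu$ is invertible.

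The step I expect to be the main obstacle is making idempotency rigorous rather than merely plausible: one must state precisely in what sense the formula constructors are schema-independent, so that re-closing the already-closed set $\widehat{R}$ provably introduces no spurious second syntactic layer, and one must carry the bookkeeping for $\widehat{\sigma}$ and $\widehat{r}$ uniformly through the flow constructors, whose type lists move along $h$. Everything else is routine induction against the clauses of Table~\ref{tbl:fmla:fn}.
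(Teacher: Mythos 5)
The paper states this proposition without any proof at all (and likewise for its twin, the idempotent formula passage on $\mathrmbf{Rel}$), so there is no in-paper argument to measure you against; what you have written is, as far as I can tell, the argument the author must have had in mind, and its skeleton is sound. Your three stages are the right ones: functoriality of $\mathrmbfit{fmla}$ by structural induction against Table~\ref{tbl:fmla:fn} (including the check $\widehat{r}\cdot\widehat{\sigma}_{1}=\widehat{\sigma}_{2}\cdot{\scriptstyle\sum}_{f}$, which the paper never states but which is needed for ${\langle{\widehat{r},f}\rangle}$ to be a schema morphism at all); the unit as the inclusion of atoms, with naturality being exactly the recorded identity ${inc}_{\mathcal{S}_{2}}\cdot\widehat{r}=r\cdot{inc}_{\mathcal{S}_{1}}$; and idempotency via closure. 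You are also right that the flow clauses need the bookkeeping that the ``same'' $h$ on the right-hand side of $\widehat{r}({\scriptstyle\sum}_{h}(\varphi_{2}))={\scriptstyle\sum}_{h}(\widehat{r}(\varphi_{2}))$ is really $h$ reinterpreted as a morphism ${\scriptstyle\sum}_{f}(I_{2}',s_{2}')\rightarrow{\scriptstyle\sum}_{f}(I_{2},s_{2})$.

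The one place where I would push you to go further is the point you yourself flag as the main obstacle. It is not merely a matter of making the closure reading precise; the choice of reading is \emph{forced}, because under the free-syntax reading the proposition is false as stated. If $\widehat{\widehat{R}}$ is freely generated over the atom set $\widehat{R}$, then for $\varphi,\psi\in\widehat{R}$ the element ``atom $(\varphi\wedge\psi)$'' and the element ``formal meet of the atoms $\varphi$ and $\psi$'' are distinct in $\widehat{\widehat{R}}$ but both flatten to $(\varphi\wedge\psi)\in\widehat{R}$, so the multiplication $\mu_{\mathcal{S}}$ is surjective but not injective, hence not an isomorphism, and the monad is not idempotent (nor is $\mathrmbfit{fmla}\circ\mathrmbfit{fmla}=\mathrmbfit{fmla}$ an equality of passages). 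Your proof therefore has to open by fixing the convention that $\widehat{R}$ is the least subset of a fixed ambient universe of formulas containing $R$ and closed under the fiber and flow constructors (the paper's clause $R(I,s)\subseteq\widehat{R}(I,s)$ supports this reading), after which idempotency is the standard fact that a closure operator applied to a closed set is the identity, $\widehat{{inc}_{R}}=1_{\widehat{R}}$ follows by the induction you describe, and $\mu=1$ makes the monad laws trivial. With that convention stated up front rather than deferred as an obstacle, your argument is complete.
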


%\noindent{\large\bfseriesFormalism\dotfill}

%%%%%%%%%%%%%%%%%%%%%%%%%%%%%%%%%%%%%%%%%%%%%%%%%%%%%%%%%%%%%%%%%%%%%%%%%%%%%%%%%%%%%%%%%%%%%%%%%%%%%%%%%%%%%%%%%%%%%%%%
%%%%%%%%%%%%%%%%%%%%%%%%%%%%%%%%%%%%%%%%%%%%%%%%%%%%%%%%%%%%%%%%%%%%%%%%%%%%%%%%%%%%%%%%%%%%%%%%%%%%%%%%%%%%%%%%%%%%%%%%
%%%%%%%%%%%%%%%%%%%%%%%%%%%%%%%%%%%%%%%%%%%%%%%%%%%%%%%%%%%%%%%%%%%%%%%%%%%%%%%%%%%%%%%%%%%%%%%%%%%%%%%%%%%%%%%%%%%%%%%%

%\newpage

%%%%%%%%%%%%%%%%%%%%%%%%%%%%%%%%%%%%%%%%%%%%%%%%%%%%%%%%%%%%
\paragraph{\underline{Relational Formalism}: $\mathrmbf{Fmla}$.}
%%%%%%%%%%%%%%%%%%%%%%%%%%%%%%%%%%%%%%%%%%%%%%%%%%%%%%%%%%%%

%%%%%%%%%%%%%%%%%%%%%%%%%%%%%%%%%%%%%%%%%%%%%%%%%%%%%%%%%%%%
\paragraph{Constraints.}
%%%%%%%%%%%%%%%%%%%%%%%%%%%%%%%%%%%%%%%%%%%%%%%%%%%%%%%%%%%%

%%%%%%%%%%%%%%%%%%%%%%%%%%%%%%%%%%%%%%%%%%%%%%%%%%%%%%%%%%%%
%\paragraph{Sequents.}
%%%%%%%%%%%%%%%%%%%%%%%%%%%%%%%%%%%%%%%%%%%%%%%%%%%%%%%%%%%%

Let $\mathcal{S} = {\langle{R,\sigma,X}\rangle}$ be a relational schema.
A (binary) $\mathcal{S}$-sequent is a pair of formulas
$\varphi,\psi\in\widehat{R}$
with the same type list 
$\widehat{\sigma}(\varphi) = {\langle{I,s}\rangle} = \widehat{\sigma}(\psi)$. 
\footnote{We regard the formulas $\widehat{R}$ to be a set of types.
Since conjunction and disjunction are used in formulas, 
we can restrict attention to binary sequents.}
We represent a sequent using the turnstyle notation
$\varphi{\;\vdash\;}\psi$,
since we want a sequent to assert logical entailment.
A sequent expresses interpretation widening,
with the interpretation of $\varphi$ required to be within the interpretation of $\psi$.
We require entailment to be a preorder,
satisfying reflexivity and transitivity (Table~\ref{tbl:axioms}).
Hence,
for each type list ${\langle{I,s}\rangle}$
there is a fiber preorder
$\mathrmbf{Fmla}_{\mathcal{S}}(I,s) = {\langle{\widehat{R},\vdash}\rangle}$
consisting of all $\mathcal{S}$-formulas with this type list.
In first-order logic, 
we further require satisfaction of sufficient conditions 
(Table~\ref{tbl:axioms})
to described the various logical operations 
(connectives, quantifiers, etc.) 
used to build formulas.
%:
%conjunction and disjunction (infimum and supremum), 
%quantification and substitution (monotonicity and adjointness). 
%These conditions can be enforced by inference rules
%(Table~\ref{tbl:axioms}).
%true $\top$, false $\bot$, negation $\neg$, 
%conjunction $\wedge$, disjunction $\vee$, implication $\Rightarrow$, 
%and the universal/existential quantifiers $\forall$ and $\exists$
%
%%%%%%%%%%%%%%%%%%%%%%%%%%%%%%%%%%%%%%%%%%%%%%%%%%%%%%%%%%%%
%\paragraph{Constraints.}
%%%%%%%%%%%%%%%%%%%%%%%%%%%%%%%%%%%%%%%%%%%%%%%%%%%%%%%%%%%%
%
%Hence,
%there is an indexed category
%$\mathrmbfit{fmla}_{\mathcal{S}} : \mathrmbf{List}(X) \rightarrow \mathrmbf{Cat}$
%mapping 
%a type list ${\langle{I,s}\rangle}$
%to the fiber order
%$\mathrmbf{Fmla}_{\mathcal{S}}(I,s)$
%and mapping a type list morphism
%${\langle{I,s}\rangle} \xrightarrow{h} {\langle{I',s'}\rangle}$
%to the fiber adjunction
%$\mathrmbf{Fmla}_{\mathcal{S}}(I,s)
%% = {\langle{\widehat{R}(I,s),\vdash}\rangle}
%%{\langle{\mathrmbfit{fmla}(I,s),\vdash}\rangle}
%\xleftarrow{{\langle{{\scriptscriptstyle\sum}_{h}{\,\dashv\,}{h}^{\ast}}\rangle}}
%%{\langle{\widehat{R}(I',s'),\vdash'}\rangle}=
%\mathrmbf{Fmla}_{\mathcal{S}}(I',s')$.
%
%We describe the fibered category
%$\mathrmbf{Fmla}(\mathcal{S}) \xrightarrow{\mathrmbfit{list}_{\mathcal{S}} } \mathrmbf{List}(X)$.
%An object in $\mathrmbf{Fmla}(\mathcal{S})$ is 
An indexed $\mathcal{S}$-formula ${\langle{I,s,\varphi}\rangle}$
consists of a type list ${\langle{I,s}\rangle}$
and a formula $\varphi$ with signature ${\langle{I,s}\rangle}$.
%
%A morphism 
%${\langle{I,s,\varphi}\rangle}\xrightarrow{h}{\langle{I',s',\varphi'}\rangle}$
%in $\mathrmbf{Fmla}(\mathcal{S})$, 
An $\mathcal{S}$-constraint 
${\langle{I',s',\varphi'}\rangle}\xrightarrow{h}{\langle{I,s,\varphi}\rangle}$
consists of a type list morphism
${\langle{I',s'}\rangle} \xrightarrow{h} {\langle{I,s}\rangle}$
and 
a binary sequent $({\scriptstyle\sum}_{h}(\varphi){\;\vdash\;}\varphi')$,
or equivalently
a binary sequent  $(\varphi{\;\vdash\;}{h}^{\ast}(\varphi'))$.
%A constraint requires that 
%the interpretation of the $h^{\mathrm{th}}$-projection of $\varphi'$ be within the interpretation of $\varphi$. 
%Sequents are special cases of constraints:
%a sequent $(\varphi'{\;\vdash\;}\varphi)$
%asserts a relational constraint ``${\langle{I,s,\varphi}\rangle}\xrightarrow{1}{\langle{I,s,\varphi'}\rangle}$''.
The mathematical context $\mathrmbf{Fmla}(\mathcal{S})$ 
has indexed $\mathcal{S}$-formula as objects and $\mathcal{S}$-constraints as morphisms.
%This has type list projection passage
%$\mathrmbf{Fmla}(\mathcal{S}) \xrightarrow{\mathrmbfit{list}_{\mathcal{S}}} \mathrmbf{List}(X)$.
%
\footnote{In some sense,
this formula/constraint approach to formalism turns the tuple calculus upside down,
with atoms in the tuple calculus becoming constraints here.}
%
%\mbox{}\newline
%Following information flow,
%we have a set of types $\widehat{R}$,
%and we want to define sequents and specifications.
%
%However,
%we need and use only binary sequents.
%Hence,
%assuming a specification $T$,
%we have preorders at each type list ${\langle{I,s}\rangle}$ (that we axiomatize) 
%and we have categories at the broader scope of schemas $\mathcal{S}$ (that we also axiomatize).
%
%\mbox{}\newline\newline
%Given any schema $\mathcal{S}$,
%we define the mathematical context $\underline{\mathrmbf{Fmla}}(\mathcal{S})$ of $\mathcal{S}$-formulas,
%extending 
%%the (discrete) type list passage 
%$R \xrightarrow{\sigma} \mathrmbf{List}(X)$
%to a 
%%type list 
%projection passage 
%$\underline{\mathrmbf{Fmla}}(\mathcal{S}) \xrightarrow{\mathrmbfit{list}_{\mathcal{S}} } \mathrmbf{List}(X)$.
%
%\comment{
%%%%%%%%%%%%%%%%%%%%%%%%%%%%%%%%%%%%%%%%%%%%%%%%%%%%%%%%%%%%
%\paragraph{The Formula Functor.} %$\mathrmbf{Fmla}(\mathcal{S}_{2})\xrightarrow{\mathrmbfit{fmla}(r,f)}\mathrmbf{Fmla}(\mathcal{S}_{1})$.}
%%%%%%%%%%%%%%%%%%%%%%%%%%%%%%%%%%%%%%%%%%%%%%%%%%%%%%%%%%%%
%Given any schema morphism  
%\newpage
Let $\mathcal{S}_{2}\stackrel{{\langle{r,f}\rangle}}{\Longrightarrow}\mathcal{S}_{1}$
be a schema morphism.
We assume that the function map
$\widehat{R}_{2}\xrightarrow{\widehat{r}}\widehat{R}_{1}$
is monotonic
(Table~\ref{tbl:axioms}).
%:
%if $\varphi_{2}{\;\vdash\;}\psi_{2}$
%is a sequent in $\mathrmbf{Fmla}_{\mathcal{S}_{2}}(I_{2},s_{2})$
%then $(\widehat{r}(\varphi_{2}){\;\vdash\;}\widehat{r}(\psi_{2}))$
%is a sequent in $\mathrmbf{Fmla}_{\mathcal{S}_{1}}({\scriptstyle\sum}_{f}(I_{2},s_{2}))$.
%
%Since $\hat{r}({\scriptstyle\sum}_{h}(\varphi')) = {\scriptstyle\sum}_{h}(\hat{r}(\varphi'))$, 
%we have the naturality diagram
%${\scriptstyle\sum}_{h}{\,\cdot\,}\widehat{r}_{\langle{I_{2},s_{2}}\rangle}=
%\widehat{r}_{\langle{I_{2}',s_{2}'}\rangle}{\,\cdot\,}{\scriptstyle\sum}_{h}$
%for any type list morphism
%${\langle{I,s}\rangle} \xrightarrow{h} {\langle{I',s'}\rangle}$.
%
Hence,
there is 
%an indexed functor
%$\mathrmbfit{fmla}_{\mathcal{S}_{2}}\stackrel{\widehat{r}}{\Rightarrow}
%{\scriptstyle\sum}_{f}{\,\circ\,}\mathrmbfit{fmla}_{\mathcal{S}_{2}} 
%: \mathrmbf{List}(X_{2}) \rightarrow \mathrmbf{Cat}$.
%%(Figure~\ref{fig:indexed:fibered}).
%%
%%Given any schema morphism  
%%$\mathcal{S}_{2}\stackrel{{\langle{r,f}\rangle}}{\Longrightarrow}\mathcal{S}_{1}$,
%The indexed functor
%$\mathrmbfit{fmla}_{\mathcal{S}_{2}}\stackrel{\widehat{r}}{\Rightarrow}
%{\scriptstyle\sum}_{f}{\,\circ\,}\mathrmbfit{fmla}_{\mathcal{S}_{2}} 
%: \mathrmbf{List}(X_{2}) \rightarrow \mathrmbf{Pre}$
%specifies the commutative diagram 
%$\underline{\mathrmbfit{fmla}}_{\langle{r,f}\rangle}{\,\circ\,}\mathrmbfit{list}_{\mathcal{S}_{1}}
%=\mathrmbfit{list}_{\mathcal{S}_{2}}{\,\circ\,}{\scriptstyle\sum}_{f}$
%of the 
a fibered formula passage
$\mathrmbf{Fmla}(\mathcal{S}_{2}) \xrightarrow{\mathrmbfit{fmla}_{{\langle{r,f}\rangle}}} \mathrmbf{Fmla}(\mathcal{S}_{1})$
that commutes with the type list projections
(Figure~\ref{fig:indexed:fibered}).
\comment{
\begin{itemize}
\item 
An $\mathcal{S}_{2}$-formula
${\langle{I_{2},s_{2},\varphi_{2}}\rangle}$ 
%\in \underline{\mathrmbf{Fmla}}(\mathcal{S}_{2})$
is mapped to the $\mathcal{S}_{1}$-formula
${\langle{{\scriptstyle\sum}_{f}(I_{2},s_{2}),\widehat{r}(\varphi_{2})}\rangle}$. 
%\in \underline{\mathrmbf{Fmla}}(\mathcal{S}_{1})$.
%
\item 
An $\mathcal{S}_{2}$-constraint
${\langle{I_{2},s_{2},\varphi_{2}}\rangle}\xrightarrow{h}{\langle{I_{2}',s_{2}',\varphi_{2}'}\rangle}$ 
%in $\underline{\mathrmbf{Fmla}}(\mathcal{S}_{2})$
with binary sequent 
$({\scriptstyle\sum}_{h}(\varphi_{2}'){\;\vdash\;}\varphi_{2})$
is mapped to the $\mathcal{S}_{1}$-constraint
${\langle{{\scriptstyle\sum}_{f}(I_{2},s_{2}),\widehat{r}(\varphi_{2})}\rangle} \xrightarrow{h}
{\langle{{\scriptstyle\sum}_{f}(I_{2}',s_{2}'),\widehat{r}(\varphi_{2}')}\rangle}$
%in $\underline{\mathrmbf{Fmla}}(\mathcal{S}_{1})$
with binary sequent 
$({\scriptstyle\sum}_{h}(\widehat{r}(\varphi_{2}')){\;\vdash\;}\widehat{r}(\varphi_{2}))
=(\widehat{r}({\scriptstyle\sum}_{h}(\varphi_{2}')){\;\vdash\;}\widehat{r}(\varphi_{2}))$.
\end{itemize}
}
%\comment{
\begin{figure}
\begin{center}
\begin{tabular}{c@{\hspace{75pt}}c}
\begin{tabular}{c}
\setlength{\unitlength}{0.6pt}
\begin{picture}(120,110)(0,-10)
%\put(0,0){\framebox(120,80){}}
\put(0,80){\makebox(0,0){\footnotesize{$\mathrmbf{List}(X_{2})$}}}
\put(120,80){\makebox(0,0){\footnotesize{$\mathrmbf{List}(X_{1})$}}}
\put(60,0){\makebox(0,0){\footnotesize{$\mathrmbf{Pre}$}}}
\put(60,90){\makebox(0,0){\scriptsize{${\scriptstyle\sum}_{f}$}}}
\put(15,40){\makebox(0,0)[r]{\scriptsize{${\mathrmbfit{fmla}}_{\mathcal{S}_{2}}$}}}
\put(105,40){\makebox(0,0)[l]{\scriptsize{${\mathrmbfit{fmla}}_{\mathcal{S}_{2}}$}}}
\put(60,55){\makebox(0,0){\shortstack{\scriptsize{$\widehat{r}$}\\\large{$\Rightarrow$}}}}
\put(40,80){\vector(1,0){40}}
\put(10,65){\vector(3,-4){40}}
\put(110,65){\vector(-3,-4){40}}
\end{picture}
\end{tabular}
&
\begin{tabular}{c}
\setlength{\unitlength}{0.6pt}
\begin{picture}(120,110)(0,-10)
%\put(0,0){\framebox(120,80){}}
\put(0,80){\makebox(0,0){\footnotesize{$\mathrmbf{Fmla}(\mathcal{S}_{2})$}}}
\put(120,80){\makebox(0,0){\footnotesize{$\mathrmbf{Fmla}(\mathcal{S}_{1})$}}}
\put(0,0){\makebox(0,0){\footnotesize{$\mathrmbf{List}(X_{2})$}}}
\put(120,0){\makebox(0,0){\footnotesize{$\mathrmbf{List}(X_{1})$}}}
\put(60,95){\makebox(0,0){\scriptsize{${\mathrmbfit{fmla}}_{\langle{r,f}\rangle}$}}}
\put(60,-15){\makebox(0,0){\scriptsize{${\scriptstyle\sum}_{f}$}}}
\put(-10,40){\makebox(0,0)[r]{\scriptsize{$\mathrmbfit{list}_{\mathcal{S}_{2}}$}}}
\put(130,40){\makebox(0,0)[l]{\scriptsize{$\mathrmbfit{list}_{\mathcal{S}_{1}}$}}}
\put(43,80){\vector(1,0){34}}
\put(40,0){\vector(1,0){40}}
\put(0,65){\vector(0,-1){50}}
\put(120,65){\vector(0,-1){50}}
\end{picture}
\end{tabular}
\\ & \\
{\itshape indexed} & {\itshape fibered}
\end{tabular}
\end{center}
\caption{Indexed-Fibered}
\label{fig:indexed:fibered}
\end{figure}
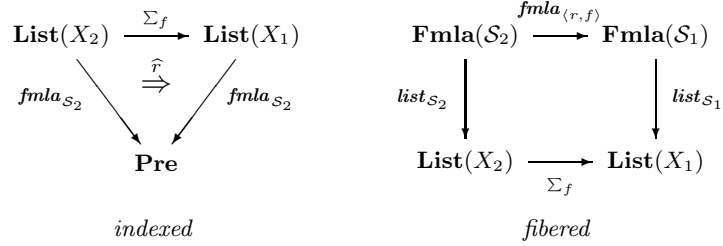
%}
\begin{table}
\begin{center}
{\scriptsize{\setlength{\extrarowheight}{3.5pt}\begin{tabular}{|r@{\hspace{3pt}:\hspace{15pt}}l|}
\multicolumn{2}{l}{$\text{schema:}\;\mathcal{S}$} \\ 
%\hline
\multicolumn{2}{l}{\textsf{fiber:} type list ${\langle{I,s}\rangle}$}
\\ \hline
reflexivity 
&
$\varphi{\;\vdash\;}\varphi$
\\
transitivity
&
$\varphi{\;\vdash\;}\varphi'$ and $\varphi'{\;\vdash\;}\varphi''$ implies $\varphi{\;\vdash\;}\varphi''$
\\ \hline\hline
meet
&
$\psi{\;\vdash\;}(\varphi{\;\wedge\;}\varphi')$ 
iff 
$\psi{\;\vdash\;}\varphi$ 
and
$\psi{\;\vdash\;}\varphi'$ 
\\
\multicolumn{1}{|c}{}
&
$(\varphi{\;\wedge\;}\varphi'){\;\vdash\;}\varphi$, 
$(\varphi{\;\wedge\;}\varphi'){\;\vdash\;}\varphi'$ 
\\ \hline
join
&
$(\varphi{\;\vee\;}\varphi'){\;\vdash\;}\psi$ 
iff 
$\varphi{\;\vdash\;}\psi$ 
and
$\varphi'{\;\vdash\;}\psi$ 
\\
\multicolumn{1}{|c}{}
&
$\varphi'{\;\vdash\;}(\varphi{\;\vee\;}\varphi)$, 
$\varphi'{\;\vdash\;}(\varphi{\;\vee\;}\varphi')$ 
\\ \hline
implication
&
$(\varphi{\;\wedge\;}\varphi'){\;\vdash\;}\psi$
iff
$\varphi{\;\vdash\;}(\varphi'{\rightarrowtriangle\,}\psi)$
\\ \hline
negation
&
$\neg\,(\neg\,(\varphi)){\;\vdash\;}\varphi$ 
\\ \hline
%\multicolumn{2}{l}{}\\
\multicolumn{2}{l}{\rule{0pt}{12pt}\textsf{flow:} type list morphism 
${\langle{I',s'}\rangle} \xrightarrow{h} {\langle{I,s}\rangle}$}
\\ \hline
${\scriptstyle\sum}_{h}$-monotonicity
&
$\varphi'{\;\vdash'\;}\psi'$ 
implies
${\scriptstyle\sum}_{h}(\varphi'){\;\vdash\;}{\scriptstyle\sum}_{h}(\psi')$
\\
${h}^{\ast}$-monotonicity
&
$\varphi{\;\vdash\;}\psi$ 
implies
${h}^{\ast}(\varphi){\;\vdash'\;}{h}^{\ast}(\psi)$
\\
${\scriptstyle\prod}_{h}$-monotonicity
&
$\varphi'{\;\vdash'\;}\psi'$ 
implies
${\scriptstyle\prod}_{h}(\varphi'){\;\vdash\;}{\scriptstyle\prod}_{h}(\psi')$
\\ \hline\hline
adjointness
&
${\scriptstyle\sum}_{h}(\varphi'){\;\vdash\;}\psi$
iff
$\varphi'{\;\vdash'\;}{h}^{\ast}(\psi)$
\\
\multicolumn{1}{|c}{}
&
$\varphi'{\;\vdash'\;}{h}^{\ast}({\scriptstyle\sum}_{h}(\varphi'))$,
${\scriptstyle\sum}_{h}({h}^{\ast}(\varphi)){\;\vdash\;}\varphi$
%\\ \cline{2-3}
\\ \hline
%\multicolumn{2}{l}{\sffamily constraints} \\ \hline
%sequents are constraints
%&
%$(\varphi'{\;\vdash\;}\varphi)$ 
%implies
%${\langle{I,s,\varphi}\rangle}\xrightarrow{1}{\langle{I,s,\varphi'}\rangle}$
%\\ \hline
%constraint composition
%&
%${\langle{I,s,\varphi}\rangle}\xrightarrow{h}{\langle{I',s',\varphi'}\rangle}$,
%${\langle{I',s',\varphi'}\rangle}\xrightarrow{h}{\langle{I'',s'',\varphi''}\rangle}$
%\\
%\multicolumn{1}{|r}{}
%&
%\multicolumn{1}{r|}{
%implies
%${\langle{I,s,\varphi}\rangle}\xrightarrow{h}{\langle{I'',s'',\varphi''}\rangle}$
%}
%\\ \hline
%\multicolumn{2}{l}{} \\
\multicolumn{2}{l}{\rule{0pt}{16pt}$\text{schema morphism:}\;\mathcal{S}_{2}\stackrel{{\langle{r,f}\rangle}}{\Longrightarrow}\mathcal{S}_{1}$} 
\\ \hline 
%\multicolumn{3}{c}{}\vspace{-6pt}\\\cline{2-3}
%\multicolumn{2}{l}{$\mathcal{S}_{2}\stackrel{{\langle{r,f}\rangle}}{\Longrightarrow}\mathcal{S}_{1}$} \\ \hline
$\widehat{r}$-monotonicity
&
$(\varphi_{2}{\;\vdash_{2}\;}\psi_{2})$ 
implies
$(\widehat{r}(\varphi_{2}){\;\vdash_{1}\;}\widehat{r}(\psi_{2}))$ 
\\ \hline
\end{tabular}}}
\end{center}
\caption{Axioms}
\label{tbl:axioms}
\end{table}
%

%%%%%%%%%%%%%%%%%%%%%%%%%%%%%%%%%%%%%%%%%%%%%%%%%%%%%%%%%%%%
%\newpage
\paragraph{Specifications.}
%%%%%%%%%%%%%%%%%%%%%%%%%%%%%%%%%%%%%%%%%%%%%%%%%%%%%%%%%%%%

A specification
$\mathcal{T} = {\langle{\mathcal{S},T}\rangle}$
consists of 
a schema $\mathcal{S} = {\langle{R,\sigma,X}\rangle}$ and 
a subset $T\subseteq\mathrmbf{Fmla}(\mathcal{S})$ of $\mathcal{S}$-constraints.
As a subgraph, 
$T$ extends to its consequence
$T^{\scriptscriptstyle\bullet}\subseteq\mathrmbf{Fmla}(\mathcal{S})$,
a mathematical subcontext,
by using paths of constraints.
A specification morphism
$\mathcal{T}_{2} = {\langle{\mathcal{S}_{2},T_{2}}\rangle}
\xrightarrow{\langle{r,f}\rangle}
{\langle{\mathcal{S}_{1},T_{1}}\rangle} = \mathcal{T}_{1}$
is a schema morphism  
$\mathcal{S}_{2}
% = {\langle{R_{2},\sigma_{2},X_{2}}\rangle} 
\stackrel{{\langle{r,f}\rangle}}{\Longrightarrow}
%{\langle{R_{1},\sigma_{1},X_{1}}\rangle} = 
\mathcal{S}_{1}$
that preserves constraints:
%if ${\langle{I_{2},s_{2},\varphi_{2}}\rangle}\xrightarrow{h}{\langle{I_{2}',s_{2}',\varphi_{2}'}\rangle}$ is a $\mathcal{S}_{2}$-constraint in $T_{2}$,
%then ${\langle{{\scriptstyle\sum}_{f}(I_{2},s_{2}),\widehat{r}(\varphi_{2})}\rangle} \xrightarrow{h}{\langle{{\scriptstyle\sum}_{f}(I_{2}',s_{2}'),\widehat{r}(\varphi_{2}')}\rangle}$ is a $\mathcal{S}_{1}$-constraint in $T_{1}$;
%that is,
if sequent 
$\varphi_{2}'{\;\vdash\;}{h}^{\ast}(\varphi_{2})$
%$({\scriptstyle\sum}_{h}(\varphi_{2}'){\;\vdash\;}\varphi_{2})$
is asserted in $T_{2}$,
then sequent 
$\widehat{r}(\varphi_{2}'){\;\vdash\;}{h}^{\ast}(\widehat{r}(\varphi_{2}))$
%$({\scriptstyle\sum}_{h}(\widehat{r}(\varphi_{2}')){\;\vdash\;}\widehat{r}(\varphi_{2}))=(\widehat{r}({\scriptstyle\sum}_{h}(\varphi_{2}')){\;\vdash\;}\widehat{r}(\varphi_{2}))$
is asserted in $T_{1}$.

%%%%%%%%%%%%%%%%%%%%%%%%%%%%%%%%%%%%%%%%%%%%%%%%%%%%%%%%%%%%%%%%%%%%%%%%%%%%%%%%%%%%%%%%%%%%%%%%%%%%%%%%%%%%%%%%%%%%%%%%
%%%%%%%%%%%%%%%%%%%%%%%%%%%%%%%%%%%%%%%%%%%%%%%%%%%%%%%%%%%%%%%%%%%%%%%%%%%%%%%%%%%%%%%%%%%%%%%%%%%%%%%%%%%%%%%%%%%%%%%%
%%%%%%%%%%%%%%%%%%%%%%%%%%%%%%%%%%%%%%%%%%%%%%%%%%%%%%%%%%%%%%%%%%%%%%%%%%%%%%%%%%%%%%%%%%%%%%%%%%%%%%%%%%%%%%%%%%%%%%%%

%%%%%%%%%%%%%%%%%%%%%%%%%%%%%%%%%%%%%%%%%%%%%%%%%%%%%%%%%%%%
\paragraph{\underline{First-order Linguistics}: 
$\underset{\mathrmbf{Sch}{\times}_{\mathrmbf{Set}}\mathrmbf{Oper}}{\mathrmbf{Lang}}
\hspace{-12pt}\xrightarrow{\mathrmbfit{sch}}\mathrmbf{Sch}$.}
%%%%%%%%%%%%%%%%%%%%%%%%%%%%%%%%%%%%%%%%%%%%%%%%%%%%%%%%%%%%

%%%%%%%%%%%%%%%%%%%%%%%%%%%%%%%%%%%%%%%%%%%%%%%%%%%%%%%%%%%%
%\paragraph{FOL Languages.}
%%%%%%%%%%%%%%%%%%%%%%%%%%%%%%%%%%%%%%%%%%%%%%%%%%%%%%%%%%%%
%
A first-order logic (FOL) language 
%is a base-superstructure ``pair'' 
%$\mathcal{L} = {\langle{R,\sigma,X,\Omega}\rangle}$
$\mathcal{L} = {\langle{\mathcal{S},\mathcal{O}}\rangle}$
consists of a relational schema $\mathcal{S}={\langle{R,\sigma,X}\rangle}$ and 
an operator domain $\mathcal{O}={\langle{X,\Omega}\rangle}$
that share a common type set $X$.
%The operator domain is the base and the relational schema is the superstructure.
%
%%%%%%%%%%%%%%%%%%%%%%%%%%%%%%%%%%%%%%%%%%%%%%%%%%%%%%%%%%%%
%\paragraph{FOL Language Morphisms.}
%%%%%%%%%%%%%%%%%%%%%%%%%%%%%%%%%%%%%%%%%%%%%%%%%%%%%%%%%%%%
%
A first-order logic (FOL) language morphism
$\mathcal{L}_{2} = {\langle{\mathcal{S}_{2},\mathcal{O}_{2}}\rangle}
\xrightarrow{\langle{r,f,\omega}\rangle}
{\langle{\mathcal{S}_{1},\mathcal{O}_{1}}\rangle} = \mathcal{L}_{1}$ 
consists of a relational schema morphism 
$\mathcal{S}_{2}
%{\langle{R_{2},\sigma_{2},X_{2}}\rangle} 
\xrightarrow{\langle{r,f}\rangle} 
%{\langle{R_{1},\sigma_{1},X_{1}}\rangle}
\mathcal{S}_{1}$ 
and a functional language morphism
$\mathcal{O}_{2}
%{\langle{X_{2},\Omega_{2}}\rangle} 
\xrightarrow{\langle{f,\omega}\rangle} 
%{\langle{X_{1},\Omega_{1}}\rangle}
\mathcal{O}_{1}$ 
that share a common type function 
$X_{2} \xrightarrow{f} X_{1}$.
%

%%%%%%%%%%%%%%%%%%%%%%%%%%%%%%%%%%%%%%%%%%%%%%%%%%%%%%%%%%%%
\paragraph{\underline{First-order Formalism.}} 
%%%%%%%%%%%%%%%%%%%%%%%%%%%%%%%%%%%%%%%%%%%%%%%%%%%%%%%%%%%%

%\marginpar{\scriptsize
%Add first-order specifications and morphisms that are commented out.}

%%%%%%%%%%%%%%%%%%%%%%%%%%%%%%%%%%%%%%%%%%%%%%%%%%%%%%%%%%%%
%\paragraph{FOL Specifications.}
%%%%%%%%%%%%%%%%%%%%%%%%%%%%%%%%%%%%%%%%%%%%%%%%%%%%%%%%%%%%
%
A first-order specification
$\mathcal{T} = {\langle{\mathcal{S},T,\mathcal{O},E}\rangle}$
is an FOL language 
$\mathcal{L} = {\langle{\mathcal{S},\mathcal{O}}\rangle}$,
where
${\langle{\mathcal{S},T}\rangle}$
%={\langle{R,\sigma,X,T}\rangle}$
is a relational specification
and 
${\langle{\mathcal{O},E}\rangle}$
%={\langle{X,\Omega,E}\rangle}$
is an equational presentation. 
A first-order specification morphism
$\mathcal{T}_{2} = {\langle{\mathcal{S}_{2},T_{2},\mathcal{O}_{2},E_{2}}\rangle}
\xrightarrow{\langle{r,f}\rangle}
{\langle{\mathcal{S}_{1},T_{1},\mathcal{O}_{1},E_{1}}\rangle} = \mathcal{T}_{1}$
is an FOL language morphism
$\mathcal{L}_{2} = {\langle{\mathcal{S}_{2},\mathcal{O}_{2}}\rangle}
\xrightarrow{\langle{r,f,\omega}\rangle}
{\langle{\mathcal{S}_{1},\mathcal{O}_{1}}\rangle} = \mathcal{L}_{1}$, 
where
${\langle{\mathcal{S}_{2},T_{2}}\rangle}
\xrightarrow{\langle{r,f}\rangle}
{\langle{\mathcal{S}_{1},T_{1}}\rangle}$
is a relational specification morphism
and
${\langle{\mathcal{O}_{2},E_{2}}\rangle} \xrightarrow{{\langle{f,\omega}\rangle}} {\langle{\mathcal{O}_{1},E_{1}}\rangle}$
is a morphism of equational presentations.
A first-order specification morphism preserves constraints:
if sequent 
$\varphi_{2}'{\;\vdash\;}{[t]}^{\ast}(\varphi_{2})$
is asserted in $T_{2}$,
then sequent 
$\widehat{r}(\varphi_{2}'){\;\vdash\;}{[t]}^{\ast}(\widehat{r}(\varphi_{2}))$
is asserted in $T_{1}$.

%%%%%%%%%%%%%%%%%%%%%%%%%%%%%%%%%%%%%%%%%%%%%%%%%%%%%%%%%%%%%%%%%%%%%%%%%%%%%%%%%%%%%%%%%%%%%%%%%%%%
%\newpage
\subsubsection{Semantics.}\label{sec:sem}
%%%%%%%%%%%%%%%%%%%%%%%%%%%%%%%%%%%%%%%%%%%%%%%%%%%%%%%%%%%%%%%%%%%%%%%%%%%%%%%%%%%%%%%%%%%%%%%%%%%%

%%%%%%%%%%%%%%%%%%%%%%%%%%%%%%%%%%%%%%%%%%%%%%%%%%%%%%%%%%%%
\paragraph{\underline{Relational Semantics}: 
$\mathrmbf{Rel}\xrightarrow{\mathrmbfit{sch}}\mathrmbf{Sch}$.}
%%%%%%%%%%%%%%%%%%%%%%%%%%%%%%%%%%%%%%%%%%%%%%%%%%%%%%%%%%%%

%%%%%%%%%%%%%%%%%%%%%%%%%%%%%%%%%%%%%%%%%%%%%%%%%%%%%%%%%%%%
\paragraph{Structures.}
%%%%%%%%%%%%%%%%%%%%%%%%%%%%%%%%%%%%%%%%%%%%%%%%%%%%%%%%%%%%

A (model-theoretic) relational structure (classification form)
(IFF~\cite{iff})
$\mathcal{M} = {\langle{\mathcal{R},{\langle{\sigma,\tau}\rangle},\mathcal{E}}\rangle}$ 
is a hypergraph of classifications 
--- a two dimensional construction consisting of
%\begin{itemize}
%\item 
a relation classification
$\mathcal{R} = {\langle{R,K,\models_{\mathcal{R}}}\rangle}$,
%\item 
an entity classification 
$\mathcal{E} = {\langle{X,Y,\models_{\mathcal{E}}}\rangle}$ and
%\item 
a list designation 
${\langle{\sigma,\tau}\rangle} : \mathcal{R} \rightrightarrows \mathrmbf{List}(\mathcal{E})$.
%\end{itemize}
%
\footnote{$\mathrmbf{List}(\mathcal{E}) 
= {\langle{\mathrmbf{List}(X),\mathrmbf{List}(Y),\models_{\mathrmbf{List}(\mathcal{E})}}\rangle}$
is the list construction of the entity classification.
A tuple 
${\langle{J,t}\rangle} \in \mathrmbf{List}(Y)$
is classified by a signature
${\langle{I,s}\rangle} \in \mathrmbf{List}(X)$,
symbolized by ${\langle{J,t}\rangle} \models_{\mathrmbf{List}(\mathcal{E})} {\langle{I,s}\rangle}$,
when $J=I$ and $t_{i} \models_{\mathcal{E}} s_{i}$ for all $i \in I$.}
%
%\footnote{This is the Grothendieck construction of the Yoneda embedding
%$\mathrmbf{List}(\mathcal{E}) = {\int}\!\mathrmbfit{y}_{\mathcal{E}}$.}
%
%is the target classification of the list designation.
Hence,
a structure satisfies the following condition:
$k{\;\models_{\mathcal{R}}\;}r$
implies
$\tau(k){\;\models_{\mathrmbf{List}(\mathcal{E})}\;}\sigma(r)$.
%
%\begin{center}
%{\footnotesize{\begin{tabular}{r@{\hspace{10pt}}c@{\hspace{10pt}}l}
%\multicolumn{3}{l}{\textsf{designation}}
%\\
%$k{\;\models_{\mathcal{R}}\;}r$
%&
%\underline{implies}
%&
%$\tau(k){\;\models_{\mathrmbf{List}(\mathcal{E})}\;}\sigma(r)$
%\end{tabular}}}
%\end{center}
%
A structure $\mathcal{M}$ has an associated schema 
$\mathrmbfit{sch}(\mathcal{M}) = {\langle{R,\sigma,X}\rangle}$.

%%%%%%%%%%%%%%%%%%%%%%%%%%%%%%%%%%%%%%%%%%%%%%%%%%%%%%%%%%%%
%\newpage
\paragraph{Formulas.}
%%%%%%%%%%%%%%%%%%%%%%%%%%%%%%%%%%%%%%%%%%%%%%%%%%%%%%%%%%%%

Any structure 
$\mathcal{M} = {\langle{\mathcal{R},{\langle{\sigma,\tau}\rangle},\mathcal{E}}\rangle}$ 
%with schema $\mathrmbfit{sch}(\mathcal{M}) = {\langle{R,\sigma,X}\rangle}$
has an associated formula structure
$\mathrmbfit{fmla}(\mathcal{M}) 
%= \widehat{\mathcal{M}}
= {\langle{\widehat{\mathcal{R}},{\langle{\widehat{\sigma},\tau}\rangle},\mathcal{E}}\rangle}$
with schema 
$\mathrmbfit{sch}(\mathrmbfit{fmla}(\mathcal{M})) 
= {\langle{\widehat{\mathcal{R}},\widehat{\sigma},X}\rangle}$.
The formula classification
$\widehat{\mathcal{R}} = {\langle{\widehat{R},K,\models_{\widehat{\mathcal{R}}}}\rangle}$,
which extends the relation classification of $\mathcal{M}$,
is directly defined by induction
in Table~\ref{tbl:fmla:cls}.
\begin{table}
\begin{center}
{\scriptsize{\setlength{\extrarowheight}{2pt}\begin{tabular}{|r@{\hspace{20pt}}l@{\hspace{10pt}\underline{when}\hspace{10pt}}c|}
\multicolumn{3}{l}{\textsf{fiber:} type list ${\langle{I,s}\rangle}$ with interpretation 
$\mathrmbfit{tup}_{\mathcal{E}}(I,s){\,=\,}\prod_{i\in{I}}\,\mathrmbfit{ext}_{\mathcal{E}}(s_{i})$}
\\ \hline
\textit{operator} & \multicolumn{1}{l}{\textit{definiendum}} & \multicolumn{1}{c|}{\textit{definiens}}
\\
relation
& $k{\;\models_{\widehat{\mathcal{R}}}\;}r$
& $k{\;\models_{\mathcal{R}}\;}r$
\\
meet
& $k{\;\models_{\widehat{\mathcal{R}}}\;}(\varphi{\,\wedge\,}\psi)$
& $k{\;\models_{\widehat{\mathcal{R}}}\;}\varphi$ and $k{\;\models_{\widehat{\mathcal{R}}}\;}\psi$
\\
join
& $k{\;\models_{\widehat{\mathcal{R}}}\;}(\varphi{\,\vee\,}\psi)$
& $k{\;\models_{\widehat{\mathcal{R}}}\;}\varphi$ or $k{\;\models_{\widehat{\mathcal{R}}}\;}\psi$
\\
top
& \multicolumn{2}{l|}{$k{\;\models_{\widehat{\mathcal{R}}}\;}{\scriptstyle\top}$}
\\
bottom
& \multicolumn{2}{l|}{$k{\;\cancel{\models}_{\widehat{\mathcal{R}}}\;}{\scriptstyle\bot}$}
\\
negation
& $k{\;\models_{\widehat{\mathcal{R}}}\;}(\neg\varphi)$
& $k{\;\cancel{\models}_{\widehat{\mathcal{R}}}\;}\varphi$
\\
implication
& $k{\;\models_{\widehat{\mathcal{R}}}\;}(\varphi{\,\rightarrowtriangle\,}\psi)$
& if $k{\;\models_{\widehat{\mathcal{R}}}\;}\varphi$ then $k{\;\models_{\widehat{\mathcal{R}}}\;}\psi$
\\
difference
& $k{\;\models_{\widehat{\mathcal{R}}}\;}(\varphi{\,\setminus\,}\psi)$
& $k{\;\models_{\widehat{\mathcal{R}}}\;}\varphi$ but not $k{\;\models_{\widehat{\mathcal{R}}}\;}\psi$
\\ \hline
\multicolumn{3}{l}{\rule{0pt}{28pt}\textsf{flow:} type list morphism
$\overset{\textstyle\widehat{\sigma}(\varphi')}{\overbrace{\langle{I',s'}\rangle}}
\xrightarrow{h}
\overset{\textstyle\widehat{\sigma}(\varphi)}{\overbrace{\langle{I,s}\rangle}}$ with interpretation
$\mathrmbfit{tup}_{\mathcal{E}}(I',s')\xleftarrow{\mathrmbfit{tup}_{\mathcal{E}}(h)}\mathrmbfit{tup}_{\mathcal{E}}(I,s)$}
\\ \hline
\textit{operator} & \multicolumn{1}{l}{\textit{definiendum}} & \multicolumn{1}{c|}{\textit{definiens}}
\\
existential
& $k{\;\models_{\widehat{\mathcal{R}}}\;}{\scriptstyle\sum}_{h}(\varphi)$
&
$\tau(k){\,\in\,}{\exists}_{h}(\mathrmbfit{R}_{\widehat{\mathcal{M}}}(\varphi))$ 
\\
universal
& $k{\;\models_{\widehat{\mathcal{R}}}\;}{\scriptstyle\prod}_{h}(\varphi)$
& 
$\tau(k){\,\in\,}{\forall}_{h}(\mathrmbfit{R}_{\widehat{\mathcal{M}}}(\varphi))$ 
\\
substitution
& $k{\;\models_{\widehat{\mathcal{R}}}\;}{h}^{\ast}(\varphi')$
& 
$\tau(k){\,\in\,}{h}^{-1}(\mathrmbfit{R}_{\widehat{\mathcal{M}}}(\varphi'))$ 
\\
\multicolumn{2}{|r}{}
&
\multicolumn{1}{c|}{where $\mathrmbfit{R}_{\widehat{\mathcal{M}}}(\varphi)
={\wp}\tau(\mathrmbfit{ext}_{\widehat{\mathcal{R}}}(\varphi))$\rule[-5pt]{0pt}{10pt}}
%}
\\ \hline
\end{tabular}}}
\end{center}
\caption{Formula Classification}
\label{tbl:fmla:cls}
\end{table}
%

%%%%%%%%%%%%%%%%%%%%%%%%%%%%%%%%%%%%%%%%%%%%%%%%%%%%%%%%%%%%
%\newpage
\paragraph{Satisfaction.}
%%%%%%%%%%%%%%%%%%%%%%%%%%%%%%%%%%%%%%%%%%%%%%%%%%%%%%%%%%%%
Satisfaction is defined in terms of the extent order of the formula classification.
For any $\mathcal{S}$-structure $\mathcal{M} \in \mathrmbf{Rel}(\mathcal{S})$,
two formula $\varphi,\psi\in\widehat{R}$ 
with the same type list $\sigma(\varphi)=\sigma(\psi)$
satisfy the specialization-generalization order
$\varphi{\;\leq_{\widehat{\mathcal{R}}}\;}\psi$
when
their extents satisfy the containment order
$\mathrmbfit{ext}_{\widehat{\mathcal{R}}}(\varphi){\;\subseteq\;}\mathrmbfit{ext}_{\widehat{\mathcal{R}}}(\psi)$.
%
%\comment{
An $\mathcal{S}$-structure $\mathcal{M} \in \mathrmbf{Rel}(\mathcal{S})$
satisfies 
an $\mathcal{S}$-sequent $(\varphi{\;\vdash\;}\psi)$
when
$\varphi{\;\leq_{\widehat{\mathcal{R}}}\;}\psi$.
%}
%
An $\mathcal{S}$-structure $\mathcal{M} \in \mathrmbf{Rel}(\mathcal{S})$
satisfies 
an $\mathcal{S}$-constraint $\varphi'{\;\xrightarrow{h}\;}\varphi$,
symbolized by
$\mathcal{M}{\;\models_{\mathcal{S}}\;}(\varphi'{\;\xrightarrow{h}\;}\varphi)$,
when
$\mathcal{M}$ 
satisfies the sequent $({\scriptstyle\sum}_{h}(\varphi){\;\vdash\;}\varphi')$;
that is,
when
${\scriptstyle\sum}_{h}(\varphi){\;\leq_{\widehat{\mathcal{R}}}\;}\varphi'$;
equivalently,
when
$\varphi{\;\leq_{\widehat{\mathcal{R}}}\;}{h}^{\ast}(\varphi')$.
This can be expressed in terms of implication as
$({\scriptstyle\sum}_{h}(\varphi){\,\rightarrowtriangle\,}\varphi') \equiv \top$;
equivalently,
$(\varphi{\,\rightarrowtriangle\,}{h}^{\ast}(\varphi')) \equiv \top$.
When converting structures to databases,
the satisfaction relationship 
$\mathcal{M}{\;\models_{\mathcal{S}}\;}(\varphi\xrightarrow{h}\varphi')$
determines
the morphism of $\mathcal{E}$-relations
$\mathrmbfit{R}_{\widehat{\mathcal{M}}}(\varphi)\xleftarrow{h}\mathrmbfit{R}_{\widehat{\mathcal{M}}}(\varphi')$
in $\mathrmbf{Rel}(\mathcal{E})$
%the relational interpretation ordering 
%${\exists}_{h}(\mathrmbfit{R}_{\widehat{\mathcal{M}}}(\varphi'))
%%=\mathrmbfit{R}_{\widehat{\mathcal{M}}}({\scriptstyle\sum}_{h}(\varphi'))
%{\,\leq\,}\mathrmbfit{R}_{\widehat{\mathcal{M}}}(\varphi)$
%in the relation fiber order $\mathrmbf{Rel}_{\mathcal{E}}(I,s)$
%% = {\wp}\mathrmbfit{tup}_{\mathcal{E}}(I,s)$
and
a morphism of $\mathcal{E}$-tables 
$\mathrmbfit{T}_{\widehat{\mathcal{M}}}(\varphi)\xleftarrow{{\langle{h,k}\rangle}}
\mathrmbfit{T}_{\widehat{\mathcal{M}}}(\varphi')$
in $\mathrmbf{Tbl}(\mathcal{E})$.
(The operators 
$\mathrmbfit{R}_{\widehat{\mathcal{M}}}$ and $\mathrmbfit{T}_{\widehat{\mathcal{M}}}$
are defined in Appendix~\ref{rel:interp}. Satisfaction is summarized in Table~\ref{satisfaction}.)
%${\scriptstyle\sum}_{h}(\mathrmbfit{T}_{\widehat{\mathcal{M}}}(\varphi'))
%%\rightarrow
%%\mathrmbfit{T}_{\widehat{\mathcal{M}}}({\scriptstyle\sum}_{h}(\varphi'))
%\xrightarrow{k}
%\mathrmbfit{T}_{\widehat{\mathcal{M}}}(\varphi)$
%in the table fiber category 
%$\mathrmbf{Tbl}_{\mathcal{E}}(I,s)$.
%% = (\mathrmbf{Set}{\downarrow}\mathrmbfit{tup}_{\mathcal{E}}(I,s))$.
%
\begin{table}
\begin{center}
\begin{minipage}{320pt}
\begin{center}
{\fbox{\footnotesize{\begin{tabular}{r@{\hspace{4pt}}l}
       & 
$\mathcal{M}{\;\models_{\mathcal{S}}\;}(\varphi'\xrightarrow{h}\varphi)$
\\ when & 
${\scriptstyle\sum}_{h}(\varphi){\;\leq_{\widehat{\mathcal{R}}}\;}\varphi'$
\\ iff & 
$\forall_{k\,\in\,K}\left(\,k{\;\models_{\widehat{\mathcal{R}}}\;}({\scriptstyle\sum}_{h}(\varphi){\,\rightarrowtriangle\,}\varphi')\,\right)$
\\ iff & 
$\forall_{k\,\in\,K}\left(\,k{\;\models_{\widehat{\mathcal{R}}}\;}{\scriptstyle\sum}_{h}(\varphi)\;\text{implies}\;k{\;\models_{\widehat{\mathcal{R}}}\;}\varphi'\,\right)$
\\ implies & 
${\exists}_{h}(\mathrmbfit{R}_{\widehat{\mathcal{M}}}(\varphi)){\,\leq\,}\mathrmbfit{R}_{\widehat{\mathcal{M}}}(\varphi')$
\footnote{For relational structure
$\mathcal{M} = {\langle{\mathcal{R},{\langle{\sigma,\tau}\rangle},\mathcal{E}}\rangle}$,
the fibered mathematical context $\mathrmbf{Rel}(\mathcal{E})^{\mathrm{op}}\xrightarrow{\mathrmbfit{list}}\mathrmbf{List}(X)$ 
of $\mathcal{E}$-relations
is determined by 
the indexed preorder
%category
$\mathrmbf{List}(X)^{\mathrm{op}}\xrightarrow{\mathrmbfit{rel}}\mathrmbf{Pre}$,
which maps a type list ${\langle{I,s}\rangle}$ to the fiber relational order
$\mathrmbf{Rel}_{\mathcal{E}}(I,s)
={\langle{{\wp}\mathrmbfit{tup}_{\mathcal{E}}(I,s),\subseteq}\rangle}$
and maps a type list morphism
${\langle{I',s'}\rangle} \xrightharpoondown{h} {\langle{I,s}\rangle}$
to the fiber monotonic function
${\exists}_{h} = {\exists}_{\mathrmbfit{tup}_{\mathcal{E}}(h)} :
\mathrmbf{Rel}_{\mathcal{E}}(I',s')\leftarrow\mathrmbf{Rel}_{\mathcal{E}}(I,s)$.
Similarly, for
the fibered context 
$\mathrmbf{Tbl}(\mathcal{E})^{\mathrm{op}}\xrightarrow{\mathrmbfit{pr}}\mathrmbf{Term}(X)$
of $\mathcal{E}$-tables.}
\\ implies & 
$\exists_{k}
\left(
{\scriptstyle\sum}_{h}(\mathrmbfit{T}_{\widehat{\mathcal{M}}}(\varphi)){\,\xrightarrow{k}\,}\mathrmbfit{T}_{\widehat{\mathcal{M}}}(\varphi')
\right)$
\end{tabular}}}}
\end{center}
\end{minipage}
\end{center}
\caption{Satisfaction}
\label{satisfaction}
\end{table}
%

%%%%%%%%%%%%%%%%%%%%%%%%%%%%%%%%%%%%%%%%%%%%%%%%%%%%%%%%%%%%
\paragraph{Structure Morphisms.}
%%%%%%%%%%%%%%%%%%%%%%%%%%%%%%%%%%%%%%%%%%%%%%%%%%%%%%%%%%%%

A (model-theoretic) structure morphism 
(IFF~\cite{iff})
\[\mbox{\footnotesize{$
{\langle{r,k,f,g}\rangle} : 
\mathcal{M}_{2} = {\langle{\mathcal{R}_{2},{\langle{\sigma_{2},\tau_{2}}\rangle},\mathcal{E}_{2}}\rangle} \rightleftarrows
{\langle{\mathcal{R}_{1},{\langle{\sigma_{1},\tau_{1}}\rangle},\mathcal{E}_{1}}\rangle} = \mathcal{M}_{1}
$}\normalsize}\]
%from structure $\mathcal{M}_{2}$ to structure $\mathcal{M}_{1}$ 
is a two dimensional construction consisting of 
a relation infomorphism 
${\langle{r,k}\rangle} : 
\mathcal{R}_{2} = {\langle{R_{2},K_{2},\models_{\mathcal{R}_{2}}}\rangle} \rightleftarrows 
{\langle{R_{1},K_{1},\models_{\mathcal{R}_{1}}}\rangle} = \mathcal{R}_{1}$,
an entity infomorphism 
${\langle{f,g}\rangle} : 
\mathcal{E}_{2} = {\langle{X_{2},Y_{2},\models_{\mathcal{E}_{2}}}\rangle} \rightleftarrows 
{\langle{X_{1},Y_{1},\models_{\mathcal{E}_{1}}}\rangle} = \mathcal{E}_{1}$,
and a list classification square 
\vspace{-5pt}
\[\mbox{\footnotesize{
${\langle{{\langle{r,k}\rangle},\mathrmbf{List}_{{\langle{f,g}\rangle}}}\rangle} :
{\langle{\mathcal{R}_{2}
\!\!\!\stackrel{{\langle{\sigma_{2},\tau_{2}}\rangle}}{\rightrightarrows}\!\!\!
\mathrmbf{List}(\mathcal{E}_{2})}\rangle}
\rightleftarrows
{\langle{\mathcal{R}_{1} 
\!\!\!\stackrel{{\langle{\sigma_{1},\tau_{1}}\rangle}}{\rightrightarrows}\!\!\!
\mathrmbf{List}(\mathcal{E}_{1})}\rangle}$,
}\normalsize}\]
where the list infomorphism of the entity infomorphism
is the vertical target of the list square.
Hence,
a structure morphism satisfies the following conditions.
\begin{center}
{\footnotesize{\begin{tabular}{r@{\hspace{10pt}}c@{\hspace{10pt}}l}
\comment{
\multicolumn{3}{l}{\textsf{designations}}
\\
$k_{2}{\;\models_{\mathcal{R}_{2}}\;}r_{2}$
&
implies
&
$\tau_{2}(k_{2}){\;\models_{\mathrmbf{List}(\mathcal{E}_{2})}\;}\sigma_{2}(r_{2})$
\\
$k_{1}{\;\models_{\mathcal{R}_{1}}\;}r_{1}$
&
\underline{implies}
&
$\tau_{1}(k_{1}){\;\models_{\mathrmbf{List}(\mathcal{E}_{1})}\;}\sigma_{1}(r_{1})$
\\
\multicolumn{3}{l}{}
\\
}
\multicolumn{3}{l}{\textsf{infomorphisms}}
\\
$k_{1}{\;\models_{\mathcal{R}_{1}}\;}r(r_{2})$
&
\underline{iff}
&
$k(k_{1}){\;\models_{\mathcal{R}_{2}}\;}r_{2}$
\\
$y_{1}{\;\models_{\mathcal{E}_{1}}\;}f(x_{2})$
&
\underline{iff}
&
$g(y_{1}){\;\models_{\mathcal{E}_{2}}\;}x_{2}$
\\
${t_{1}{\,\cdot\,}g}={\scriptstyle\sum}_{g}(J,t_{1})
{\;\models_{\mathrmbf{List}(\mathcal{E}_{2})}\;}
{\langle{I,s_{2}}\rangle}=s_{2}$
&
\underline{iff}
&
$t_{1}={\langle{J,t_{1}}\rangle}
{\;\models_{\mathrmbf{List}(\mathcal{E}_{1})}\;}
{\scriptstyle\sum}_{f}(I,s_{2})={s_{2}{\,\cdot\,}f}$
\\
\multicolumn{3}{l}{}
\\
\multicolumn{3}{l}{\textsf{list preservation}}
\\
\multicolumn{3}{c}{$r{\;\cdot\;}\sigma_{1}\;=\;\sigma_{2}{\;\cdot\;}{\scriptstyle\sum}_{f}$}
\\
\multicolumn{3}{c}{$k{\;\cdot\;}\tau_{2}\;=\;\tau_{1}{\;\cdot\;}{\scriptstyle\sum}_{g}$}
\end{tabular}}}
\end{center}
Structure morphisms compose component-wise.
Let $\mathrmbf{Rel}$ denote the mathematical context of relational structures and structure morphisms.
%
%The first list preservation condition shows that 
A structure morphism
${\langle{r,k,f,g}\rangle} : \mathcal{M}_{2}\rightleftarrows\mathcal{M}_{1}$
has an associated schema morphism
$\mathrmbfit{sch}(r,k,f,g) = {\langle{r,f}\rangle} :
\mathrmbfit{sch}(\mathcal{M}_{2}) = {\langle{R_{2},\sigma_{2},X_{2}}\rangle} \Longrightarrow
{\langle{R_{1},\sigma_{1},X_{1}}\rangle} = \mathrmbfit{sch}(\mathcal{M}_{1})$.
%from the schema of the source structure $\mathcal{S}_{2}$ 
%to the schema of the target structure $\mathcal{S}_{1}$.
Hence,
there is a schema passage
$\mathrmbfit{sch} : \mathrmbf{Rel} \rightarrow \mathrmbf{Sch}$.
%%%%%%%%%%%%%%%%%%%%%%%%%%%%%%%%%%%%%%%%%%%%%%%%%%%%%%%%%%%%
\paragraph{Formula.}
%%%%%%%%%%%%%%%%%%%%%%%%%%%%%%%%%%%%%%%%%%%%%%%%%%%%%%%%%%%%

Any structure morphism
${\langle{r,k,f,g}\rangle} : 
{\langle{\mathcal{R}_{2},{\langle{\sigma_{2},\tau_{2}}\rangle},\mathcal{E}_{2}}\rangle} \rightleftarrows
{\langle{\mathcal{R}_{1},{\langle{\sigma_{1},\tau_{1}}\rangle},\mathcal{E}_{1}}\rangle}$
%with schema morphism
%$\mathrmbfit{sch}(r,k,f,g) = 
%{\langle{r,f}\rangle} :
%{\langle{R_{2},\sigma_{2},X_{2}}\rangle} \Rightarrow
%{\langle{R_{1},\sigma_{1},X_{1}}\rangle}$
has an associated formula structure morphism
\[\mbox{\footnotesize{$
\mathrmbfit{fmla}(r,k,f,g) = {\langle{\widehat{r},k,f,g}\rangle} : 
\mathrmbfit{fmla}(\mathcal{M}_{2}) = {\langle{\widehat{\mathcal{R}}_{2},{\langle{\sigma_{2},\tau_{2}}\rangle},\mathcal{E}_{2}}\rangle} \rightleftarrows
{\langle{\widehat{\mathcal{R}}_{1},{\langle{\sigma_{1},\tau_{1}}\rangle},\mathcal{E}_{1}}\rangle} = \mathrmbfit{fmla}(\mathcal{M}_{1})
$}\normalsize}\]
with schema morphism 
$\mathrmbfit{sch}(\mathrmbfit{fmla}(r,k,f,g)) = {\langle{\widehat{r},f}\rangle} :
%\mathrmbfit{sch}(\mathrmbfit{fmla}(\mathcal{M}_{2}))=
{\langle{\widehat{R}_{2},\widehat{\sigma}_{2},X_{2}}\rangle} \Rightarrow
{\langle{\widehat{R}_{1},\widehat{\sigma}_{1},X_{1}}\rangle}
%=\mathrmbfit{sch}(\mathrmbfit{fmla}(\mathcal{M}_{1}))
$.
Hence,
there is a formula passage
$\mathrmbfit{fmla} : \mathrmbf{Rel} \rightarrow \mathrmbf{Rel}$.
\footnote{The schema and formula passages commute:
%The schema of the formula is the formula of the schema:
$\mathrmbfit{fmla}{\;\circ\;}\mathrmbfit{sch} 
= \mathrmbfit{sch}{\;\circ\;}\mathrmbfit{fmla}$
(Fig.~\ref{fbr:arch}).}
Between any structure and its formula extension is an embedding structure morphism
$\eta_{\mathcal{M}} = {\langle{{inc}_{\mathcal{M}},{1}_{K},{1}_{\mathcal{E}}}\rangle} :
\mathcal{M} \Longrightarrow \mathrmbfit{fmla}(\mathcal{M})$.
The formula operator commutes with embedding:
$\eta_{\mathcal{M}_{2}}{\,\circ\,}\mathrmbfit{fmla}(r,k,f,g) = 
{\langle{r,k,f,g}\rangle}{\,\circ\,}\eta_{\mathcal{M}_{1}}$.
There is an embedding bridge
$\eta : \mathrmbfit{id}_{\mathrmbf{Rel}} \Rightarrow \mathrmbfit{fmla}$.
\begin{proposition}
There is an idempotent formula passage
$\mathrmbfit{fmla} : \mathrmbf{Rel} \rightarrow \mathrmbf{Rel}$
that forms a monad ${\langle{\mathrmbf{Rel},\eta,\mathrmbfit{fmla}}\rangle}$ with embedding.
\end{proposition}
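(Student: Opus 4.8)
The plan is to follow the same pattern as the corresponding proposition for $\mathrmbf{Sch}$, now one level up at the context of relational structures, and to exhibit ${\langle{\mathrmbf{Rel},\eta,\mathrmbfit{fmla}}\rangle}$ as an idempotent monad --- equivalently, to present $\mathrmbfit{fmla}:\mathrmbf{Rel}\rightarrow\mathrmbf{Rel}$ as a reflection onto the full subcontext of those structures whose relation classification is already saturated under the logical operations of Table~\ref{tbl:fmla:cls}. The passage $\mathrmbfit{fmla}:\mathrmbf{Rel}\rightarrow\mathrmbf{Rel}$ and the embedding bridge $\eta:\mathrmbfit{id}_{\mathrmbf{Rel}}\Rightarrow\mathrmbfit{fmla}$, with components $\eta_{\mathcal{M}}={\langle{{inc}_{\mathcal{M}},{1}_{K},{1}_{\mathcal{E}}}\rangle}$, are already in hand from the discussion above; naturality of $\eta$ is exactly the displayed identity $\eta_{\mathcal{M}_{2}}{\,\circ\,}\mathrmbfit{fmla}(r,k,f,g)={\langle{r,k,f,g}\rangle}{\,\circ\,}\eta_{\mathcal{M}_{1}}$. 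What remains is to produce the multiplication bridge $\mu:\mathrmbfit{fmla}{\,\circ\,}\mathrmbfit{fmla}\Rightarrow\mathrmbfit{fmla}$, to check the monad axioms, and to verify idempotence.

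To build $\mu$, fix a structure $\mathcal{M}={\langle{\mathcal{R},{\langle{\sigma,\tau}\rangle},\mathcal{E}}\rangle}$. The iterated formula structure $\mathrmbfit{fmla}(\mathrmbfit{fmla}(\mathcal{M}))$ has the same key set $K$ and the same entity classification $\mathcal{E}$, while its relation classification $\widehat{\widehat{\mathcal{R}}}$ has as types the formulas built, by the fiber connectives and the flow operators ${\scriptstyle\sum}_{h},{\scriptstyle\prod}_{h},{h}^{\ast}$, out of the already-formulas $\widehat{R}$ taken as atoms. Define the flattening map $\flat:\widehat{\widehat{R}}\rightarrow\widehat{R}$ by recursion on this outer formula: it is the identity on an atom $\varphi\in\widehat{R}$ and commutes with every connective and every flow operator (well posed precisely because $\widehat{R}$ is itself closed under all of them, by construction). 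The substance of the argument is the inductive verification that $\flat$ carries the classification faithfully, i.e. that it underlies an infomorphism
\[
\mu_{\mathcal{M}}={\langle{\flat,{1}_{K},{1}_{\mathcal{E}}}\rangle}:\mathrmbfit{fmla}(\mathrmbfit{fmla}(\mathcal{M}))\rightleftarrows\mathrmbfit{fmla}(\mathcal{M}),
\]
so that $k{\,\models_{\widehat{\widehat{\mathcal{R}}}}\,}\Phi$ iff $k{\,\models_{\widehat{\mathcal{R}}}\,}\flat(\Phi)$ for every $\Phi\in\widehat{\widehat{R}}$, with $\widehat{\widehat{\sigma}}=\flat{\,\cdot\,}\widehat{\sigma}$ so that the list designation and the entity classification are preserved. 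For the fiber operators this reads straight off the ``definiens'' column of Table~\ref{tbl:fmla:cls}; the flow operators are the delicate case, since there the classification of $\Phi$ is defined through the extent, via $\mathrmbfit{R}_{\widehat{\mathcal{M}}}(\varphi)={\wp}\tau(\mathrmbfit{ext}_{\widehat{\mathcal{R}}}(\varphi))$ and the direct-image, universal-image and inverse-image operators along $\mathrmbfit{tup}_{\mathcal{E}}(h)$, so one must check that $\flat$ commutes with passing to extents and that these image operators agree on the two sides. I expect this bookkeeping --- an induction on formulas of the same flavour as the recursive clauses of Table~\ref{tbl:fmla:fn} and Table~\ref{tbl:fmla:cls} --- to be the main obstacle; everything after it is formal. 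Naturality of $\mu$ in $\mathcal{M}$ then follows from Table~\ref{tbl:fmla:fn}, since the formula function $\widehat{r}$ commutes with all connectives and flow operators and hence with $\flat$.

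Finally, idempotence and the monad laws. Because $\widehat{R}\subseteq\widehat{\widehat{R}}$ is precisely the subset of atoms and $\flat$ is the identity on it, $\mu_{\mathcal{M}}$ is a retraction of the embedding $\eta_{\mathrmbfit{fmla}(\mathcal{M})}:\mathrmbfit{fmla}(\mathcal{M})\Longrightarrow\mathrmbfit{fmla}(\mathrmbfit{fmla}(\mathcal{M}))$; up to the evident identification of a formula-over-$\widehat{R}$ with the formula it flattens to, it is in fact a two-sided inverse, so $\mu$ is a natural isomorphism with $\eta\mathrmbfit{fmla}=\mathrmbfit{fmla}\eta=\mu^{-1}$. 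The unit laws $\mu{\,\circ\,}\eta\mathrmbfit{fmla}=\mathrmbfit{id}_{\mathrmbfit{fmla}}=\mu{\,\circ\,}\mathrmbfit{fmla}\eta$ and the associativity law $\mu{\,\circ\,}\mu\mathrmbfit{fmla}=\mu{\,\circ\,}\mathrmbfit{fmla}\mu$ are then the standard consequences of ${\langle{\mathrmbfit{fmla},\eta}\rangle}$ being a pointed endo-passage for which $\eta\mathrmbfit{fmla}$ is invertible --- the usual criterion for an idempotent monad --- so one may even bypass the explicit construction of $\mu$ and argue purely from ``$\eta\mathrmbfit{fmla}$ is an isomorphism''. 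As a cross-check, the statement also follows by transport along the schema passage $\mathrmbfit{sch}:\mathrmbf{Rel}\rightarrow\mathrmbf{Sch}$ from the already-established monad ${\langle{\mathrmbf{Sch},\eta,\mathrmbfit{fmla}}\rangle}$, using the commutation $\mathrmbfit{fmla}{\;\circ\;}\mathrmbfit{sch}=\mathrmbfit{sch}{\;\circ\;}\mathrmbfit{fmla}$ and adjoining the $K$- and $\mathcal{E}$-components fibrewise, since $\mathrmbfit{fmla}$ acts trivially on those.
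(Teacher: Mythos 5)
The paper states this proposition without any proof: the preceding text constructs the passage $\mathrmbfit{fmla}:\mathrmbf{Rel}\rightarrow\mathrmbf{Rel}$ on objects and morphisms, exhibits the embedding components $\eta_{\mathcal{M}}$, and records the naturality square, but the monad structure and idempotence are simply asserted. Your proposal therefore supplies genuinely missing content rather than paralleling an existing argument, and its overall shape is right: the only non-formal ingredient is the multiplication, and your reduction of it to an induction showing that the flattening map $\flat$ underlies an ($\mathrmbfit{inst}$-vertical, $\mathrmbfit{typ}$-surjective) infomorphism $k{\,\models_{\widehat{\widehat{\mathcal{R}}}}\,}\Phi$ iff $k{\,\models_{\widehat{\mathcal{R}}}\,}\flat(\Phi)$ is exactly the computation that Table~\ref{tbl:fmla:cls} makes routine (the flow clauses go through because $\tau$, $K$ and $\mathcal{E}$ are untouched by $\mathrmbfit{fmla}$, so the extent identity from the induction hypothesis transports directly through $\mathrmbfit{R}_{\widehat{\mathcal{M}}}(-)={\wp}\tau(\mathrmbfit{ext}(-))$).

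Two points deserve more care than you give them. First, strict idempotence fails at the level of syntax: $\flat$ is not injective (the outer conjunction of two atoms $\varphi,\psi\in\widehat{R}$ and the single atom $(\varphi\wedge\psi)\in\widehat{R}$ both flatten to the same inner formula), and dually $\eta_{\mathrmbfit{fmla}(\mathcal{M})}$ is not surjective on types; so neither is literally invertible in $\mathrmbf{Rel}$. Your parenthetical ``up to the evident identification'' is doing all the work in the claim that $\mu$ is a natural isomorphism, and since that claim is the entire content of ``idempotent,'' it should be made explicit --- either by defining the formula construction as a closure operator so that $\widehat{\widehat{R}}=\widehat{R}$ on the nose, or by weakening ``isomorphism'' to an equivalence/retraction and saying which monad laws then hold strictly. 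Second, the closing ``cross-check'' by transport along $\mathrmbfit{sch}$ is not an independent argument: the commutation $\mathrmbfit{fmla}\circ\mathrmbfit{sch}=\mathrmbfit{sch}\circ\mathrmbfit{fmla}$ gives you the monad on type data only, and ``adjoining the $K$- and $\mathcal{E}$-components fibrewise'' is precisely the infomorphism verification you already identified as the main obstacle, so nothing is bypassed there.
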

\comment{
\begin{fact}
For any structure  morphism
%\[\mbox{\footnotesize{
${\langle{r,k,f,g}\rangle} : 
\mathcal{M}_{2} 
%= {\langle{\mathcal{R}_{2},{\langle{\sigma_{2},\tau_{2}}\rangle},\mathcal{E}_{2}}\rangle} 
\rightleftarrows
%{\langle{\mathcal{R}_{1},{\langle{\sigma_{1},\tau_{1}}\rangle},\mathcal{E}_{1}}\rangle} = 
\mathcal{M}_{1}$,
%}\normalsize}\]
the formula function is monotonic between formula extent orders 
$\widehat{r} :
\mathrmbfit{ext}(\widehat{\mathcal{R}}_{2})
%={\langle{\widehat{R}_{2},\leq_{\widehat{\mathcal{R}}_{2}}}\rangle}
\rightarrow
%{\langle{\widehat{R}_{2},\leq_{\widehat{\mathcal{R}}_{2}}}\rangle}=
\mathrmbfit{ext}(\widehat{\mathcal{R}}_{2}) 
$:
$(\varphi{\;\leq_{\widehat{\mathcal{R}_{2}}}\;}\psi)$
\underline{implies}
$(\widehat{r} (\varphi){\;\leq_{\widehat{\mathcal{R}_{1}}}\;}\widehat{r} (\psi))$.
\end{fact}
}

%%%%%%%%%%%%%%%%%%%%%%%%%%%%%%%%%%%%%%%%%%%%%%%%%%%%%%%%%%%%
\paragraph{Structure Fiber Passage.}
%%%%%%%%%%%%%%%%%%%%%%%%%%%%%%%%%%%%%%%%%%%%%%%%%%%%%%%%%%%%

Let
$\mathcal{S}_{2}={\langle{R_{2},\sigma_{2},X_{2}}\rangle} 
\xRightarrow{\langle{r,f}\rangle}
{\langle{R_{1},\sigma_{1},X_{1}}\rangle}=\mathcal{S}_{1}$
be a schema morphism.
There is a structure passage
$\mathrmbf{Rel}(\mathcal{S}_{2}) \xleftarrow{\mathrmbfit{rel}_{{\langle{r,f}\rangle}}} \mathrmbf{Rel}(\mathcal{S}_{2})$
defined as follows.
Let
$\mathcal{M}_{1} = {\langle{\mathcal{R}_{1},{\langle{\sigma_{1},\tau_{1}}\rangle},\mathcal{E}_{1}}\rangle}
\in \mathrmbf{Rel}(\mathcal{S}_{1})$
be an $\mathcal{S}_{1}$-structure
with
a relation classification
$\mathcal{R}_{1} = {\langle{R_{1},K_{1},\models_{\mathcal{R}_{1}}}\rangle}$,
an entity classification 
$\mathcal{E}_{1} = {\langle{X_{1},Y_{1},\models_{\mathcal{E}_{1}}}\rangle}$ and
a list designation 
${\langle{\sigma_{1},\tau_{1}}\rangle} : \mathcal{R}_{1} \rightrightarrows \mathrmbf{List}(\mathcal{E}_{1})$.
Define the inverse image
$\mathcal{S}_{2}$-structure
$\mathrmbfit{rel}_{{\langle{r,f}\rangle}}(\mathcal{M}_{1}) = 
{\langle{r^{-1}(\mathcal{R}_{1}),{\langle{\sigma_{2},\tau_{1}}\rangle},f^{-1}(\mathcal{E}_{1})}\rangle}
\in \mathrmbf{Rel}(\mathcal{S}_{2})$
with
$r^{-1}(\mathcal{R}_{1}) = {\langle{R_{2},K_{1},\models_{r}}\rangle}$,
$f^{-1}(\mathcal{E}_{1}) = {\langle{X_{2},Y_{1},\models_{f}}\rangle}$ and
a list designation 
${\langle{\sigma_{2},\tau_{1}}\rangle} : r^{-1}(\mathcal{R}_{1}) \rightrightarrows f^{-1}(\mathcal{E}_{1})$.
From the definitions of inverse image classifications, 
we have the two logical equivalences
(1) 
$k_{1} \models_{r} r_{2}
\;\text{\underline{iff}}\;
k_{1} \models_{\mathcal{E}_{1}} r(r_{2})$ 
and
(2)
${\langle{J_{1},t_{1}}\rangle} \models_{{\scriptscriptstyle\sum}_{f}} {\langle{I_{2},s_{2}}\rangle} 
\;\text{\underline{iff}}\;
{\langle{J_{1},t_{1}}\rangle} \models_{\mathrmbf{List}(\mathcal{E}_{1})} {\scriptstyle\sum}_{f}(I_{2},s_{2})$.
Hence,
$k_{1} \models_{r} r_{2}
\;\text{\underline{implies}}\; 
%\tau_{1}(k_{1}) \models_{\mathrmbf{List}(\mathcal{E}_{1})} \sigma_{1}(r(r_{2})) = {\scriptstyle\sum}_{f}(\sigma_{2}(r_{2}))
%\;\text{\underline{iff}}\;
\tau_{1}(k_{1}) \models_{{\scriptscriptstyle\sum}_{f}} \sigma_{2}(r_{2})$.
There is a bridging structure morphism
\[\mbox
%\newline
%\mbox{}
%\hfill
{\footnotesize{
$\mathrmbfit{rel}_{{\langle{r,f}\rangle}}(\mathcal{M}_{1}) 
= {\langle{r^{-1}(\mathcal{R}_{1}),{\langle{\sigma_{2},\tau_{1}}\rangle},f^{-1}(\mathcal{E}_{1})}\rangle}
\stackrel{{\langle{r,1_{K},f,1_{Y}}\rangle}}{\rightleftarrows} 
{\langle{\mathcal{R}_{1},{\langle{\sigma_{1},\tau_{1}}\rangle},\mathcal{E}_{1}}\rangle} = \mathcal{M}_{1}$
}\normalsize}
%\hfill
%\mbox{}
%\newline
\]
with relation and entity infomorphisms
$r^{-1}(\mathcal{R}_{1})
\stackrel{{\langle{r,1_{K}}\rangle}}{\rightleftarrows}
\mathcal{R}_{1}$ 
and
$f^{-1}(\mathcal{E}_{1})
\stackrel{{\langle{f,1_{Y}}\rangle}}{\rightleftarrows} 
\mathcal{E}_{1}$.
%

%%%%%%%%%%%%%%%%%%%%%%%%%%%%%%%%%%%%%%%%%%%%%%%%%%%%%%%%%%%%
\paragraph{\underline{First-order Semantics}: 
$\mathrmbf{Rel}\xleftarrow{\mathrmbfit{rel}}\hspace{-6pt}
\underset{\mathrmbf{Rel}{\times}_{\mathrmbf{Cls}}\mathrmbf{Alg}}{\mathrmbf{Struc}}
\hspace{-8pt}\xrightarrow{\mathrmbfit{lang}}\mathrmbf{Lang}$.}
%%%%%%%%%%%%%%%%%%%%%%%%%%%%%%%%%%%%%%%%%%%%%%%%%%%%%%%%%%%%

\begin{sloppypar}
The mathematical context of first-order structures $\mathrmbf{Struc}$ is the product of 
the context $\mathrmbf{Rel}$ of relational structures and
the context $\mathrmbf{Alg}$ of algebras modulo
the context $\mathrmbf{Cls}$ of classifications.
%
%%%%%%%%%%%%%%%%%%%%%%%%%%%%%%%%%%%%%%%%%%%%%%%%%%%%%%%%%%%%
%\paragraph{FOL Structures.}
%%%%%%%%%%%%%%%%%%%%%%%%%%%%%%%%%%%%%%%%%%%%%%%%%%%%%%%%%%%%
%
A first-order logic (FOL) structure is a ``pair'' 
$\mathcal{M}={\langle{\mathcal{R},{\langle{\sigma,\tau}\rangle},\mathcal{E},{\langle{\Omega,A,\delta}\rangle}}\rangle}$
consisting of
a relational structure
${\langle{\mathcal{R},{\langle{\sigma,\tau}\rangle},\mathcal{E}}\rangle}$
and 
an algebra
${\langle{\mathcal{E},{\langle{\Omega,A,\delta}\rangle}}\rangle}$
that share a common entity classification $\mathcal{E}$.
The algebra is the semantic base and
the relational structure is the superstructure.
%
%%%%%%%%%%%%%%%%%%%%%%%%%%%%%%%%%%%%%%%%%%%%%%%%%%%%%%%%%%%%
%\newpage
%\paragraph{Satisfaction.}
%%%%%%%%%%%%%%%%%%%%%%%%%%%%%%%%%%%%%%%%%%%%%%%%%%%%%%%%%%%%
%
Given a FOL language $\mathcal{L} = {\langle{\mathcal{S},\mathcal{O}}\rangle}$
and an $\mathcal{L}$-structure $\mathcal{M}$
%=
%{\langle{\mathcal{R},{\langle{\sigma,\tau}\rangle},\mathcal{E},{\langle{\Omega,A,\delta}\rangle}}\rangle}$
with
relational $\mathcal{S}$-structure
$\mathrmbfit{rel}(\mathcal{M})$
and
$\mathcal{O}$-algebra
$\mathrmbfit{alg}(\mathcal{M})$.
%\begin{itemize}
%\item 
$\mathcal{M}$ satisfies 
an $\mathcal{L}$-equation
${\langle{I',s'}\rangle} \xrightharpoondown{(t=t')} {\langle{I,s}\rangle}$,
symbolized by
$\mathcal{M}{\;\models_{\mathcal{L}}\;}(t=t')$,
when
$\mathrmbfit{alg}(\mathcal{M}){\;\models_{\mathcal{L}}\;}(t=t')$; and
%\item 
$\mathcal{M}$ satisfies 
an $\mathcal{L}$-constraint $\varphi'\xrightarrow{[t]}\varphi$,
symbolized by
$\mathcal{M}{\;\models_{\mathcal{L}}\;}(\varphi'\xrightarrow{[t]}\varphi)$,
when
$\mathrmbfit{rel}(\mathcal{M}){\;\models_{\mathcal{S}}\;}(\varphi'\xrightarrow{t}\varphi)$
for any representative term vector
$\widehat{\sigma}(\varphi')={\langle{I',s'}\rangle} \xrightharpoondown{t} {\langle{I,s}\rangle}=\widehat{\sigma}(\varphi)$.
%\end{itemize}
%\comment{
%
%%%%%%%%%%%%%%%%%%%%%%%%%%%%%%%%%%%%%%%%%%%%%%%%%%%%%%%%%%%%
%\paragraph{FOL Structure Morphisms.}
%%%%%%%%%%%%%%%%%%%%%%%%%%%%%%%%%%%%%%%%%%%%%%%%%%%%%%%%%%%%
%
A first-order logic (FOL) structure morphism
${\langle{\mathcal{R}_{2},{\langle{\sigma_{2},\tau_{2}}\rangle},\mathcal{E}_{2},{\langle{\Omega_{2},A_{2},\delta_{2}}\rangle}}\rangle}
	\xrightarrow{\langle{{\langle{r,k}\rangle},{\langle{f,g}\rangle},{\langle{\omega,h}\rangle}}\rangle}
{\langle{\mathcal{R}_{1},{\langle{\sigma_{1},\tau_{1}}\rangle},\mathcal{E}_{1},{\langle{\Omega_{1},A_{1},\delta_{1}}\rangle}}\rangle}$
consists
a relational structure morphism
${\langle{\mathcal{R}_{2},{\langle{\sigma_{2},\tau_{2}}\rangle},\mathcal{E}_{2}}\rangle}
	\xrightarrow{\langle{{\langle{r,k}\rangle},{\langle{f,g}\rangle}}\rangle}
{\langle{\mathcal{R}_{1},{\langle{\sigma_{1},\tau_{1}}\rangle},\mathcal{E}_{1}}\rangle}$
and 
an many-sorted algebraic homomorphism
%${\langle{\mathcal{E}_{2},{\langle{\Omega_{2},A_{2},\delta_{2}}\rangle}}\rangle}
%	\xrightarrow{\langle{{\langle{f,g}\rangle},\omega,h}\rangle}
%{\langle{\mathcal{E}_{1},{\langle{\Omega_{1},A_{1},\delta_{1}}\rangle}}\rangle}$
${\langle{\mathcal{E}_{2},\mathcal{O}_{2},{\langle{A_{2},\delta_{2}}\rangle}}\rangle}
	\xrightarrow{\langle{f,g,\omega,h}\rangle}
{\langle{\mathcal{E}_{1},\mathcal{O}_{1},{\langle{A_{1},\delta_{1}}\rangle}}\rangle}$
that share a common entity infomorphism
${\langle{f,g}\rangle} : \mathcal{E}_{2} \rightleftarrows \mathcal{E}_{1}$.
\end{sloppypar}

\comment{

\footnote{An $\mathcal{S}$-structure $\mathcal{M} \in \mathrmbf{Rel}(\mathcal{S})$
satisfies 
an $\mathcal{S}$-constraint $\varphi\xrightarrow{h}\varphi'$,
symbolized by
$\mathcal{M}{\;\models_{\mathcal{S}}\;}(\varphi\xrightarrow{h}\varphi')$,
when
$\mathcal{M}$ 
satisfies the sequent $({\scriptstyle\sum}_{h}(\varphi'){\;\vdash\;}\varphi)$;
that is,
when
${\scriptstyle\sum}_{h}(\varphi'){\;\leq_{\widehat{\mathcal{R}}}\;}\varphi$;
equivalently,
when
$\varphi'{\;\leq_{\widehat{\mathcal{R}}}\;}{h}^{\ast}(\varphi)$.
This can be expressed in terms of implication as
$({\scriptstyle\sum}_{h}(\varphi'){\,\rightarrowtriangle\,}\varphi) \equiv \top$;
equivalently,
$(\varphi'{\,\rightarrowtriangle\,}{h}^{\ast}(\varphi)) \equiv \top$.}
\footnote{An $\mathcal{O}$-algebra $\mathcal{A} \in \mathrmbf{Alg}(\mathcal{O})$
satisfies 
an $\mathcal{O}$-equation
${\langle{I',s'}\rangle} \xrightharpoondown{(t=t')} {\langle{I,s}\rangle}$,
symbolized by
$\mathcal{A}{\;\models_{\mathcal{O}}\;}(t=t')$,
when the interpretation 
$\mathrmbf{Term}_{{\langle{X,\Omega}\rangle}}^{\mathrm{op}}\xrightarrow{\mathcal{A}^{\ast}}\mathrmbf{Set}$
maps the two terms to the same function
$\mathcal{A}^{\ast}(I',s')\xleftarrow{\mathcal{A}^{\ast}(t)=\mathcal{A}^{\ast}(t')}\mathcal{A}^{\ast}(I,s)$.}
}

\comment{
\begin{center}
\begin{tabular}{c}
\setlength{\unitlength}{0.6pt}
\begin{picture}(120,180)(0,-75)
\put(30,80){\makebox(0,0)[r]{\footnotesize{$
\mathrmbf{Term}_{{\langle{X_{2},\emptyset}\rangle}}^{\mathrm{op}}=\mathrmbf{List}(X_{2})^{\mathrm{op}}$}}}
\put(90,80){\makebox(0,0)[l]{\footnotesize{$\mathrmbf{List}(X_{1})^{\mathrm{op}}=\mathrmbf{Term}_{{\langle{X_{1},\emptyset}\rangle}}^{\mathrm{op}}$}}}
\put(0,0){\makebox(0,0){\footnotesize{$\mathrmbf{Term}_{{\langle{X_{2},\Omega_{2}}\rangle}}^{\mathrm{op}}$}}}
\put(120,0){\makebox(0,0){\footnotesize{$\mathrmbf{Term}_{{\langle{X_{2},\Omega_{2}}\rangle}}^{\mathrm{op}}$}}}
\put(60,95){\makebox(0,0){\scriptsize{$\overset{\textstyle{\mathrmbfit{term}_{{\langle{f,0}\rangle}}^{\mathrm{op}}}}{({\scriptscriptstyle\sum}_{f})^{\mathrm{op}}}$}}}
\put(60,15){\makebox(0,0){\scriptsize{$\mathrmbfit{term}_{{\langle{f,\omega}\rangle}}^{\mathrm{op}}$}}}
\put(18,40){\makebox(0,0)[r]{\scriptsize{$\mathrmbfit{term}_{{\langle{\mathrmit{id}_{X_{2}},0}\rangle}}^{\mathrm{op}} = \mathrmbfit{inc}^{\mathrm{op}}$}}}
\put(108,40){\makebox(0,0)[l]{\scriptsize{$\mathrmbfit{inc}^{\mathrm{op}}=\mathrmbfit{term}_{{\langle{\mathrmit{id}_{X_{1}},0}\rangle}}^{\mathrm{op}}$}}}
\put(40,80){\vector(1,0){40}}
\put(45,0){\vector(1,0){30}}
\put(0,65){\vector(0,-1){50}}
\put(120,65){\vector(0,-1){50}}
%%%%%%%%%%%%%%%%%%%%%%%%%%%%%%%%%%%%%%%%%%%%%%%%%%%%%%%%%%%%
\put(60,-75){\makebox(0,0){\footnotesize{$\mathrmbf{Set}$}}}
\put(24,-42){\makebox(0,0)[r]{\footnotesize{$\mathcal{A}^{\ast}_{2}$}}}
\put(97,-42){\makebox(0,0)[l]{\footnotesize{$\mathcal{A}^{\ast}_{1}$}}}
\put(60,-26){\makebox(0,0){\shortstack{\scriptsize{$\alpha$}\\\large{$\Longleftarrow$}}}}
\put(9,-12){\vector(3,-4){40}}
\put(111,-12){\vector(-3,-4){40}}
\qbezier(-20,65)(-90,-20)(20,-65)\put(20,-65){\vector(2,-1){0}}
\qbezier(140,65)(210,-20)(100,-65)\put(100,-65){\vector(-2,-1){0}}
\put(-48,-15){\makebox(0,0)[r]{\scriptsize{$\mathrmbfit{tup}_{\mathcal{E}_{2}}$}}}
\put(172,-15){\makebox(0,0)[l]{\scriptsize{$\mathrmbfit{tup}_{\mathcal{E}_{1}}$}}}
\put(60,40){\makebox(0,0){\shortstack{\scriptsize{$\tau_{{\langle{f,g}\rangle}}$}\\\large{$\Longleftarrow$}}}}
\qbezier[80](-36,35)(60,35)(156,35)
\end{picture}
\end{tabular}
\end{center}
}
%

%%%%%%%%%%%%%%%%%%%%%%%%%%%%%%%%%%%%%%%%%%%%%%%%%%%%%%%%%%%%%%%%%%%%%%%%%%%%%%%%%%%%%%%%%%%%%%%%%%%%
%%%%%%%%%%%%%%%%%%%%%%%%%%%%%%%%%%%%%%%%%%%%%%%%%%%%%%%%%%%%%%%%%%%%%%%%%%%%%%%%%%%%%%%%%%%%%%%%%%%%
%%%%%%%%%%%%%%%%%%%%%%%%%%%%%%%%%%%%%%%%%%%%%%%%%%%%%%%%%%%%%%%%%%%%%%%%%%%%%%%%%%%%%%%%%%%%%%%%%%%%
%%%%%%%%%%%%%%%%%%%%%%%%%%%%%%%%%%%%%%%%%%%%%%%%%%%%%%%%%%%%%%%%%%%%%%%%%%%%%%%%%%%%%%%%%%%%%%%%%%%%
%\newpage
\subsection{Examples}\label{sec:egs}
%%%%%%%%%%%%%%%%%%%%%%%%%%%%%%%%%%%%%%%%%%%%%%%%%%%%%%%%%%%%%%%%%%%%%%%%%%%%%%%%%%%%%%%%%%%%%%%%%%%%
%%%%%%%%%%%%%%%%%%%%%%%%%%%%%%%%%%%%%%%%%%%%%%%%%%%%%%%%%%%%%%%%%%%%%%%%%%%%%%%%%%%%%%%%%%%%%%%%%%%%

\begin{description}
%\newpage
%\begin{eg}
%
\item[Conceptual Graphs:] 
Consider the English sentence
``John is going to Boston by bus''
\cite{sowa:kr}.
We describe its representation
in a {\ttfamily FOLE} logic language $\mathcal{L} = {\langle{R,\sigma,X,\Omega}\rangle}$.
% as formalism 
%and an entity classification $\mathcal{E} = {\langle{X,Y,\models_{\mathcal{E}}}\rangle}$ as semantic base.
%\begin{itemize}
%\item 
By representing the verb as a ternary relation,
a graphical representation is
\begin{center}
\begin{tabular}{c}
\setlength{\unitlength}{0.45pt}
\begin{picture}(120,80)(0,-5)
\put(0,60){\makebox(0,0){\footnotesize{$[Person: John]\xleftarrow{agnt}(Go)\xrightarrow{dest}[City: Boston]$}}}
\put(25,30){\makebox(0,0){\small{$\downarrow$}\;\footnotesize{$inst$}}}
\put(10,0){\makebox(0,0){\footnotesize{$[Bus]$}}}
\end{picture}
\end{tabular}
\end{center}
Formally,
we have the following elements:
three entity types $\mathtt{Person},\mathtt{City},\mathtt{Bus}{\,\in\,}X$;
a relation type $\mathtt{Go}{\,\in\,}R$
with signature $\sigma(\mathtt{Go}) = {\langle{I,s}\rangle}$
having valence 3, arity $I = \{\mathtt{agnt},\mathtt{dest},\mathtt{inst}\}$
and signature function $I \xrightarrow{s} X$
mapping
$\mathtt{agnt}\mapsto\mathtt{Person}$,
$\mathtt{dest}\mapsto\mathtt{City}$,
$\mathtt{inst}\mapsto\mathtt{Bus}$;
a constant symbol $\mathtt{John}{\,\in\,}{\Omega}_{\mathtt{Person},{\langle{\emptyset,0_{X}}\rangle}}$ 
of sort $\mathtt{Person}$ 
%and symbolized by $\mathtt{Person}\xrightharpoondown{\mathtt{John}}{\langle{\emptyset,0_{X}}\rangle}$
and
a constant symbol $\mathtt{Boston}{\,\in\,}{\Omega}_{\mathtt{City},{\langle{\emptyset,0_{X}}\rangle}}$ 
of sort $\mathtt{City}$.
% and symbolized by $\mathtt{City}\xrightharpoondown{\mathtt{Boston}}{\langle{\emptyset,0_{X}}\rangle}$.
%
\footnote{\label{roles}
According to (Sowa~\cite{sowa:kr}),
every participant of a process is an entity that plays some role in that process. 
There is a ``linearization'' procedure
that converts a binary/relational logical representation 
({\ttfamily FOLE}, conceptual graphs)
%such as {\ttfamily FOLE} or conceptual graphs
to a unary/functional logical representation
(Sketches~\cite{johnson:rosebrugh:wood:era}, Ologs~\cite{spivak:kent:olog}).
%such as sketches (Johnson et al~\cite{johnson:rosebrugh:wood:era}) or ologs (Spivak and Kent~\cite{spivak:kent:olog}).
In this example,
linearization would define \emph{functional} roles,
changing the ternary relation type (process)
%representing a process
to an entity type $\mathtt{Go}{\,\in\,}X$ and
converting its arity elements (participent roles)
%representing the participent roles in that process
to function types
$\mathtt{agnt}{\,\in\,}{\Omega}_{\mathtt{Person},{\langle{\mathbf{1},\mathtt{Go}}\rangle}}$,
$\mathtt{dest}{\,\in\,}{\Omega}_{\mathtt{City},{\langle{\mathbf{1},\mathtt{Go}}\rangle}}$ and
$\mathtt{inst}{\,\in\,}{\Omega}_{\mathtt{Bus},{\langle{\mathbf{1},\mathtt{Go}}\rangle}}$.}
%\newline
%\end{itemize}
%\item 
In a conceptual graph representation,
the logic language $\mathcal{L} = {\langle{R,\sigma,X,\Omega}\rangle}$
corresponds to a CG module ${\langle{X,R,C}\rangle}$
with type hierarchy $X$, relation hierarchy $R$ and catalog of individuals $C{\,\subseteq\,}\Omega$.
A CG representation is
%A representation (linear form) might be (compare to the linearization~\ref{roles})
%
%\vspace{-10pt}
\begin{center}
{\ttfamily\scriptsize\begin{tabular}{@{\hspace{-50pt}}l}
[Go]-\\
\hspace{16pt}(agnt)->[Person: John]\\
\hspace{16pt}(dest)->[City: Boston]\\
\hspace{16pt}(inst)->[Bus].
\end{tabular}}
\end{center}
Formally (compare this linear form to~\ref{roles}),
we have the following elements:
four entity types $\mathtt{Go},\mathtt{Person},\mathtt{City},\mathtt{Bus}{\,\in\,}X$;
three relation types $\mathtt{agnt},\mathtt{dest},\mathtt{inst}{\,\in\,}R$
with signatures
$\sigma(\mathtt{agnt}) = {\langle{\mathbf{2},s_{\mathtt{agnt}}}\rangle}$,
$\sigma(\mathtt{dest}) = {\langle{\mathbf{2},s_{\mathtt{dest}}}\rangle}$,
$\sigma(\mathtt{inst}) = {\langle{\mathbf{2},s_{\mathtt{inst}}}\rangle}$
having valence 2, arity $\mathbf{2} = \{0,1\}$ and
signatures $s_{\mathtt{agnt}},s_{\mathtt{dest}},s_{\mathtt{inst}}:\mathbf{2} \rightarrow X$,
where
$s_{\mathtt{agnt}}(0)=s_{\mathtt{dest}}(0)=s_{\mathtt{inst}}(0)=\mathtt{Go}$,
$s_{\mathtt{agnt}}(1)=\mathtt{Person}$,
$s_{\mathtt{dest}}(1)=\mathtt{City}$, and
$s_{\mathtt{inst}}(1)=\mathtt{Bus}$;
and
two constants as above.
%\end{itemize}
\mbox{}\newline
\item[Quantification:] 
The universal quantification `$\forall_{x\in{X}}P(x{:}X,y{:}Y,z{:}Z)$'
is traditionally viewed as formula flow 
along the type list inclusion $\{y,z\}\subseteq\{x,y,z\}$.
{\ttfamily FOLE} handles existential/universal quantification and substitution
in terms of formula flow (Table~\ref{lift:flow})
along type list morphisms 
in the relational aspect
or along term vectors
in the logical aspect.
%We illustrate in terms of tables.
Given a morphism of type lists 
${\langle{I',s'}\rangle} \xrightarrow{h} {\langle{I,s}\rangle}$,
for any table 
${\langle{K,t}\rangle} \in \mathrmbf{Tbl}_{\mathcal{E}}(I,s)$, 
you can get two tables
${\scriptstyle\sum}_{h}(K,t),{\scriptstyle\prod}_{h}(K,t) \in \mathrmbf{Tbl}_{\mathcal{E}}(I',s')$
as follows.  
Given any possible row (or better, tuple) $t' \in \mathrmbfit{tup}_{\mathcal{E}}(I',s')$,
you can ask either an existential or a universal question about it:
for example, 
``Does there \emph{exist} a key $k \in K$ in $T$ with image $t'$?'' 
($\mathrmbfit{tup}_{h}(t_{k}) = t'$)
or 
``Is it the case that \emph{all possible} tuples $t \in \mathrmbfit{tup}_{\mathcal{E}}(I,s)$
with image $t'$ are present in $T$?''
(\cite{kent:spivak:email})
\mbox{}\newline
\item[Relation/Database Joins:] 
The joins of $\mathcal{E}$-relations (or $\mathcal{E}$-tables) are represented in {\ttfamily FOLE} 
in terms of fibered products
 --- products modulo some reference.
If 
an $\mathcal{S}$-span of constraints
${\langle{I_{1},s_{1},\varphi_{1}}\rangle}\xleftarrow{h_{1}}{\langle{I,s,\varphi}\rangle}
\xrightarrow{h_{2}}{\langle{I_{2},s_{2},\varphi}\rangle}$
holds in a relational structure
$\mathcal{M} = 
{\langle{\mathcal{R},{\langle{\sigma,\tau}\rangle},\mathcal{E}}\rangle}$,
%linking two $\mathcal{S}$-formula $\varphi_{1}$ and $\varphi_{2}$ through a reference formula $\varphi$.
%defining the sequents
%generalization-specialization orderings
%$({\scriptstyle\sum}_{h_{1}}(\varphi_{1}){\;\vdash\;}\varphi)$ and
%$({\scriptstyle\sum}_{h_{2}}(\varphi_{2}){\;\vdash\;}\varphi)$.
%If $\mathcal{M}$ satisfies these constraints,
it is interpreted as an opspan of $\mathcal{E}$-relations 
(or $\mathcal{E}$-tables).
Then the join of $\mathcal{E}$-relations (or $\mathcal{E}$-tables)
is represent by the formula
${\iota_{1}}^{\ast}(\varphi_{1})\wedge_{{\langle{\widehat{I},\widehat{s}}\rangle}}{\iota_{2}}^{\ast}(\varphi_{2})$,
where 
${\langle{I_{1},s_{1}}\rangle}\xrightarrow{\iota_{1}}{\langle{\widehat{I},\widehat{s}}\rangle}
\xleftarrow{\iota_{2}}{\langle{I_{2},s_{2}}\rangle}$
is the fibered sum of type lists.
In general, 
the join of an arbitrary diagram 
of $\mathcal{E}$-relations (or $\mathcal{E}$-tables) 
is obtained by
substitution 
%along the sum type list injections, 
followed by 
conjunction.
%) at the sum type list.
\comment{
An $\mathcal{S}$-constraint 
${\langle{I',s',\varphi'}\rangle}\xrightarrow{h}{\langle{I,s,\varphi}\rangle}$
consists of a type list morphism
${\langle{I',s'}\rangle} \xrightarrow{h} {\langle{I,s}\rangle}$
and 
a binary sequent $({\scriptstyle\sum}_{h}(\varphi){\;\vdash\;}\varphi')$,
or equivalently
a binary sequent  $(\varphi{\;\vdash\;}{h}^{\ast}(\varphi'))$.
}
%\mbox{}\newline
\comment{
\item[Linearization:] 
The Sketch formalism {\ttfamily Sk} \cite{johnson:rosebrugh:wood:era}, 
which also represents database systems and natural language processing, 
is a version of equational logic. 
Both the {\ttfamily Sk} and {\ttfamily FOLE} formalisms are logical representations. 
However, the {\ttfamily FOLE} formalism is binary and relational, 
whereas the {\ttfamily Sk} formalism is unary and functional. 
The {\ttfamily FOLE} formalism is binary since it has two kinds of type, entities and relations; 
it is relational in the way it interprets edges (see above example).
The {\ttfamily Sk} formalism is unary since it has only one kind of type, the abstract entity; 
it is functional in the way it interprets edges. 
However, much of the semantics of the {\ttfamily FOLE} formalism 
can be transformed to the {\ttfamily Sk} formalism by the process of linearization, 
thereby gaining in efficiency and conciseness. 
This linearization process works for any binary/relational knowledge representation, 
such as 
conceptual graphs, entity-relationship data modelling \cite{johnson:rosebrugh:wood:era}, relational database systems, 
or the Information Flow Framework \cite{iff}. 
In the entity-relationship data modelling, 
n-ary relationship links are replaced by n-ary spans of edges and attributes are included as types.
The paper (Spivak and Kent~\cite{spivak:kent:olog}) demonstrates the process of linearization in more detail.
}
\end{description}
%\end{eg}
%
%\marginpar{\scriptsize Discuss the fact that linearization should give \underline{functional} roles}
%

%%%%%%%%%%%%%%%%%%%%%%%%%%%%%%%%%%%%%%%%%%%%%%%%%%%%%%%%%%%%%%%%%%%%%%%%%%%%%%%%%%%%%%%%%%%%%%%%%%%%
%%%%%%%%%%%%%%%%%%%%%%%%%%%%%%%%%%%%%%%%%%%%%%%%%%%%%%%%%%%%%%%%%%%%%%%%%%%%%%%%%%%%%%%%%%%%%%%%%%%%

%%%%%%%%%%%%%%%%%%%%%%%%%%%%%%%%%%%%%%%%%%%%%%%%%%%%%%%%%%%%%%%%%%%%%%%%%%%%%%%%%%%%%%%%%%%%%%%%%%%%
%%%%%%%%%%%%%%%%%%%%%%%%%%%%%%%%%%%%%%%%%%%%%%%%%%%%%%%%%%%%%%%%%%%%%%%%%%%%%%%%%%%%%%%%%%%%%%%%%%%%
\subsection{Logical Environment}\label{append:log-env}
%%%%%%%%%%%%%%%%%%%%%%%%%%%%%%%%%%%%%%%%%%%%%%%%%%%%%%%%%%%%%%%%%%%%%%%%%%%%%%%%%%%%%%%%%%%%%%%%%%%%
%%%%%%%%%%%%%%%%%%%%%%%%%%%%%%%%%%%%%%%%%%%%%%%%%%%%%%%%%%%%%%%%%%%%%%%%%%%%%%%%%%%%%%%%%%%%%%%%%%%%

Let
$\mathcal{S}_{2}={\langle{R_{2},\sigma_{2},X_{2}}\rangle} 
\xRightarrow{\langle{r,f}\rangle}
{\langle{R_{1},\sigma_{1},X_{1}}\rangle}=\mathcal{S}_{1}$
be a schema morphism,
with structure fiber passage
$\mathrmbf{Struc}(\mathcal{S}_{2}) \xleftarrow{\mathrmbfit{struc}_{{\langle{r,f}\rangle}}} \mathrmbf{Struc}(\mathcal{S}_{2})$
and
bridging structure morphism
\[\mbox{\footnotesize{
$\mathrmbfit{struc}_{{\langle{r,f}\rangle}}(\mathcal{M}_{1}) 
= {\langle{r^{-1}(\mathcal{R}_{1}),{\langle{\sigma_{2},\tau_{1}}\rangle},f^{-1}(\mathcal{E}_{1})}\rangle}
\stackrel{{\langle{r,1_{K},f,1_{Y}}\rangle}}{\rightleftarrows} 
{\langle{\mathcal{R}_{1},{\langle{\sigma_{1},\tau_{1}}\rangle},\mathcal{E}_{1}}\rangle} = \mathcal{M}_{1}$
}\normalsize}\]
with relation and entity infomorphisms
$r^{-1}(\mathcal{R}_{1})
\stackrel{{\langle{r,1_{K}}\rangle}}{\rightleftarrows}
\mathcal{R}_{1}$ 
and
$f^{-1}(\mathcal{E}_{1})
\stackrel{{\langle{f,1_{Y}}\rangle}}{\rightleftarrows} 
\mathcal{E}_{1}$.

\begin{proposition}
The (formula) interpretation of the inverse image structure
is the inverse image of the (formula) interpretation.
\end{proposition}
\begin{fact}
The formula classification of the inverse image relation classfication is
the inverse image classfication of the formula relation classification:
\[\mbox{\footnotesize{$
\widehat{r^{-1}(\mathcal{R}_{1})} 
= \widehat{{\langle{R_{2},K_{1},\models_{r}}\rangle}} 
= {\langle{\widehat{R}_{2},K_{1},\models_{\widehat{r}}}\rangle}
= \widehat{r}^{-1}(\widehat{\mathcal{R}}_{1}).
$}\normalsize}\]
\end{fact}
\begin{proof}
The proof is by induction on formulas $\varphi_{2} \in \widehat{R}_{2}$.
\end{proof}
\begin{fact}
The formula structure morphism of the bridging structure morphism is: 
\[\mbox{\footnotesize{$
%\mathrmbf{Fmla}(r,1_{K},f,1_{Y}) = 
{\langle{\widehat{r},1_{K},f,1_{Y}}\rangle} : 
%\mathrmbf{Fmla}(r^{-1}(\mathcal{R}_{1}),{\langle{\sigma_{2},\tau_{1}}\rangle},f^{-1}(\mathcal{E}_{1})) = 
{\langle{\widehat{r^{-1}(\mathcal{R}_{1})},{\langle{\sigma_{2},\tau_{1}}\rangle},f^{-1}(\mathcal{E}_{1})}\rangle} 
\rightleftarrows
{\langle{\widehat{\mathcal{R}}_{1},{\langle{\sigma_{1},\tau_{1}}\rangle},\mathcal{E}_{1}}\rangle}. 
%= \mathrmbf{Fmla}(\mathcal{M}_{1})
$}\normalsize}\]
Its ($\mathrmbfit{inst}$-vertical) relation infomorphism
\newline
${\langle{\widehat{r},1_{K}}\rangle} : 
\widehat{r^{-1}(\mathcal{R}_{1})} 
= 
%\underline{
\widehat{{\langle{R_{2},K_{1},\models_{r}}\rangle}} 
= 
{\langle{\widehat{R}_{2},K_{1},\models_{\widehat{r}}}\rangle}
%} 
\rightleftarrows 
{\langle{\widehat{R}_{1},K_{1},\models_{\widehat{\mathcal{R}}_{1}}}\rangle} = \widehat{\mathcal{R}}_{1}$
\newline
is the bridging infomorphism of the formula relation classification, 
with the infomorphism condition
$k_{1}{\;\models_{\widehat{r^{-1}(\mathcal{R}_{1})}}\;}\varphi_{2}$
\underline{iff}
$k_{1}{\;\models_{\widehat{\mathcal{R}}_{1}}\;}\widehat{r}(\varphi_{2})$.
The extent monotonic function 
$\widehat{r} :
\mathrmbfit{ext}(\widehat{r^{-1}(\mathcal{R}_{1})})\rightarrow\mathrmbfit{ext}(\widehat{\mathcal{R}}_{1})$
is an isometry:
$\varphi{\;\leq_{\widehat{r^{-1}(\mathcal{R}_{1})}}\;}\psi$
iff
$\widehat{r}(\varphi){\;\leq_{\widehat{\mathcal{R}}_{1}}\;}\widehat{r}(\psi)$.
\end{fact}
\begin{proposition}
Satisfaction is invariant under change of notation;
that is,
for any schema morphism
$\mathcal{S}_{2}={\langle{R_{2},\sigma_{2},X_{2}}\rangle} 
\xRightarrow{\langle{r,f}\rangle}
{\langle{R_{1},\sigma_{1},X_{1}}\rangle}=\mathcal{S}_{1}$
the following satisfaction condition holds:
\[\mbox{\footnotesize{$
\mathrmbfit{struc}_{{\langle{r,f}\rangle}}(\mathcal{M}_{1})
{\;\models_{\mathcal{S}_{2}}\;}
(\varphi_{2}\xrightarrow{h}\varphi_{2}')
%\mathrmbfit{struc}_{{\langle{r,f}\rangle}}(\mathcal{M}_{1}){\;\models_{\mathcal{S}_{2}}\;} (\varphi_{2}{\;\vdash\;}\psi_{2})
				\;\;\;\text{\underline{iff}}\;\;\;
%\mathcal{M}_{1}{\;\models_{\mathcal{S}_{1}}\;}\mathrmbfit{seq}_{{\langle{r,f}\rangle}}(\varphi_{2}{\;\vdash\;}\psi_{2})
\mathcal{M}_{1}
\;\models_{\mathcal{S}_{1}}\; 
(\widehat{r}(\varphi_{2})\xrightarrow{h}\widehat{r}(\varphi_{2}'))=
\mathrmbfit{fmla}_{{\langle{r,f}\rangle}}(\varphi_{2}{\;\vdash\;}\varphi_{2}')
$.}\normalsize}\]
\end{proposition}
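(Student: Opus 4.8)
\textit{Proof sketch.} The plan is to unwind the definition of satisfaction on both sides of the claimed biconditional, reducing each side to a single inequality in the specialization order of the relevant formula classification, and then to transport that inequality across the relation infomorphism $\langle\widehat{r},1_{K}\rangle$ of the bridging structure morphism. Write $\mathcal{M}_{2}=\mathrmbfit{struc}_{{\langle{r,f}\rangle}}(\mathcal{M}_{1})$; its relational component is the inverse image structure with relation classification $r^{-1}(\mathcal{R}_{1})$, so by the first Fact above its formula classification is $\widehat{r^{-1}(\mathcal{R}_{1})}=\widehat{r}^{-1}(\widehat{\mathcal{R}}_{1})$. Using the definition of satisfaction for constraints (Table~\ref{satisfaction}), the left-hand side $\mathcal{M}_{2}\models_{\mathcal{S}_{2}}(\varphi_{2}\xrightarrow{h}\varphi_{2}')$ is equivalent to ${\scriptstyle\sum}_{h}(\varphi_{2}')\leq_{\widehat{r^{-1}(\mathcal{R}_{1})}}\varphi_{2}$, and the right-hand side $\mathcal{M}_{1}\models_{\mathcal{S}_{1}}(\widehat{r}(\varphi_{2})\xrightarrow{h}\widehat{r}(\varphi_{2}'))$ is equivalent to ${\scriptstyle\sum}_{h}(\widehat{r}(\varphi_{2}'))\leq_{\widehat{\mathcal{R}}_{1}}\widehat{r}(\varphi_{2})$.

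Next I would bridge these two inequalities. The second Fact above says that the extent function $\widehat{r}:\mathrmbfit{ext}(\widehat{r^{-1}(\mathcal{R}_{1})})\rightarrow\mathrmbfit{ext}(\widehat{\mathcal{R}}_{1})$ is an isometry, i.e. $\varphi\leq_{\widehat{r^{-1}(\mathcal{R}_{1})}}\psi$ iff $\widehat{r}(\varphi)\leq_{\widehat{\mathcal{R}}_{1}}\widehat{r}(\psi)$; taking $\varphi={\scriptstyle\sum}_{h}(\varphi_{2}')$ and $\psi=\varphi_{2}$ turns the left-hand inequality into $\widehat{r}({\scriptstyle\sum}_{h}(\varphi_{2}'))\leq_{\widehat{\mathcal{R}}_{1}}\widehat{r}(\varphi_{2})$. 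It then remains only to note that the formula function commutes with existential quantification --- the existential clause of Table~\ref{tbl:fmla:fn}, $\widehat{r}({\scriptstyle\sum}_{h}(\varphi_{2}'))={\scriptstyle\sum}_{h}(\widehat{r}(\varphi_{2}'))$ --- so this is precisely the right-hand inequality. Chaining the equivalences gives the biconditional, and the trailing equation $(\widehat{r}(\varphi_{2})\xrightarrow{h}\widehat{r}(\varphi_{2}'))=\mathrmbfit{fmla}_{{\langle{r,f}\rangle}}(\varphi_{2}\vdash\varphi_{2}')$ is just the description of the fibered formula passage $\mathrmbf{Fmla}(\mathcal{S}_{2})\xrightarrow{\mathrmbfit{fmla}_{{\langle{r,f}\rangle}}}\mathrmbf{Fmla}(\mathcal{S}_{1})$ evaluated on the constraint $\varphi_{2}\xrightarrow{h}\varphi_{2}'$.

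The main obstacle is not in this proposition at all but in the two Facts it consumes: the proposition itself is then pure bookkeeping. The real content is the identity $\widehat{r^{-1}(\mathcal{R}_{1})}=\widehat{r}^{-1}(\widehat{\mathcal{R}}_{1})$, which is proved by induction on $\varphi_{2}\in\widehat{R}_{2}$; the delicate inductive cases are the flow cases (quantifiers and substitution), where one must commute the extent-as-relation operator $\mathrmbfit{R}_{\widehat{\mathcal{M}}}$ of Table~\ref{tbl:fmla:cls} with $\exists_{h}$, $\forall_{h}$ and $h^{-1}$ and check agreement under $\widehat{r}$ --- which is exactly what the preceding Proposition (``the interpretation of the inverse image structure is the inverse image of the interpretation'') supplies. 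Within the present proof the one point needing care is to keep the directional conventions straight: a constraint $\varphi\xrightarrow{h}\varphi'$ encodes the sequent ${\scriptstyle\sum}_{h}(\varphi')\vdash\varphi$, so the existential (not the substitution) form of the satisfaction condition must be used on both sides, and correspondingly it is the commutation of $\widehat{r}$ with ${\scriptstyle\sum}_{h}$ --- rather than with $h^{\ast}$ --- that is invoked.
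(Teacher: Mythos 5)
Your proposal matches the paper's own proof essentially step for step: both reduce each side of the biconditional to an inequality in the formula extent order, transport the left-hand inequality across the isometry $\widehat{r}$ of the bridging infomorphism (resting on the identity $\widehat{r^{-1}(\mathcal{R}_{1})}=\widehat{r}^{-1}(\widehat{\mathcal{R}}_{1})$), and conclude via the commutation $\widehat{r}({\scriptstyle\sum}_{h}(\varphi_{2}'))={\scriptstyle\sum}_{h}(\widehat{r}(\varphi_{2}'))$ from the formula-function table. Your directional bookkeeping for constraints versus sequents is also consistent with the paper's conventions, so there is nothing to correct.
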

\begin{proof}
But this holds, since
$\widehat{r^{-1}(\mathcal{R}_{1})} = \widehat{r}^{-1}(\widehat{\mathcal{R}}_{1})$.
%and
%$\widehat{r}(\varphi_{2}{\,\rightarrowtriangle\,}\psi_{2})=
%(\widehat{r}(\varphi_{2}){\,\rightarrowtriangle\,}\widehat{r}(\psi_{2}))$
%for any implication 
%$(\varphi_{2}{\,\rightarrowtriangle\,}\psi_{2})$.
%
In more detail,
\mbox{}\newline
$\mathrmbfit{struc}_{{\langle{r,f}\rangle}}(\mathcal{M}_{1})
{\;\models_{\mathcal{S}_{2}}\;}
(\varphi_{2}\xrightarrow{h}\varphi_{2}')$
%\newline
\underline{iff}
${\scriptstyle\sum}_{h}(\varphi_{2}'){\;\leq_{\widehat{r^{-1}(\mathcal{R}_{1})}}\;}\varphi_{2}$
\newline
\underline{iff}
$\widehat{r}({\scriptstyle\sum}_{h}(\varphi_{2}')){\;\leq_{\widehat{\mathcal{R}}_{1}}\;}\widehat{r}(\varphi_{2})$
%\newline
\underline{iff}
${\scriptstyle\sum}_{h}(\widehat{r}(\varphi_{2}')){\;\leq_{\widehat{\mathcal{R}}_{1}}\;}\widehat{r}(\varphi_{2})$
\newline
\underline{iff}
$\mathcal{M}_{1}
\;\models_{\mathcal{S}_{1}}\; 
(\widehat{r}(\varphi_{2})\xrightarrow{h}\widehat{r}(\varphi_{2}'))=
\mathrmbfit{fmla}_{{\langle{r,f}\rangle}}(\varphi_{2}{\;\vdash\;}\varphi_{2}')$.
\end{proof}
\begin{proposition}
The institution
${\langle{\mathrmbf{Sch},\mathrmbfit{fmla},\mathrmbfit{struc}}\rangle}$ 
is a logical environment,
since it satisfies the bimodular principle
``satisfaction respects structure morphisms'':
given any schema $\mathcal{S} = {\langle{R,\sigma,X}\rangle}$,
if
${\langle{1_{R},k,1_{X},g}\rangle} : \mathcal{M}_{2} \rightleftarrows \mathcal{M}_{1}$
is a $\mathrmbfit{sch}$-vertical structure morphism over $\mathcal{S}$, 
then
we have the intent order
$\mathcal{M}_{2} \geq_{\mathcal{S}} \mathcal{M}_{1}$;
that is,
$\mathcal{M}_{2} \models_{\mathcal{S}} (\varphi{\;\vdash\;}\psi)$ 
implies 
$\mathcal{M}_{1} \models_{\mathcal{S}} (\varphi{\;\vdash\;}\psi)$ 
for any $\mathcal{S}$-sequent $(\varphi{\;\vdash\;}\psi)$.
\footnote{For any classification
$\mathcal{A} = {\langle{X,Y,\models_{\mathcal{A}}}\rangle}$, 
the intent order 
$\mathrmbfit{int}(\mathcal{A}) = {\langle{Y,\leq_{\mathcal{A}}}\rangle}$
is defined as follows:
for two instances $y,y'{\,\in\,}Y$,
$y{\;\leq_{\mathcal{A}}\;}y'$
when
$\mathrmbfit{int}_{\mathcal{A}}(y){\;\supseteq\;}\mathrmbfit{int}_{\mathcal{A}}(y')$;
that is,
when
$y'{\;\models_{\mathcal{A}}\;}x$ implies $y{\;\models_{\mathcal{A}}\;}x$
for each $x{\,\in\,}X$.}
\end{proposition}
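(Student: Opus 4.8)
The plan is to reduce the bimodular principle to an infomorphism condition at the level of formula classifications and then read off the sequent implication. Because $\langle 1_{R},k,1_{X},g\rangle$ is $\mathrmbfit{sch}$-vertical, its relation component is an infomorphism $\langle 1_{R},k\rangle : \mathcal{R}_{2} \rightleftarrows \mathcal{R}_{1}$ with key function $k : K_{1} \rightarrow K_{2}$ satisfying $k_{1}{\;\models_{\mathcal{R}_{1}}\;}r_{2}$ \underline{iff} $k(k_{1}){\;\models_{\mathcal{R}_{2}}\;}r_{2}$, its entity component is an infomorphism $\langle 1_{X},g\rangle : \mathcal{E}_{2} \rightleftarrows \mathcal{E}_{1}$ with the tuple function $g : Y_{1} \rightarrow Y_{2}$ induces, and the list-preservation equations $\sigma_{1}=\sigma_{2}$ and $k{\,\cdot\,}\tau_{2} = \tau_{1}{\,\cdot\,}{\scriptstyle\sum}_{g}$ hold. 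First I would apply the idempotent formula passage $\mathrmbfit{fmla} : \mathrmbf{Rel} \rightarrow \mathrmbf{Rel}$ (established above) to obtain the formula structure morphism $\mathrmbfit{fmla}(1_{R},k,1_{X},g) = \langle 1_{\widehat{R}},k,1_{X},g\rangle : \mathrmbfit{fmla}(\mathcal{M}_{2}) \rightleftarrows \mathrmbfit{fmla}(\mathcal{M}_{1})$, whose relation infomorphism $\langle 1_{\widehat{R}},k\rangle : \widehat{\mathcal{R}}_{2} \rightleftarrows \widehat{\mathcal{R}}_{1}$ then yields the \emph{formula infomorphism condition}: for every $\varphi \in \widehat{R}$ and every $k_{1} \in K_{1}$, $k_{1}{\;\models_{\widehat{\mathcal{R}}_{1}}\;}\varphi$ \underline{iff} $k(k_{1}){\;\models_{\widehat{\mathcal{R}}_{2}}\;}\varphi$, equivalently $\mathrmbfit{ext}_{\widehat{\mathcal{R}}_{1}}(\varphi) = k^{-1}\!\bigl(\mathrmbfit{ext}_{\widehat{\mathcal{R}}_{2}}(\varphi)\bigr)$. (Alternatively, one establishes this directly by induction on $\varphi$ following Table~\ref{tbl:fmla:cls} --- the vertical analogue of the Facts above concerning the bridging morphism.)

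Granting the formula infomorphism condition, the conclusion is immediate. Suppose $\mathcal{M}_{2}{\;\models_{\mathcal{S}}\;}(\varphi{\;\vdash\;}\psi)$, i.e.\ $\mathrmbfit{ext}_{\widehat{\mathcal{R}}_{2}}(\varphi) \subseteq \mathrmbfit{ext}_{\widehat{\mathcal{R}}_{2}}(\psi)$. Applying the preimage operator $k^{-1}$, which preserves $\subseteq$, gives $\mathrmbfit{ext}_{\widehat{\mathcal{R}}_{1}}(\varphi) = k^{-1}\!\bigl(\mathrmbfit{ext}_{\widehat{\mathcal{R}}_{2}}(\varphi)\bigr) \subseteq k^{-1}\!\bigl(\mathrmbfit{ext}_{\widehat{\mathcal{R}}_{2}}(\psi)\bigr) = \mathrmbfit{ext}_{\widehat{\mathcal{R}}_{1}}(\psi)$, that is, $\mathcal{M}_{1}{\;\models_{\mathcal{S}}\;}(\varphi{\;\vdash\;}\psi)$. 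Hence $\mathcal{M}_{2} \geq_{\mathcal{S}} \mathcal{M}_{1}$ in the intent order, which is exactly the bimodular principle; combined with the satisfaction-invariance proposition above (which supplies the institution structure), this shows that $\langle \mathrmbf{Sch},\mathrmbfit{fmla},\mathrmbfit{struc}\rangle$ is a logical environment.

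The real work, and the step I expect to be the main obstacle, is the inductive verification that forming the formula classification is functorial along a $\mathrmbfit{sch}$-vertical structure morphism --- i.e.\ that $\langle 1_{\widehat{R}},k\rangle$ genuinely is an infomorphism $\widehat{\mathcal{R}}_{2} \rightleftarrows \widehat{\mathcal{R}}_{1}$. The base case is the relation infomorphism condition of $\langle 1_{R},k\rangle$, and the Boolean connective cases (meet, join, negation, implication, difference, top, bottom) are routine, since satisfaction in $\widehat{\mathcal{R}}$ is defined pointwise in the key and preimage along $k$ commutes with all the Boolean operations on extents. The flow cases ${\scriptstyle\sum}_{h}$, ${\scriptstyle\prod}_{h}$ and $h^{\ast}$ are the delicate ones: there one must pass through the relation operator $\mathrmbfit{R}_{\widehat{\mathcal{M}}}(\varphi) = {\wp}\tau(\mathrmbfit{ext}_{\widehat{\mathcal{R}}}(\varphi))$ of Table~\ref{tbl:fmla:cls}, use the list-preservation equation $k{\,\cdot\,}\tau_{2} = \tau_{1}{\,\cdot\,}{\scriptstyle\sum}_{g}$ to relate $\mathrmbfit{R}_{\widehat{\mathcal{M}}_{1}}(\varphi)$ and $\mathrmbfit{R}_{\widehat{\mathcal{M}}_{2}}(\varphi)$ across the $g$-induced tuple function, and then invoke the compatibility of the quantifier and substitution operators ${\exists}_{h}$, ${\forall}_{h}$, $h^{-1}$ with that tuple function (the adjointness properties recorded in Table~\ref{tbl:axioms}). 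It is precisely at this point that the full structure-morphism axioms --- and not merely the bare relation infomorphism --- are used.
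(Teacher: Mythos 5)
Your proposal is correct and follows essentially the same route as the paper: lift the $\mathrmbfit{sch}$-vertical structure morphism through the formula passage to obtain the vertical relation infomorphism ${\langle{1_{\widehat{R}},k}\rangle}:\widehat{\mathcal{R}}_{2}\rightleftarrows\widehat{\mathcal{R}}_{1}$, whose infomorphism condition makes each extent in $\widehat{\mathcal{R}}_{1}$ the $k$-preimage of the corresponding extent in $\widehat{\mathcal{R}}_{2}$, so extent inclusion (hence sequent satisfaction) transfers from $\mathcal{M}_{2}$ to $\mathcal{M}_{1}$. You in fact supply more detail than the paper's very terse proof, which leaves the preimage argument and the underlying induction implicit in the earlier formula-monad proposition.
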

\begin{proof}
The $\mathrmbfit{typ}$-vertical formula morphism
${\langle{1_{\widehat{R}},k,1_{X},g}\rangle} : \widehat{\mathcal{M}}_{2} \rightleftarrows \widehat{\mathcal{M}}_{1}$
over $\widehat{\mathcal{S}}$
\newline
has
the $\mathrmbfit{typ}$-vertical relation infomorphism
${\langle{1_{\widehat{R}},k}\rangle} : 
\widehat{\mathcal{R}}_{2}
% = {\langle{\widehat{R},K_{2},\models_{\widehat{\mathcal{R}}_{2}}}\rangle} 
\rightleftarrows 
%{\langle{\widehat{R},K_{1},\models_{\widehat{\mathcal{R}}_{1}}}\rangle} = 
\widehat{\mathcal{R}}_{1}$
over $\widehat{R}$.
\newline
$\mathcal{M}_{2} \models_{\mathcal{S}} (\varphi{\;\vdash\;}\psi)$ 
%\newline
\underline{iff} 
$\varphi{\;\leq_{\widehat{\mathcal{R}}_{2}}\;}\psi$
%\newline
\underline{implies} 
$\varphi{\;\leq_{\widehat{\mathcal{R}}_{1}}\;}\psi$
%\newline
\underline{iff} 
$\mathcal{M}_{1} \models_{\mathcal{S}} (\varphi{\;\vdash\;}\psi)$ 
\newline
for any $\mathcal{S}$-sequent $(\varphi{\;\vdash\;}\psi)$.
\end{proof}
%

%%%%%%%%%%%%%%%%%%%%%%%%%%%%%%%%%%%%%%%%%%%%%%%%%%%%%%%%%%%%%%%%%%%%%%%%%%%%%%%%%%%%%%%%%%%%%%%%%%%%
%%%%%%%%%%%%%%%%%%%%%%%%%%%%%%%%%%%%%%%%%%%%%%%%%%%%%%%%%%%%%%%%%%%%%%%%%%%%%%%%%%%%%%%%%%%%%%%%%%%%
%\newpage
\subsection{Transformation to Databases}\label{append:cls2db}
%%%%%%%%%%%%%%%%%%%%%%%%%%%%%%%%%%%%%%%%%%%%%%%%%%%%%%%%%%%%%%%%%%%%%%%%%%%%%%%%%%%%%%%%%%%%%%%%%%%%
%%%%%%%%%%%%%%%%%%%%%%%%%%%%%%%%%%%%%%%%%%%%%%%%%%%%%%%%%%%%%%%%%%%%%%%%%%%%%%%%%%%%%%%%%%%%%%%%%%%%

%%%%%%%%%%%%%%%%%%%%%%%%%%%%%%%%%%%%%%%%%%%%%%%%%%%%%%%%%%%%%%%%%%%%%%%%%%%%%%%%%%%%%%%%%%%%%%%%%%%%
%\newpage
\subsubsection{Relational Interpretation.}\label{rel:interp}
%%%%%%%%%%%%%%%%%%%%%%%%%%%%%%%%%%%%%%%%%%%%%%%%%%%%%%%%%%%%%%%%%%%%%%%%%%%%%%%%%%%%%%%%%%%%%%%%%%%%%
Let
$\mathcal{M} = {\langle{\mathcal{R},{\langle{\sigma,\tau}\rangle},\mathcal{E}}\rangle}$ 
be a (model-theoretic) relational structure.
The relation classification $\mathcal{R}$ is equivalent to the extent function
$\mathrmbfit{ext}_{\mathcal{R}} : R \rightarrow {\wp}K$,
which maps a relational symbol $r \in R$ to its $\mathcal{R}$-extent
$\mathrmbfit{ext}_{\mathcal{R}}(r) \subseteq K$. 
The list classification $\mathrmbf{List}(\mathcal{E})$ is equivalent to the extent function
$\mathrmbfit{ext}_{\mathrmbf{List}(\mathcal{E})} : 
\mathrmbf{List}(X) \rightarrow {\wp}\mathrmbf{List}(Y)$,
a restriction of the tuple passage
$\mathrmbfit{tup}_{\mathcal{E}} : 
\mathrmbf{List}(X)^{\mathrm{op}} \rightarrow \mathrmbf{Set}$,
which maps a type list ${\langle{I,s}\rangle} \in \mathrmbf{List}(X)$ to its $\mathrmbf{List}(\mathcal{E})$-extent
$\mathrmbfit{tup}_{\mathcal{E}}(I,s) \subseteq \mathrmbf{List}(Y)$.
The list designation satisfies the condition
$k \models_{\mathcal{R}} r$ implies $\tau(k) \models_{\mathrmbf{List}(\mathcal{E})} \sigma(r)$
for all $k \in K$ and $r \in R$;
so that
$k \in \mathrmbfit{ext}_{\mathcal{R}}(r)$
implies
$\tau(k) \in \mathrmbfit{ext}_{\mathrmbf{List}(\mathcal{E})}(\sigma(r)) = \mathrmbfit{tup}_{\mathcal{E}}(\sigma(r))$.
Hence,
${\wp}\tau(\mathrmbfit{ext}_{\mathcal{R}}(r)) \subseteq \mathrmbfit{tup}_{\mathcal{E}}(\sigma(r))$
for all $r \in R$.
Thus,
we have the function order
${\mathrmbfit{ext}_{\mathcal{R}} \cdot {\wp}\tau} \subseteq 
{\sigma \cdot \mathrmbfit{ext}_{\mathrmbf{List}(\mathcal{E})}}$. 
%

%\begin{itemize}
%\item 
The relational interpretation function
$\mathrmbfit{R}_{\mathcal{M}} : R \rightarrow |\mathrmbf{Rel}(\mathcal{E})|$
\comment{
\footnote{For relational structure
$\mathcal{M} = {\langle{\mathcal{R},{\langle{\sigma,\tau}\rangle},\mathcal{E}}\rangle}$,
the fibered mathematical context $\mathrmbf{Rel}(\mathcal{E})^{\mathrm{op}}\xrightarrow{\mathrmbfit{list}}\mathrmbf{List}(X)$ 
is determined by 
the indexed preorder
%mathematical context
$\mathrmbf{List}(X)^{\mathrm{op}}\xrightarrow{\mathrmbfit{rel}}\mathrmbf{Pre}$,
which maps a type list ${\langle{I,s}\rangle}$ to the fiber relational order
$\mathrmbf{Rel}_{\mathcal{E}}(I,s)
={\wp}\mathrmbfit{tup}_{\mathcal{E}}(I,s)$
and maps a type list morphism
${\langle{I,s}\rangle} \xrightharpoondown{h} {\langle{I',s'}\rangle}$
to the fiber monotonic function
${\exists}_{h} = {\exists}_{\mathrmbfit{tup}_{\mathcal{E}}(h)} :
\mathrmbf{Rel}_{\mathcal{E}}(I,s)\leftarrow\mathrmbf{Rel}_{\mathcal{E}}(I',s')$.
Similarly, for
$\mathrmbf{Tbl}(\mathcal{E})^{\mathrm{op}}\xrightarrow{\mathrmbfit{pr}}\mathrmbf{List}(X)$.}
}
maps a relational symbol $r\in{R}$
with type list $\sigma(r)={\langle{I,s}\rangle}$
to the set of tuples
$\mathrmbfit{R}_{\mathcal{M}}(r)={\wp}\tau(\mathrmbfit{ext}_{\mathcal{R}}(r))
\in {\wp}\mathrmbfit{tup}_{\mathcal{E}}(I,s)=\mathrmbf{Rel}_{\mathcal{E}}(I,s)$.
%\newline
%\item 
The tabular interpretation function
$\mathrmbfit{T}_{\mathcal{M}} : R \rightarrow 
|\mathrmbf{Tbl}(\mathcal{E})| = |{(\mathrmbf{Set}{\downarrow}\mathrmbfit{tup}_{\mathcal{E}})}|$
maps a relational symbol $r\in{R}$
with type list $\sigma(r)={\langle{I,s}\rangle}$
to the pair
$\mathrmbfit{T}_{\mathcal{M}}(r) = {\langle{K(r),t_{r}}\rangle}$
consisting of 
the key set $K(r) = \mathrmbfit{ext}_{\mathcal{R}}(r) \subseteq K$ and 
the tuple function $K(r)\xrightarrow{t_{r}}\mathrmbfit{tup}_{\mathcal{E}}(I,s)$,
a restriction of the 
tuple function $\tau : K \rightarrow \mathrmbf{List}(Y)$,
which maps a key $k \in K_{r}$
to the tuple $t_{r}(k) = \tau(k) \in \mathrmbfit{tup}_{\mathcal{E}}(I,s)$.
%\newline
%\item 
Applying the image passage
$\mathrmbfit{im}_{\mathcal{E}}(I,s) : \mathrmbf{Tbl}_{\mathcal{E}}(I,s) \rightarrow \mathrmbf{Rel}_{\mathcal{E}}(I,s)$,
the image of the table interpretation is the relation interpretation
$\mathrmbfit{im}_{\mathcal{E}}(I,s)(\mathrmbfit{T}_{\mathcal{M}}(r)) = \mathrmbfit{R}_{\mathcal{M}}(r)$
for any relation symbol $r \in R$.
Using the combined image passage
$\mathrmbfit{im}_{\mathcal{E}} : \mathrmbf{Tbl}(\mathcal{E}) \rightarrow \mathrmbf{Rel}(\mathcal{E})$,
we get the composition
$\mathrmbfit{R}_{\mathcal{M}} =
R \xrightarrow{\mathrmbfit{T}_{\mathcal{M}}} 
|\mathrmbf{Tbl}(\mathcal{E})| \xrightarrow{|\mathrmbfit{im}_{\mathcal{E}}|} |\mathrmbf{Rel}(\mathcal{E})|$.
Note that
$t_{r} : K_{r}\rightarrow\mathrmbfit{R}_{\mathcal{M}}(r) \rightarrow \mathrmbfit{tup}_{\mathcal{E}}(I,s)$,
is a surjection-injection factorization of the tuple function. 
%\newline
%\item 
\footnote{
Two tables are informationally equivalent
when they contain the same information; that is,
when their image relations are equivalent in 
$\mathrmbf{Rel}_{\mathcal{E}}(I,s)={\wp}\mathrmbfit{tup}_{\mathcal{E}}(I,s)$.
In particular,
the table $\mathrmbfit{T}_{\mathcal{M}}(r)$ and relation $\mathrmbfit{R}_{\mathcal{M}}(r)$
of a relational symbol are informationally equivalent.
\comment{
The product (fiber join) 
$\mathrmbfit{T}_{\mathcal{M}}(r){\,\times\,}\mathrmbfit{T}_{\mathcal{M}}(r')$
of the two tables $\mathrmbfit{T}_{\mathcal{M}}(r)$ and $\mathrmbfit{T}_{\mathcal{M}}(r')$
uses the pullback of the tuple functions,
whereas
the meet of the two relations $\mathrmbfit{R}_{\mathcal{M}}(r)$ and $\mathrmbfit{R}_{\mathcal{M}}(r')$
is the tuple set intersection $\mathrmbfit{R}_{\mathcal{M}}(r){\,\cap\,}\mathrmbfit{R}_{\mathcal{M}}(r')$.
Since the image passage is left adjoint
$\mathrmbfit{im}_{\mathcal{E}}(I,s){\;\dashv\;}\mathrmbfit{inc}_{\mathcal{E}}(I,s)$
to the inclusion passage
with counit component in the adjoint relationship
$\mathrmbfit{im}_{\mathcal{E}}(T) \leq R 
\;\;\;\mbox{\underline{iff}}\;\;\;
T \xrightarrow{\varepsilon} R$,
it preserves all colimits;
in particular,
mapping 
the initial table 
${\langle{\emptyset,0_{\mathrmbfit{tup}_{\mathcal{E}}(I,s)}}\rangle}$
to the bottom relation $\emptyset$,
and mapping a coproduct table
$T{\,+\,}T'$
to the union relation
$\mathrmbfit{im}_{\mathcal{E}}(I,s)(T){\,\cup\,}\mathrmbfit{im}_{\mathcal{E}}(I,s)(T')$.
But,
the image passage also maps 
the terminal table 
${\langle{\mathrmbfit{tup}_{\mathcal{E}}(I,s),1_{\mathrmbfit{tup}_{\mathcal{E}}(I,s)}}\rangle}$
to the top relation $\mathrmbfit{tup}_{\mathcal{E}}(I,s)$,
a product (fiber join) table
$T{\,\times\,}T'$
to the intersection relation
$\mathrmbfit{im}_{\mathcal{E}}(I,s)(T){\,\cap\,}\mathrmbfit{im}_{\mathcal{E}}(I,s)(T')$.
In particular,
$\mathrmbfit{im}_{\mathcal{E}}(I,s)(\mathrmbfit{T}_{\mathcal{M}}(r){\,\times\,}\mathrmbfit{T}_{\mathcal{M}}(r'))=
\mathrmbfit{R}_{\mathcal{M}}(r){\,\cap\,}\mathrmbfit{R}_{\mathcal{M}}(r')$.
}
}
%\end{itemize}
%

%%%%%%%%%%%%%%%%%%%%%%%%%%%%%%%%%%%%%%%%%%%%%%%%%%%%%%%%%%%%%%%%%%%%%%%%%%%%%%%%%%%%%%%%%%%%%%%%%%%%
%\newpage
\subsubsection{Relational Logics.}\label{rel:log}
%%%%%%%%%%%%%%%%%%%%%%%%%%%%%%%%%%%%%%%%%%%%%%%%%%%%%%%%%%%%%%%%%%%%%%%%%%%%%%%%%%%%%%%%%%%%%%%%%%%%%
%%%%%%%%%%%%%%%%%%%%%%%%%%%%%%%%%%%%%%%%%%%%%%%%%%%%%%%%%%%%%%%%%%%%%%%%%%%%%%%%%%%%%%%%%%%%%%%%%%%%
%\newpage\subsubsection{Sound Logics.}
%%%%%%%%%%%%%%%%%%%%%%%%%%%%%%%%%%%%%%%%%%%%%%%%%%%%%%%%%%%%%%%%%%%%%%%%%%%%%%%%%%%%%%%%%%%%%%%%%%%%
%
A relational logic 
$\mathcal{L}={\langle{\mathcal{S},\mathcal{M},T}\rangle}$
consists of 
a relational structure $\mathcal{M} = {\langle{\mathcal{R},{\langle{\sigma,\tau}\rangle},\mathcal{E}}\rangle}$ and
a relational specification $\mathcal{T}={\langle{\mathcal{S},T}\rangle}$
that share a common relational schema 
$\mathrmbfit{sch}(\mathcal{M}) = \mathcal{S}$.
The logic is sound when
the structure $\mathcal{M}$ satisfies every constraint in the specification $T$.
A sound relational logic enriches a relational structure with a specification.
For any sound logic 
$\mathcal{L}={\langle{\mathcal{S},\mathcal{M},T}\rangle}$,
there is an interpretation functor
$\widehat{\mathrmbf{R}}^{\mathrm{op}}
\xrightarrow{\mathrmbfit{T}_{\mathcal{L}}}
\mathrmbf{Tbl}(\mathcal{E}) = {(\mathrmbf{Set}{\downarrow}\mathrmbfit{tup}_{\mathcal{E}})}$,
where
$\widehat{\mathrmbf{R}}\subseteq\mathrmbf{Fmla}(\mathcal{S})$
is the consequence of $T$.
Sound logics are important in the transformation of structures to databases (below).
%
%%%%%%%%%%%%%%%%%%%%%%%%%%%%%%%%%%%%%%%%%%%%%%%%%%%%%%%%%%%%
%\newpage\paragraph{Logic Morphisms.}
%%%%%%%%%%%%%%%%%%%%%%%%%%%%%%%%%%%%%%%%%%%%%%%%%%%%%%%%%%%%
%
A relational logic morphism
$\mathcal{L}_{2}={\langle{\mathcal{S}_{2},\mathcal{M}_{2},T_{2}}\rangle}
\xrightarrow{\langle{{\langle{r,k}\rangle},{\langle{f,g}\rangle}}\rangle}
{\langle{\mathcal{S}_{2},\mathcal{M}_{2},T_{2}}\rangle}=\mathcal{L}_{2}$
consists of 
a relational structure morphism
$\mathcal{M}_{2}
%{\langle{\mathcal{R}_{2},{\langle{\sigma_{2},\tau_{2}}\rangle},\mathcal{E}_{2}}\rangle}
\xrightarrow{\langle{{\langle{r,k}\rangle},{\langle{f,g}\rangle}}\rangle}
%{\langle{\mathcal{R}_{1},{\langle{\sigma_{1},\tau_{1}}\rangle},\mathcal{E}_{1}}\rangle}
\mathcal{M}_{1}$ and
a relational specification morphism
$\mathcal{T}_{2} = {\langle{\mathcal{S}_{2},T_{2}}\rangle}
\xrightarrow{\langle{r,f}\rangle}
{\langle{\mathcal{S}_{1},T_{1}}\rangle} = \mathcal{T}_{1}$
that share a common relational schema morphism  
$\mathrmbfit{sch}({\langle{r,k}\rangle},{\langle{f,g}\rangle})=
\mathcal{S}_{2}
\stackrel{{\langle{r,f}\rangle}}{\Longrightarrow}
\mathcal{S}_{1}$.
\comment{
Let 
\[\mbox{\footnotesize{$
\mathrmbf{Struc}\xleftarrow{\mathrmbfit{struc}}\hspace{-20pt}
\underset{\mathrmbf{Struc}{\times}_{\mathrmbf{Lang}}\mathrmbf{Spec}}{\mathrmbf{Log}}
\hspace{-20pt}\xrightarrow{\mathrmbfit{spec}}\mathrmbf{Spec}
$}\normalsize}\]
denote the mathematical context of relational logics
and let $\mathrmbf{Snd}\subseteq\mathrmbf{Log}$ denote the subcontext of sound logics. 
}
%\comment{

Any sound relational logic 
$\mathcal{L}={\langle{\mathcal{S},\mathcal{M},T}\rangle}$
with
structure $\mathcal{M} = {\langle{\mathcal{R},{\langle{\sigma,\tau}\rangle},\mathcal{E}}\rangle}$ and
specification $\mathcal{T}={\langle{\mathcal{S},T}\rangle}$
has an associated logical/relational database
$\mathrmbfit{db}(\mathcal{L}) = {\langle{\mathcal{S},\mathcal{E},\mathrmbfit{K},\tau}\rangle}$
with 
category of formulas $\widehat{\mathrmbf{R}}\subseteq\mathrmbf{Fmla}(\mathcal{S})$ 
(the consequence of $T$), 
%linked by generalization-specialization constraints,
signature passage $\mathrmbfit{S} : \widehat{\mathrmbf{R}} \rightarrow \mathrmbf{List}(X)$,
entity classification $\mathcal{E}$, 
key passage $\mathrmbfit{K} : \widehat{\mathrmbf{R}}^{\mathrm{op}} \rightarrow \mathrmbf{Set}$,
tuple bridge $\tau : \mathrmbfit{K} \Rightarrow \mathrmbfit{S}^{\mathrm{op}} \circ \mathrmbfit{tup}_{\mathcal{E}}$, and
table interpretation passage 
$\widehat{\mathrmbf{R}}^{\mathrm{op}}
\xrightarrow{\mathrmbfit{T}} \mathrmbf{Tbl}(\mathcal{E}) = {(\mathrmbf{Set}{\downarrow}\mathrmbfit{tup}_{\mathcal{E}})}$,
where $\tau = \mathrmbfit{T}\tau_{\mathcal{E}}$. 
%}
%%%%%%%%%%%%%%%%%%%%%%%%%%%%%%%%%%%%%%%%%%%%%%%%%%%%%%%%%%%%%%%%%%%%%%%%%%%%%%%%%%%%%%%%%%%%%%%%%%%%
%\newpage\subsubsection{Sound Logic Morphisms.}
%%%%%%%%%%%%%%%%%%%%%%%%%%%%%%%%%%%%%%%%%%%%%%%%%%%%%%%%%%%%%%%%%%%%%%%%%%%%%%%%%%%%%%%%%%%%%%%%%%%%
%\comment{
Any sound relational logic morphism
$\mathcal{L}_{2}={\langle{\mathcal{S}_{2},\mathcal{M}_{2},T_{2}}\rangle}
\xrightarrow{\langle{{\langle{r,k}\rangle},{\langle{f,g}\rangle}}\rangle}
{\langle{\mathcal{S}_{2},\mathcal{M}_{2},T_{2}}\rangle}=\mathcal{L}_{2}$
with 
structure morphism
$\mathcal{M}_{2}
%{\langle{\mathcal{R}_{2},{\langle{\sigma_{2},\tau_{2}}\rangle},\mathcal{E}_{2}}\rangle}
\xrightarrow{\langle{{\langle{r,k}\rangle},{\langle{f,g}\rangle}}\rangle}
%{\langle{\mathcal{R}_{1},{\langle{\sigma_{1},\tau_{1}}\rangle},\mathcal{E}_{1}}\rangle}
\mathcal{M}_{1}$ and
specification morphism
$\mathcal{T}_{2} = {\langle{\mathcal{S}_{2},T_{2}}\rangle}
\xrightarrow{\langle{r,f}\rangle}
{\langle{\mathcal{S}_{1},T_{1}}\rangle} = \mathcal{T}_{1}$
has an associated (strict) logical/relational database morphism
$\mathrmbfit{db}({\langle{r,k}\rangle},{\langle{f,g}\rangle}) = 
{\langle{\mathrmbfit{F},f,g,\kappa}\rangle} :
\mathrmbfit{db}(\mathcal{L}_{2}) = {\langle{\mathcal{S}_{2},\mathcal{E}_{2},\mathrmbfit{K}_{2},\tau_{2}}\rangle} \rightarrow
{\langle{\mathcal{S}_{1},\mathcal{E}_{1},\mathrmbfit{K}_{1},\tau_{1}}\rangle} = \mathrmbfit{db}(\mathcal{L}_{1})$
with
(strict) database schema morphism
${\langle{\mathrmbfit{F},f}\rangle} : \mathcal{S}_{2} \rightarrow \mathcal{S}_{1}$,
entity infomorphism 
${\langle{f,g}\rangle} : 
\mathcal{E}_{2}
% = {\langle{X_{2},Y_{2},\models_{\mathcal{E}_{2}}}\rangle} 
\rightleftarrows 
%{\langle{X_{1},Y_{1},\models_{\mathcal{E}_{1}}}\rangle} = 
\mathcal{E}_{1}$, and 
key natural transformation
$\kappa : \mathrmbfit{F}^{\mathrm{op}} \circ \mathrmbfit{K}_{1} \Rightarrow \mathrmbfit{K}_{2}$,
which satisfy the condition
$\kappa{\;\bullet\;}\tau_{2}
= \mathrmbfit{F}^{\mathrm{op}}\tau_{1}{\;\bullet\;}\mathrmbfit{S}_{2}^{\mathrm{op}}\tau_{{\langle{f,g}\rangle}}$.
%\phi^{\mathrm{op}}\mathrmbfit{tup}_{\mathcal{E}_{1}} \bullet 
The passage
$\widehat{\mathrmbf{R}}_{2}\xrightarrow{\mathrmbfit{F}}\widehat{\mathrmbf{R}}_{1}$
from formula subcontext
$\widehat{\mathrmbf{R}}_{2}\subseteq\mathrmbf{Fmla}(\mathcal{S}_{2})$
to formula subcontext
$\widehat{\mathrmbf{R}}_{1}\subseteq\mathrmbf{Fmla}(\mathcal{S}_{1})$
is a restriction of 
the fibered formula passage
$\mathrmbf{Fmla}(\mathcal{S}_{2}) \xrightarrow{\mathrmbfit{fmla}_{{\langle{r,f}\rangle}}} \mathrmbf{Fmla}(\mathcal{S}_{1})$.
(Kent~\cite{kent:db:sem} has more details on relational database semantics.) 
\begin{figure}
\begin{center}
\begin{tabular}{c}
{\footnotesize{$\mathcal{L}_{2}={\langle{\mathcal{S}_{2},\mathcal{M}_{2},T_{2}}\rangle}
\xrightarrow{\langle{{\langle{r,k}\rangle},{\langle{f,g}\rangle}}\rangle}
{\langle{\mathcal{S}_{2},\mathcal{M}_{2},T_{2}}\rangle}=\mathcal{L}_{2}$}}
\\
%\rule[-6pt]{1pt}{20pt}
%\hspace{5pt}{\Large{$\Downarrow$}}\hspace{5pt}{\normalsize{$\mathrmbfit{db}$}}
\\
\setlength{\unitlength}{0.55pt}
\begin{picture}(120,240)(0,0)
\put(52,236){{\Large{$\Downarrow$}}\hspace{5pt}{\normalsize{$\mathrmbfit{db}$}}}
\put(5,200){\makebox(0,0){\footnotesize{$\widehat{\mathrmbf{R}}_{2}^{\mathrm{op}}$}}}
\put(125,200){\makebox(0,0){\footnotesize{$\widehat{\mathrmbf{R}}_{1}^{\mathrm{op}}$}}}
\put(60,210){\makebox(0,0){\scriptsize{$\mathrmbfit{F}^{\mathrm{op}}$}}}
\put(20,200){\vector(1,0){80}}
\put(0,185){\vector(0,-1){90}}
\put(-16,188){\vector(-4,-3){48}}
\put(-64,128){\vector(4,-3){48}}
\put(-44,174){\makebox(0,0)[r]{\scriptsize{$\mathrmbfit{T}_{2}$}}}
\put(6,140){\makebox(0,0)[l]{\scriptsize{$\mathrmbfit{S}_{2}^{\mathrm{op}}$}}}
\put(-132,120){\makebox(0,0)[r]{\scriptsize{$\mathrmbfit{K}_{2}$}}}
\put(-44,108){\makebox(0,0)[r]{\scriptsize{$\mathrmbfit{sign}^{\mathrm{op}}_{\mathcal{E}_{2}}$}}}
\put(-67,60){\makebox(0,0)[r]{\shortstack{\scriptsize{$\mathrmbfit{key}_{\mathcal{E}_{2}}$}}}}
\put(120,185){\vector(0,-1){90}}
\put(136,188){\vector(4,-3){48}}
\put(184,128){\vector(-4,-3){48}}
\qbezier(-20,200)(-130,180)(-130,120)\qbezier(-130,120)(-130,30)(35,3)\put(35,3){\vector(4,-1){0}}
\qbezier(140,200)(250,180)(250,120)\qbezier(250,120)(250,30)(85,3)\put(85,3){\vector(-4,-1){0}}
\qbezier(-84,125)(-80,20)(40,10)\put(40,10){\vector(1,0){0}}
\qbezier(204,125)(200,20)(80,10)\put(80,10){\vector(-1,0){0}}
\put(60,148.5){\makebox(0,0){\large{$=$}}}
%\put(60,148.5){\makebox(0,0){\large{$\overset{\rule[-2pt]{0pt}{5pt}\;\;\;\;{\varphi}^{\mathrm{op}}}{\Leftarrow}$}}}
%\qbezier[50](0,140)(60,140)(120,140)
\put(35,115){\makebox(0,0){\large{$\overset{\rule[-2pt]{0pt}{5pt}\kappa}{\Longleftarrow}$}}}
\qbezier[150](-128,110)(60,110)(248,110)
\put(164,174){\makebox(0,0)[l]{\scriptsize{$\mathrmbfit{T}_{1}$}}}
\put(114,140){\makebox(0,0)[r]{\scriptsize{$\mathrmbfit{S}_{1}^{\mathrm{op}}$}}}
\put(252,120){\makebox(0,0)[l]{\scriptsize{$\mathrmbfit{K}_{1}$}}}
\put(164,108){\makebox(0,0)[l]{\scriptsize{$\mathrmbfit{sign}^{\mathrm{op}}_{\mathcal{E}_{1}}$}}}
\put(192,60){\makebox(0,0)[l]{\shortstack{\scriptsize{$\mathrmbfit{key}_{\mathcal{E}_{1}}$}}}}
\put(-37,60){\makebox(0,0){\shortstack{\scriptsize{$\tau_{\mathcal{E}}$}\\\large{$\Rightarrow$}}}}
\put(157,60){\makebox(0,0){\shortstack{\scriptsize{$\;\;\tau_{\mathcal{E}_{1}}$}\\\large{$\Leftarrow$}}}}
\put(-80,140){\makebox(0,0){\footnotesize{$\mathrmbf{Tbl}(\mathcal{E}_{2})$}}}
\put(200,140){\makebox(0,0){\footnotesize{$\mathrmbf{Tbl}(\mathcal{E}_{1})$}}}
\put(0,80){\makebox(0,0){\footnotesize{$\mathrmbf{List}(X_{2})^{\mathrm{op}}$}}}
\put(124,80){\makebox(0,0){\footnotesize{$\mathrmbf{List}(X_{1})^{\mathrm{op}}$}}}
\put(60,5){\makebox(0,0){\footnotesize{$\mathrmbf{Set}$}}}
\put(67,90){\makebox(0,0){\scriptsize{$({\scriptstyle\sum}_{f})^{\mathrm{op}}$}}}
\put(24,38){\makebox(0,0)[r]{\scriptsize{$\mathrmbfit{tup}_{\mathcal{E}_{2}}$}}}
\put(97,38){\makebox(0,0)[l]{\scriptsize{$\mathrmbfit{tup}_{\mathcal{E}_{1}}$}}}
	\put(60,55){\makebox(0,0){\shortstack{\scriptsize{$\;\;\;{\tau}_{{\langle{f,g}\rangle}}$}\\\large{$\Leftarrow$}}}}
\qbezier[25](25,48)(60,48)(95,48)
\put(35,80){\vector(1,0){50}}
\put(9,68){\vector(3,-4){40}}
\put(111,68){\vector(-3,-4){40}}
%
%\put(0,230){\makebox(0,0){\scriptsize{$r_{2}' \xrightarrow{p_{2}} r_{2}$}}}
%\put(120,244){\makebox(0,0){\scriptsize{$\overset{\textstyle{r'_{1}}}{\overbrace{\mathrmbfit{F}(r'_{2})}}
%\xrightarrow{\overset{\textstyle{p_{1}}}{\overbrace{\mathrmbfit{F}(p_{2})}}} 
%\overset{\textstyle{r_{1}}}{\overbrace{\mathrmbfit{F}(r_{2})}}$}}}
%
\end{picture}
%%%%%%%%%%%%%%%%%%%%%%%%%%%%%%%%%%%%%%%%%%%%%%%%%%%%%%%%%%%%%%%%%%%%%%%%%%%%%%%%%%%%%%%%%%%%%%%%%%%%%%%%%%%%%%%%%%%%%%%%
%%%%%%%%%%%%%%%%%%%%%%%%%%%%%%%%%%%%%%%%%%%%%%%%%%%%%%%%%%%%%%%%%%%%%%%%%%%%%%%%%%%%%%%%%%%%%%%%%%%%%%%%%%%%%%%%%%%%%%%%
\\
\\
{\footnotesize $\mathrmbfit{db}(\mathcal{L}_{2}) = {\langle{\mathcal{S}_{2},\mathcal{E}_{2},\mathrmbfit{K}_{2},\tau_{2}}\rangle} 
\xrightarrow{\langle{\mathrmbfit{F},f,g,\kappa}\rangle}
{\langle{\mathcal{S}_{1},\mathcal{E}_{1},\mathrmbfit{K}_{1},\tau_{1}}\rangle} = \mathrmbfit{db}(\mathcal{L}_{1})$}
\\ \\
{\footnotesize 
$\kappa \bullet \tau_{2}
= \mathrmbfit{F}^{\mathrm{op}}\tau_{1} \bullet 
\mathrmbfit{S}_{2}^{\mathrm{op}}\tau_{{\langle{f,g}\rangle}}$}
\end{tabular}
\end{center}
\caption{From Sound Logics to Logical/Relational Databases}
\label{fig:snd2db}
\end{figure}
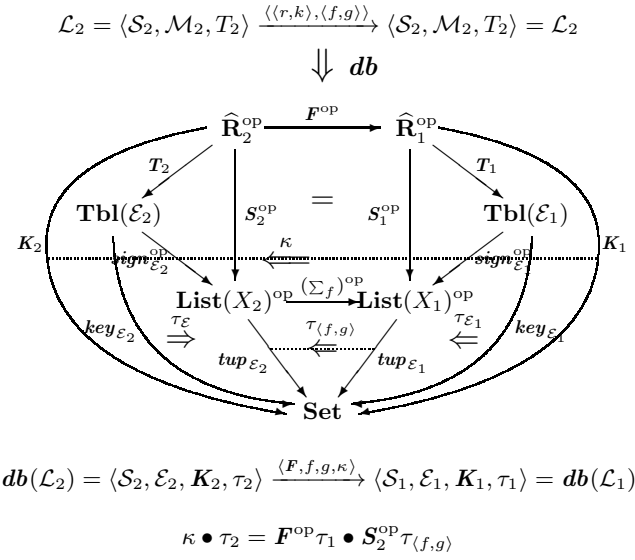
%
%\tableofcontents


\begin{thebibliography}{4}
\bibitem{barwise:seligman:97} 
Barwise, J., Seligman, J.:
{\itshape Information Flow: The Logic of Distributed Systems}.
Cambridge University Press, Cambridge (1997).
\bibitem{ganter:wille:99} 
Ganter, B., Wille, R.:
{\itshape Formal Concept Analysis: Mathematical Foundations}.
Springer, New York (1999).
\bibitem{goguen:91} 
Goguen, J.:
{\itshape A categorical manifesto}.
Mathematical Structures in Computer Science 1, 49--67 (1991).
\bibitem{goguen:burstall:92} 
Goguen, J., Burstall, R.:
``Institutions: Abstract Model Theory for Specification and Programming''. 
J. Assoc. Comp. Mach. vol. 39, pp. 95--146 (1992).
\bibitem{johnson:rosebrugh:wood:era} M. Johnson, R. Rosebrugh, and R. Wood.
``Entity Relationship Attribute Designs and Sketches''.
Theory and Application of Categories 10, 3, 94--112 (2002). 
%\bibitem{kent:tao}
%Kent, R.E.:
%``The Institutional Approach''.
%In: Poli, R., Healy, M., Kameas, A. (eds.)
%(invited chapter)
%{\itshape Theory and Applications of Ontology: Computer Applications}.
%Springer, Heidelberg, 
%533--563 
%(2010).
\bibitem{kent:iccs2009}
Kent, R.E.:
``System Consequence''.
In: Rudolph, S., Dau, F., Kuznetsov, S.O. (eds.)
LNCS vol. 5662, pp. 201--218.
Springer, Heidelberg (2009). 
\bibitem{kent:db:sem}
Kent, R.E.:
``Database Semantics''.
(2011).
Available online: 
\url{http://arxiv.org/abs/1209.3054}.
\bibitem{kent:spivak:email}
Kent, R.E., Spivak, D.I.:
Email discussion (2011).
\bibitem{sowa:kr}
Sowa, J.F.:
{\itshape Knowledge Representation: Logical, Philosophical, and Computational Foundations}.
Brookes/Coles (2000).
\bibitem{sowa:cgs}
Sowa, John F.:
``ISO Standard for Conceptual Graphs'',
(2001-04-02).
%\newline
Available online: 
\url{http://users.bestweb.net/~sowa/cg/cgstand.htm}.
\bibitem{spivak:kent:olog}
Spivak, D.I., Kent, R.E.:
``Ologs: a categorical framework for knowledge representation''.
PLoS ONE 7(1): e24274. doi:10.1371/journal.pone.0024274.
(2012). 
%\newline
Available online: 
\url{http://arxiv.org/abs/1102.1889}.
\bibitem{iff} 
{\itshape The Information Flow Framework (IFF)}. 
%\newline
Available online: 
\url{http://suo.ieee.org/IFF/}.
\end{thebibliography}
\end{document}